\newcommand{\ii}{\mathrm{i}}
\newcommand{\ee}{\mathrm{e}}
\newcommand{\cO}{\mathcal{O}}
\newtheorem{theorem}{Theorem}
\newtheorem{prop}{Proposition}
\newtheorem{proof}{Proof}
\newtheorem{remark}{Remark}
\begin{document}
\begin{frontmatter}
	
	
	
	\title{Vector rogue wave patterns of the multi-component nonlinear Schr{\"o}dinger equation and generalized mixed Adler--Moser polynomials}
	
	
	\author[aff1]{Huian Lin}
	
	\author[aff1]{Liming Ling$^{*}$}
	\ead{linglm@scut.edu.cn}
	\address[aff1]{School of Mathematics, South China University of Technology,Guangzhou, 510641, China}

\begin{abstract}
	This paper investigates the asymptotic behavior of high-order vector rogue wave (RW) solutions for any multi-component nonlinear Schr\"odinger equation (denoted as $n$-NLSE) with multiple internal large parameters and reports some new RW patterns, including non-multiple root (NMR)-type patterns with shapes such as $ 180 $-degree sector, jellyfish-like, and thumbtack-like shapes, as well as multiple root (MR)-type patterns characterized by right double-arrow and right arrow shapes. We establish that these RW patterns are intrinsically related to the root structures of a novel class of polynomials, termed generalized mixed Adler--Moser (GMAM) polynomials, which feature multiple arbitrary free parameters. The RW patterns can be understood as straightforward expansions and slight shifts of the root structures for the GMAM polynomials to some extent. In the \((x,t)\)-plane, they asymptotically converge to a first-order RW at the position corresponding to each simple root of the polynomials and to a lower-order RW at the position associated with each multiple root. Notably, the position of the lower-order RW within these patterns can be flexibly adjusted to any desired location in the \((x,t)\)-plane by tuning the free parameters of the corresponding GMAM polynomials.
	
\end{abstract}
\begin{keyword}
	$ n $-NLSE \sep RW pattern \sep Asymptotics \sep GMAM polynomial \sep Root structure 




\end{keyword}

\end{frontmatter}



\section{Introduction}

Rogue waves (RWs), also known as freak waves, monster waves, extreme waves, killer waves, and so on, were initially used to describe ocean waves with high amplitudes that appear suddenly from nowhere and disappear without a trace. Over the past few decades, the study of RWs has extended beyond oceanography, finding applications in diverse fields such as nonlinear optics \cite{2007Optical, Kibler2010, akhmediev2011, 2014Lecaplain}, plasma physics \cite{2011plasma, tsai2016, 2017plasma, merriche2017}, Bose--Einstein condensates \cite{pethick2008, carretero2008, 2009Matter, Manikandan2014}, and even finance \cite{yan2010, yan2011vector}. 
In physics, there are various mechanisms for the RW generation, such as Benjamin--Feir instability (i.e., modulation instability) \cite{shats2010, Pelinovsky2016}, three-dimensional (3D) directional wave focusing \cite{fochesato2007}, nonlinear focusing of transient frequency modulation wave groups \cite{pelinovsky2008large}, and dispersive focusing of unidirectional wave packets \cite{pelinovsky2011scenario}. {In mathematics, RWs can be modeled using a variety of integrable nonlinear partial differential equations. Prominent examples include the nonlinear Schr\"odinger equation (NLSE) \cite{peregrine1983water, yangb2021n, linh2024c}, the derivative NLSE \cite{xu2011, chan2014rogue}, Hirota equation \cite{ankiewicz2010r, ankiewicz2010d}, Ablowitz--Ladik equation \cite{ankiewicz2010d, wen2018m}, and so on. Meanwhile, multi-component integrable systems have also been employed to study RWs, such as the Manakov system \cite{baronio2012solutions, chen2015vector, ling2014high, yangb2023} (i.e., the couple NLSE), the coupled Fokas--Lenells equations \cite{ye2019general, suh2023}, multi-component NLSE \cite{zhangg2021, zhang2021parity, wanglh2022, zhangg2022}, multi-component derivative NLSE \cite{linh2024a}, and so on.}

Furthermore, the study of RW patterns has become a focal point in RW theory. Understanding these patterns allows for the prediction of future RW structures using initial waveforms. {For instance, in $2021$, Yang et al. established a connection between the high-order RW solutions and the root structures of the Yablonskii--Vorob'ev polynomial hierarchy \cite{yangb2021n, yangb2021}}. Then, Yang and our group all discovered RW patterns in two-component integrable systems, including the Manakov system \cite{yangb2023} and the coupled Fokas--Lenells equations \cite{suh2023}, where these patterns correspond to the root structures of the Okamoto polynomials. In $2023$, Zhang et al. explored RW patterns for the three-component and four-component NLSE \cite{zhangg2022}, corresponding to the root structures of generalized Wronskian--Hermite polynomials with jump parameters of $ 4 $ and $ 5 $, respectively. In 2024, we also studied the RW patterns of any $ n $-component derivative NLSE \cite{linh2024a}, corresponding to the root structures of generalized Wronskian--Hermite polynomials with a jump parameter of $ n+1$. These RW patterns only contain a single internal large parameter. In another recent development, Yang et al. studied the RW patterns of the NLSE corresponding to the root structures of Adler--Moser polynomials \cite{yangb2024}, where these patterns contain multiple internal large parameters and all roots of the polynomials are simple. Subsequently, we explored the RW patterns of the NLSE with multiple internal large parameters, corresponding to the root structures of Adler--Moser polynomials with multiple roots \cite{linh2024c}. However, to the best of our knowledge, RW patterns involving multiple internal large parameters have not yet been explored in multi-component integrable systems. 

This paper aims to address this gap by investigating vector RW patterns for a general multi-component NLSE (alias $ n $-NLSE) with multiple internal large parameters. To achieve this, we introduce a novel class of specialized polynomials with multiple free parameters, termed generalized mixed Adler--Moser (GMAM) polynomials. Our findings reveal that the \(n\)-NLSE exhibits many new RW patterns, emerging as multiple internal parameters in the high-order vector RW solutions grow large. The shapes of these patterns can be asymptotically predicted by the root structures of the GMAM polynomials. Interestingly, these patterns often appear as simple dilations and slight shifts of the root structures of the GMAM polynomials. Compared to Adler--Moser polynomials, GMAM polynomials exhibit significantly richer root structures, which lead to a wider variety of RW patterns. As a result, we report some new RW patterns in the \(n\)-NLSE, such as non-multiple root (NMR)-type patterns with shapes such as $ 180 $-degree sector, jellyfish-like, and thumbtack-like shapes, as well as multiple root (MR)-type patterns with right double-arrow and right arrow shapes. Based on the root structures of the GMAM polynomials, we analyze the asymptotics of these RW patterns and validate them against the actual RW solutions, demonstrating excellent agreement.

The rest of this paper is organized as follows. In Sec. \ref{Sec2}, we provide the formula for high-order vector rogue wave (RW) solutions of the $ n $-NLSE. In Sec. \ref{Sec_gwhp}, we introduce the GMAM polynomials and their various NMR and MR structures. In Sec. \ref{Sec_rwp}, we analyze asymptotic behaviors and pattern structures of the vector high-order RW solutions for the $ n $-NLSE. In Sec. \ref{Sec_exam}, we present examples of RW patterns with multiple internal large parameters for the $ 3 $-NLSE and $ 4 $-NLSE. In Sec. \ref{Sec_proof}, we provide the proofs of the main results in this paper. Finally, conclusions and discussions are presented in Sec. \ref{Sec_Con}.

\section{Model and vector RW solution}\label{Sec2}
	
The $n$-NLSE \cite{Ablowitz2004} is 
\begin{equation}\label{nNLSE}
	\begin{aligned}
		&\ii \mathbf{q}_{t}+\frac{1}{2}\mathbf{q}_{xx}+ \| \mathbf{q}\|_{2}^{2}\mathbf{q} =\mathbf{0}, \quad \mathbf{q}=\left[ q_{1}(x,t),q_{2}(x,t),\cdots,q_{n}(x,t)\right] ^{T}, 
	\end{aligned}
\end{equation}
where $ \mathbf{q} $ is the complex vector field, the subscripts $ x $ and $ t $ denote the partial derivatives concerning the variables, and $ \|\cdot\|_{2} $ is the standard Euclidean norm. 

Referring to Refs. \cite{zhangg2021, zhang2021parity}, we can similarly employ the generalized Darboux transformation method to generate high-order vector RW solutions of the $n$-NLSE from the plane wave seed solution
\begin{equation}\label{seed}
	\begin{aligned}
		&\mathbf{q}^{[0]}(x,t)=\left[{q}_{1}^{[0]}(x,t), {q}_{2}^{[0]}(x,t), \dots, {q}_{n}^{[0]}(x,t)\right]^{T},\quad {q}_{k}^{[0]}(x,t)=a_{k}e^{\ii \theta_{k}},\\
		&\theta_{k}=b_{k}x+ \left( -\frac{{b_{k}}^{2}}{2} + \| \mathbf{a}\|_{2}^{2} \right) t, \quad 1\leq k\leq n,
	\end{aligned}	
\end{equation}
where $\mathbf{a}=(a_{1}, a_{2}, \dots, a_{n})$, $ a_{k}\neq0$, and the real parameters $ a_{k} $ and $ b_{k} $ are the amplitudes and wave numbers, respectively. In particular, when the constants $a_{j}$ and $b_{j}$ satisfy the following constraint conditions \cite{zhangg2021}:
\begin{equation} \label{abk} 
	\begin{aligned}
		a_{k}=\frac{2\Im(\lambda_{0})}{n+1}\csc(\frac{k\pi}{n+1}),\quad
		b_{k}=\frac{2\Im(\lambda_{0})}{n+1}\cot(\frac{k\pi}{n+1})-2\Re(\lambda_{0}), 
	    \quad 1\leq k\leq n,
	\end{aligned}
\end{equation}
with $ \Re(\lambda_{0}) $ and $ \Im(\lambda_{0}) $ being the real and imaginary parts of the eigenvalue $ \lambda_{0} $, respectively, and $ \Im(\lambda_{0})\ne 0 $, we can obtain an $ (n+1) $-multiple root
\begin{equation}\label{chi0}
	\chi_{0}=2\Re(\lambda_{0})+ \ii\frac{2\Im(\lambda_{0})}{n+1},
\end{equation} 
of the corresponding characteristic equation
\begin{equation} \label{cheq} 
	\begin{aligned}
		\left( \chi-2{\lambda}-\sum_{i=1}^{n}\frac{a_{i}^{2}}{\chi+b_{i}}\right) \prod_{j=1}^{n}(\chi+b_{j})=0,
	\end{aligned}
\end{equation}
at $\lambda=\lambda_{0}$. 

Next, we present the determinant representation of the vector RW solutions containing multiple free large parameters in the following theorem. For convenience, we first introduce the Schur polynomials $ S_{i}(\mathbf{z}) $ with $ \mathbf{z}=(z_{1}, z_{2}, \cdots) $ are given by the generating function
\begin{equation}\label{schur}
	\sum_{j=0}^{\infty} S_{j}(\mathbf{z})\varepsilon^{j}=\exp\left( \sum_{j=1}^{\infty} z_{j}\varepsilon^{j}\right),
\end{equation}
and other notations:
\begin{equation}\label{xpm1}
	\begin{aligned}
		&\mathbf{x}^{[l,\pm]}=(x_{1}^{[l,\pm]}, x_{2}^{[l,\pm]}, \cdots), \quad x^{[l,+]}_{j}=\alpha_{j}x+\beta_{j}\ii t+ d_{l,j}, \quad x^{[l,-]}_{j}=(x^{[l,+]}_{j})^{*}, \\ &\mathbf{h}_{1}=(h_{1,1},h_{1,2},\cdots), \quad
		{\mathbf{h}_{2}^{[k]}=(h_{2,1}^{[k]},h_{2,2}^{[k]},\cdots),}
	\end{aligned}
\end{equation}
where $ d_{l,j} $'s are the arbitrary complex constants, and $\alpha_{j} $, $\beta_{j} $ and $ h_{i,j} $ are defined by the following expansions
\begin{equation}\label{xpm2}
	\begin{aligned}
		&\ii\left( \chi(\varepsilon)-\lambda(\varepsilon)\right)  = \sum_{j=0}^{\infty}\alpha_{j}\varepsilon^{j}, \quad
		\left( \frac{\chi(\varepsilon)^{2}}{2} -\lambda(\varepsilon)^{2} -\|\mathbf{a}\|_{2}^{2}\right)  = \sum_{j=0}^{\infty}\beta_{j}\varepsilon^{j}, \\
		&\ln\left( \dfrac{(\chi_{0}-\chi_{0}^{*})}{\chi^{[1]}\varepsilon} \dfrac{(\chi(\varepsilon)-\chi_{0})}{(\chi(\varepsilon)-\chi_{0}^{*})}\right)  =\sum_{j=1}^{\infty}h_{1,j}\varepsilon^{j},  
		\quad \ln\left( \frac{\chi(\varepsilon)+b_{k}}{\chi_{0}+b_{k}}\right) =\sum_{j=1}^{\infty}{h_{2,j}^{[k]}}\varepsilon^{j},
	\end{aligned}
\end{equation}
and  
\begin{equation}\label{lamchi}
	\lambda(\varepsilon) =\lambda_{0} +\lambda^{[1]}\varepsilon^{n+1}, \quad \chi(\varepsilon)=\chi_{0}+ \sum_{i=1}^{\infty}\chi^{[i]}\varepsilon^{i}, 
\end{equation}
satisfy the characteristic equation \eqref{cheq} with the small perturbation parameter $ \varepsilon $ and the arbitrary nonzero complex constant $ \lambda^{[1]}$.

\begin{theorem}\label{Theo1}
		Given an $ n $-dimensional integer vector $ \mathcal{N}= [N_{1}, N_{2}, \cdots, N_{n}] $ with $N_{i}\geq 0$ $ (1 \leq i\leq n)$ and $ \sum_{i=1}^{n}N_{i}=N $, the $n$-NLSE \eqref{nNLSE} admits high-order vector RW solutions $\mathbf{q}^{\mathcal{N}}(x,t) = \left[{q}_{1}^{\mathcal{N}}(x,t),{q}_{2}^{\mathcal{N}}(x,t), \ldots, {q}_{n}^{\mathcal{N}}(x,t) \right]^{T} $ with
	\begin{equation}\label{horw}
		q_{k}^{\mathcal{N}}(x,t)=  q^{[0]}_{k}(x,t)\frac{\det(\mathbf{M}^{[k]})}{\det(\mathbf{M}^{[0]})}, \qquad 1\leq k \leq n,
	\end{equation}
	where $ q_{k}^{[0]}(x,t) $ is the plane wave solution given in Eqs. \eqref{seed} and \eqref{abk}, $ \mathbf{M}^{[s]} (0\leq s\leq n)$ are defined by the following $ n\times n $ block matrices
	\begin{equation}\label{msmat1}
		\begin{aligned}
			&\mathbf{M}^{[s]}=\begin{pmatrix} 
				M^{[s;{1},{1}]} &  M^{[s;{1},{2}]} & \cdots & M^{[s;{1},{n}]} \\
				M^{[s;{2},{1}]} &  M^{[s;{2},{2}]} & \cdots & M^{[s;{2},n]}\\
				\vdots & \vdots  & \ddots  & \vdots \\
				M^{[s;n,1]} &  M^{[s;n,2]} &  \cdots & M^{[s;n,n]}
			\end{pmatrix}_{N\times N},  \\
			&M^{[s;l_{1},l_{2}]}= \left( \tau^{[s;l_{1},l_{2}]}_{(n+1)(i-1)+l_{1},(n+1)(j-1)+l_{2}} \right)_{1\leq i\leq N_{l_{1}}, 1\leq j\leq N_{l_{2}}}, \quad  1\leq l_{1},l_{2}\le n,
		\end{aligned}
	\end{equation}
the elements $ \tau^{[s;l_{1},l_{2}]}_{i,j} $ are given by
\begin{equation}\label{tauij1}
	\begin{aligned}
		&\tau_{i,j}^{[0;l_{1},l_{2}]}=\sum_{v=0}^{\min{(i,j)}}C_{1}^{v} \, S_{i-v}(\mathbf{x}^{[l_{1},-]}+v \mathbf{h}_{1}^{*})\, S_{j-v}(\mathbf{x}^{[l_{2},+]}+v \mathbf{h}_{1}), \\
		&\tau_{i,j}^{[k;l_{1},l_{2}]}=\sum_{v=0}^{\min{(i,j)}}C_{1}^{v} \, S_{i-v}(\mathbf{x}^{[l_{1},-]}+v \mathbf{h}_{1}^{*}+{(\mathbf{h}_{2}^{[k]})}^{*} )\, S_{j-v}(\mathbf{x}^{[l_{2},+]}+v \mathbf{h}_{1}-{\mathbf{h}_{2}^{[k]}}), \quad 1\leq k\leq n,
	\end{aligned}
\end{equation}
the constant $ C_{1}= \frac{|\chi^{[1]}|^{2}}{|\chi_{0}-\chi_{0}^{*}|^{2}} $, and $ \chi_{0} $, $ S_{j} $, $\lambda^{[1]}, \chi^{[1]}$, $ \mathbf{x}^{[l,\pm]} $, $ \mathbf{h}_{1} $, and $ {\mathbf{h}_{2}^{[k]}} $ are defined by Eqs. \eqref{chi0}-\eqref{lamchi}.
\end{theorem}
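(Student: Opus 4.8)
The plan is to prove Theorem \ref{Theo1} by generating the high-order vector RW solutions through the generalized Darboux transformation (DT) applied to the plane-wave seed \eqref{seed}, extending the construction of Refs.~\cite{zhangg2021, zhang2021parity} so as to accommodate the families of arbitrary free constants $d_{l,j}$ and the general block partition encoded by $\mathcal{N}$. First I would write down the $(n+1)\times(n+1)$ AKNS-type Lax pair $\Phi_{x}=U(\lambda;\mathbf{q})\Phi$, $\Phi_{t}=V(\lambda;\mathbf{q})\Phi$ whose compatibility $U_{t}-V_{x}+[U,V]=\mathbf{0}$ reproduces \eqref{nNLSE}, and solve it explicitly on the seed $\mathbf{q}^{[0]}$. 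Separation of variables reduces each vector eigenfunction to exponentials governed by a spectral variable $\chi$, whose admissibility is precisely the characteristic equation \eqref{cheq}; under the amplitude--wavenumber constraints \eqref{abk} this equation degenerates, at $\lambda=\lambda_{0}$, into the $(n+1)$-fold root \eqref{chi0}, which is the algebraic mechanism that makes the high-order, many-parameter structure possible.

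Next I would carry out the generalized (limiting) DT. Introducing the perturbation parameter $\varepsilon$ and the branch $\lambda(\varepsilon),\chi(\varepsilon)$ of \eqref{lamchi} that solves \eqref{cheq} order by order, the seed eigenfunction becomes an analytic generating function in $\varepsilon$ whose Taylor coefficients supply the chain of generalized eigenvectors required for an $N$-fold DT. The arbitrary constants $d_{l,j}$ enter here as the free coefficients of that expansion, and the data in \eqref{xpm1}--\eqref{xpm2} are exactly the Taylor coefficients $\alpha_{j},\beta_{j}$ of $\ii(\chi(\varepsilon)-\lambda(\varepsilon))$ and $\tfrac{1}{2}\chi(\varepsilon)^{2}-\lambda(\varepsilon)^{2}-\|\mathbf{a}\|_{2}^{2}$ that build the phase $x_{j}^{[l,+]}$, together with $h_{1,j}$ and $h_{2,j}^{[k]}$ that record the eigenvector normalization and the gauge twist $\ln((\chi+b_{k})/(\chi_{0}+b_{k}))$ attached to the $k$-th component. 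The resulting $N$-fold DT then expresses each component $q_{k}^{\mathcal{N}}$ as a ratio of determinants of matrices whose entries are bilinear pairings of these generalized eigenvectors, which is the content of \eqref{horw}.

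The central computation is to recast those DT determinants into the Schur-polynomial block form \eqref{msmat1}--\eqref{tauij1}. Using the generating function \eqref{schur}, each pairing of the two $\varepsilon$-expansions collapses into the convolution $\sum_{v}C_{1}^{v}\,S_{i-v}\,S_{j-v}$ with $C_{1}=|\chi^{[1]}|^{2}/|\chi_{0}-\chi_{0}^{*}|^{2}$, where the factor $C_{1}^{v}$ comes from the normalizing logarithm in \eqref{xpm2} carrying the $\chi_{0}-\chi_{0}^{*}$ residue, and the index shifts $v\mathbf{h}_{1}$ and $\mp\mathbf{h}_{2}^{[k]}$ record the reexpansion about $\chi_{0}$ and the $k$-th gauge factor, respectively. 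The block index $(n+1)(i-1)+l_{1}$ should then be read off from the $(n+1)$-fold multiplicity of $\chi_{0}$: the spectral orders split into residue classes modulo $n+1$, the $n$ classes $l_{1}=1,\dots,n$ labelling the $n$ diagonal families and $N_{l_{1}}$ fixing how many orders are retained in each, consistently with the $\varepsilon^{n+1}$ advance of $\lambda(\varepsilon)$ in \eqref{lamchi}. A Laplace/Cauchy--Binet expansion over the blocks then assembles $\det(\mathbf{M}^{[s]})$ in the stated form.

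The step I expect to be the main obstacle is this final reorganization together with verifying that the focusing reduction is preserved. I would need to show that the conjugation symmetry $x_{j}^{[l,-]}=(x_{j}^{[l,+]})^{*}$ and the Hermitian pairing built into \eqref{tauij1} are exactly the DT images of the symmetry of the Lax pair that constrains the potential to its Hermitian conjugate, so that $\mathbf{q}^{\mathcal{N}}$ genuinely solves the focusing $n$-NLSE \eqref{nNLSE}; this is where the particular combination inside the logarithm of \eqref{xpm2}, with its $\chi_{0}-\chi_{0}^{*}$ normalization, is indispensable. Once the entries are in Schur form and this reduction is confirmed, the fact that $\mathbf{q}^{\mathcal{N}}$ solves \eqref{nNLSE} follows automatically because the DT maps seed solutions to solutions, and the explicit plane-wave prefactor $q_{k}^{[0]}$ in \eqref{horw} enforces the correct boundary behavior.
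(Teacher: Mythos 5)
Your proposal follows essentially the same route as the paper's proof: both construct the solution by the generalized Darboux transformation on the plane-wave seed \eqref{seed}, exploit the $(n+1)$-fold degenerate root \eqref{chi0} of the characteristic equation \eqref{cheq} under the constraints \eqref{abk}, expand eigenfunctions in the perturbation parameter $\varepsilon$ via \eqref{lamchi}, and then reorganize the resulting ratio of DT determinants into the Schur-polynomial block form \eqref{msmat1}--\eqref{tauij1}. The only practical difference is that the paper compresses most of these steps by citation (the $N$-fold DT formula is quoted from prior work, and the final Schur-polynomial simplification is deferred to the proof of Theorem 3 of Ref.~\cite{linh2024a}), whereas you sketch the same computations directly; the substance and ordering of the argument coincide.
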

The proof of this Theorem can be found in Appendix \ref{Appe_proofRWs}.

Note that the free internal large parameters $ d_{l,j} $ and the integer vector $ \mathcal{N} $ in the high-order RW solution \eqref{horw} are two main factors affecting the structure of RW patterns. From the proof of the asymptotics of RW patterns in Theorem \ref{theo_mrp} below, it can be found that the parameters $ d_{l, (n+1)j} $ $ (j\geq 1) $ do not affect the dynamical behavior of the RW patterns. Therefore, we can assume $ d_{l,(n+1)j}=0 $ $ (j\geq 1) $ without loss of generality. Moreover, if $ N_{i}=0 $ in $ \mathcal{N} $, the block matrix \eqref{msmat1} of RW solution \eqref{horw} does not possess the submatrices $ M^{[s;i,j]} $ and $ M^{[s;j,i]} $ with $ 1\leq j\leq n $. When taking $ \mathcal{N}_{1}=[1,0,0,\ldots,0] $, we apply the formula \eqref{horw} to obtain a first-order fundamental vector RW solution $ \mathbf{q}^{[1]}(x,t)= \left[ {q}_{1}^{[1]}(x,t), {q}_{2}^{[1]}(x,t), \cdots, {q}_{n}^{[1]}(x,t) \right] ^{T} $ of $ n $-NLSE \eqref{nNLSE} with the internal free complex parameter $ d_{1,1} $, where $ q_{k}^{[1]}(x,t)=q_{k}^{[0]}(x,t)\hat{q}_{k}^{[1]}\left( x-x_{0},t-t_{0} \right)  $ with $ q_{k}^{[0]}(x,t) $ given in Eq. \eqref{seed} and
\begin{equation}\label{q1}
\begin{aligned}
	&	\hat{q}_{k}^{[1]}(x,t)= 1+\dfrac{2\ii}{|\chi_{0}+b_{k}|^{2}}\dfrac{ (\Re(\chi_{0})+b_{k}) \left( x+\Re(\chi_{0})t  \right) - \Im(\chi_{0})^{2} t +\frac{\ii}{2} }{\left( x+\Re(\chi_{0})t \right)^{2} +\Im(\chi_{0})^{2}t^{2}+\frac{1}{4\Im(\chi_{0})^{2}}}, \\
	& (x_{0}, t_{0})=\left( -\frac{\Re(\chi_{0})}{\Im(\chi_{0})}\Re( \frac{d_{1,1}}{\chi^{[1]}}) -\Im( \frac{d_{1,1}}{\chi^{[1]}}) , \frac{1}{\Im(\chi_{0})}\Re( \frac{d_{1,1}}{\chi^{[1]}})\right) .
\end{aligned}
\end{equation}
When $ x\rightarrow\infty $ and $ t\rightarrow\infty $, the first-order RW $  |q_{k}^{[1]}(x,t)| \rightarrow |a_{k}| $. {Moreover, at its central position $ (x,t)=(x_{0}, t_{0}) $, the amplitude of the first-order RW $ |q_{k}^{[1]}(x,t)|  $  is}
\begin{equation}\label{maxamp1}
	\left|a_{k}\left(1-\frac{4\Im(\chi_{0})^2}{ |\chi_{0}+b_{k}|^{2}}  \right)\right|.
\end{equation}

\section{Generalized mixed Adler--Moser polynomial}\label{Sec_gwhp}

As reported in Refs. \cite{yangb2021, yangb2023, zhangg2022, suh2023, yangb2024, linh2024a}, the patterns of high-order RW solutions for many $ (1+1) $-dimensional integrable equations (counting multi-component systems) are closely related to root structures of the particular Wronskian polynomials. Thus, to effectively study patterns and asymptotic behavior of the high-order RW solutions for the $ n $-NLSE, we will introduce a new class of special polynomials with multiple free parameters, referred to as the generalized mixed Adler--Moser (GMAM) polynomials.
	
Let $ \varphi_{l,j}(z) $ be the special Schur polynomials defined by
\begin{equation}\label{schpj}
	\sum_{j=0}^{\infty} \varphi_{l,j}(z) \varepsilon^{j}=\exp\left(z\varepsilon+ \sum_{\substack{i=1 \\ (n+1) \nmid i }}^{\infty}\kappa_{l,i}\varepsilon^{i} \right), \quad l=1,2,\ldots, n,
\end{equation}
where $ \varphi_{l,j}(z) = 0 $ for $ j<0 $, $ \varphi^{\prime}_{l,j+1}(z)=\varphi_{l,j}(z)$, and $ \kappa_{l,i} $  are arbitrary complex constants. Then, the monic GMAM polynomials are defined as
\begin{equation}\label{gswhp}
	W_{\mathcal{N}}(z)=c_{\mathcal{N}}\, \mathcal{W}\left(\{\varphi_{1,(n+1)(j-1)+1}(z)\}_{j=1}^{N_{1}}, \{\varphi_{2,(n+1)(j-1)+2}(z)\}_{j=1}^{N_{2}},\ldots, \{\varphi_{n,(n+1)(j-1)+n}(z)\}_{j=1}^{N_{n}} \right),
\end{equation}
where $ c_{\mathcal{N}} $ is a real constant, $ \mathcal{W}\left( \{\varphi_{l,(n+1)(j-1)+l}(z)\}_{j=1}^{N_{l}} \right) =\mathcal{W}\left( \varphi_{l,l}(z), \varphi_{l,n+1+l}(z), \ldots, \varphi_{l,(n+1)(N_{l}-1)+l}(z) \right) $ $ (1\leq l \leq n) $ are the usual Wronskian determinants, and the index vector $ \mathcal{N} $ is defined by Theorem \ref{Theo1}. Since $  \varphi_{l,j}(z) $ is a $ j $th-order polynomial in the variable $z$, we can calculate the degree 
\begin{equation}\label{degpol}
	\Gamma= \sum_{l=1}^{n}\frac{N_{l}}{2}(n(N_{l}-1)+2l)-\sum_{1\leq l< k\leq n}N_{l}N_{k},
\end{equation}
of the GMAM polynomial $ W_{\mathcal{N}}(z) $ and the number $ \sum_{l=1}^{n}\left(n(N_{l}-1)+k_{l} \right) $ of free parameters with $ k_{l}=l $ for $ N_{l}\ne 0 $ and $ k_{l}=n $ for $ N_{l}= 0 $. Therefore, when all free parameters $ \kappa_{l,i} $ all equal to zero, we have $ W_{\mathcal{N}}(z)=z^{\Gamma} $.

Especially, when $ n=1 $, the GMAM polynomials reduce to the Adler--Moser polynomials \cite{am2009,yangb2024}. When $n=1$ (i.e., $n$-NLSE is the scalar NLSE) and all parameters $\kappa_{1,i}$ are zero except for one $\kappa_{1,2j+1}=-\frac{2^{2j}}{2j+1}$ $(j\geq 1)$, the GMAM polynomials reduce to the Yablonskii--Vorob'ev polynomial hierarchy \cite{ kajiwara1996, clarkson2003a, yangb2021}. When $n=2$, all parameters $\kappa_{l,i}$ are zero except for one $\kappa_{l,j}=1$ $ (l=1 $ or $ 2) $, and $\mathcal{N}=[N,0]$ or $[0, N]$, the GMAM polynomials reduce to the Okamoto polynomial hierarchy \cite{okamoto1986, clarkson2003b, yangb2023}. When $ n $ is even and $ \kappa_{l,i}=0 $ $ (i\ne 2) $, the GMAM polynomials can reduce to the symmetric Okamoto polynomials for $ \kappa_{l,2}=\frac{1}{2} $ $ (1\leq l\leq n) $ and the symmetric Hermite polynomials for $ \kappa_{l,2}=\frac{1}{2(2n+1)} $ $ (1\leq l\leq n) $, respectively \cite{Clarkson2008}. When all parameters $\kappa_{l,i}$ are zero except for one $\kappa_{l,j}=1$ $ (1\leq l\leq n) $ and only one nonzero integer $N_{l}$ in the index vector $\mathcal{N}$, the GMAM polynomials reduce to the generalized Wronskian--Hermite polynomials \cite{zhangg2022, linh2024a}. Therefore, we can regard the GMAM polynomial \eqref{gswhp} as a mixed generalization of Adler--Moser polynomials. The first few GMAM polynomials are presented in Appendix \ref{Appe_tGMAMp}.

Next, we will discuss the root structure of the GMAM polynomials with nonzero free parameters, which is crucial for our analysis of the asymptotics of RW patterns in the later text. Based on the arbitrariness of the free complex parameters $ \kappa_{l, i} $ in $ W_{\mathcal{N}}(z) $, the root structures of such polynomials are highly complex and diverse. In Refs. \cite{clarkson2003a, yangb2021}, Clarkson et al. investigated root structures of the Yablonskii--Vorob'ev polynomial hierarchy. In Refs. \cite{yangb2024, linh2024c}, Yang and our group have separately discussed the root structures of Adler--Moser polynomials under conditions where all roots are simple and multiple roots exist. In Ref. \cite{Clarkson2008}, Clarkson et al. presented the root structures of some symmetric Okamoto polynomials and symmetric Hermite polynomials, revealing stacked symmetric rhombuses or symmetric rectangles. Additionally, in Ref. \cite{linh2024a}, we have studied the root structures of the GMAM polynomials with only one nonzero integer $N_{l}$ in the index vector $\mathcal{N}$ and all parameters being zero except for one $\kappa_{l,j}=1$ $((n+1) \nmid j)$.

We categorize the infinitely many different root structures of the GMAM polynomials into two main types: non-multiple root (NMR) structure and multiple root (MR) structure. Here, we briefly present some examples to demonstrate the different NMR and MR structures of the polynomials $W_{\mathcal{N}}(z)$. For instance, when the index vector $ \mathcal{N}=[4,3,2] $ and the free parameter $\kappa_{l,j}$ equals to $ -l$ or $ (4-l) \, \ii^{j}$ with $ 1\leq l \leq 3 $, $ 1\leq j\leq 4(N_{l}-1)+l $, and $ 4 \nmid j $, we obtain two NMR structures of $W_{[4,3,2]}(z)$. As shown in Fig. \ref{Fig1} (i) and (ii), the distributions of the $ 20 $ simple roots of $W_{[4,3,2]}(z)$ resemble a $ 180 $-degree sector and a jellyfish-like shape, respectively.
When the index vector $ \mathcal{N}=[3,2,2,1] $ and the free parameter $\kappa_{l,j}$ equals to $ l$ or $ \frac{l}{2}\,\ii^{j}$ with $ 1\leq l \leq 4 $, $ 1\leq j\leq 5(N_{l}-1)+l $, and $ 5 \nmid j $, we generate two NMR structures of $W_{[3,2,2,1]}(z)$. As shown in Fig. \ref{Fig2} (i) and (ii), the distributions of the $ 14 $ simple roots of $W_{[3,2,2,1]}(z)$ resemble two distinct thumbtack-like shapes, respectively.

Furthermore, for the MR structures of GMAM polynomials with nonzero free parameters, we primarily discuss a special case characterized by the presence of only one multiple root $ z=z_{0}-\kappa_{1,1} $, with all other roots being simple. Here, the free parameters $ \kappa_{l,i} $ satisfy
\begin{equation}\label{mmrp}
\begin{aligned}
  	&\kappa_{l_{1},1}=\kappa_{l_{2},1}, \quad  l_{1}\ne l_{2}, \quad 1\leq l_{1},l_{2}\leq n,\\
  	&\kappa_{l,i}=\frac{z_{0}^{i}}{i}, \quad z_{0}\in \mathbb{C} \setminus \{0\}, \quad 1\leq l\leq n, \quad i> 1, \quad (n+1) \nmid i.
\end{aligned}
\end{equation}
However, for the arbitrary integer vector index parameters $\mathcal{N}$, the multiplicity of a root $z=z_{0}-\kappa_{1,1}$ of the polynomial $ W_{\mathcal{N}}(z) $ with free parameters \eqref{mmrp} exhibits considerable complexity and variability. Consequently, it is not feasible to directly provide a general formula for the multiplicity of the multiple root. Here, we present a formula for the multiplicity of the multiple root for $ W_{\mathcal{N}}(z) $ with free parameters \eqref{mmrp} and a specific type of index vector parameters $\mathcal{N}$ in the following theorem.

\begin{figure*}[!htbp]
	\centering
	\includegraphics[width=0.8\textwidth]{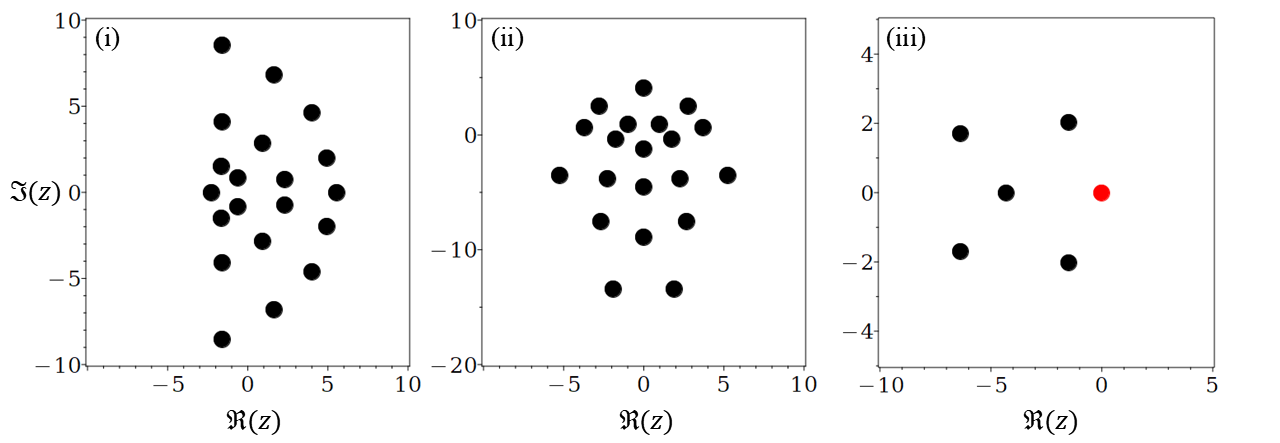}
	\caption{The root structures of the GMAM polynomials $ W_{[4,3,2]}(z) $ with the free parameter $ \kappa_{l,j}=-l $ in (i), $ \kappa_{l,j}=(4-l) \, \ii^{j}$ in (ii), and $ \kappa_{l,j}=\frac{1}{j}$ in (iii). Black points represent simple roots, while red point represents the $15$-multiple zero root.}
	\label{Fig1}
\end{figure*}

\begin{figure*}[!htbp]
	\centering
	\includegraphics[width=0.8\textwidth]{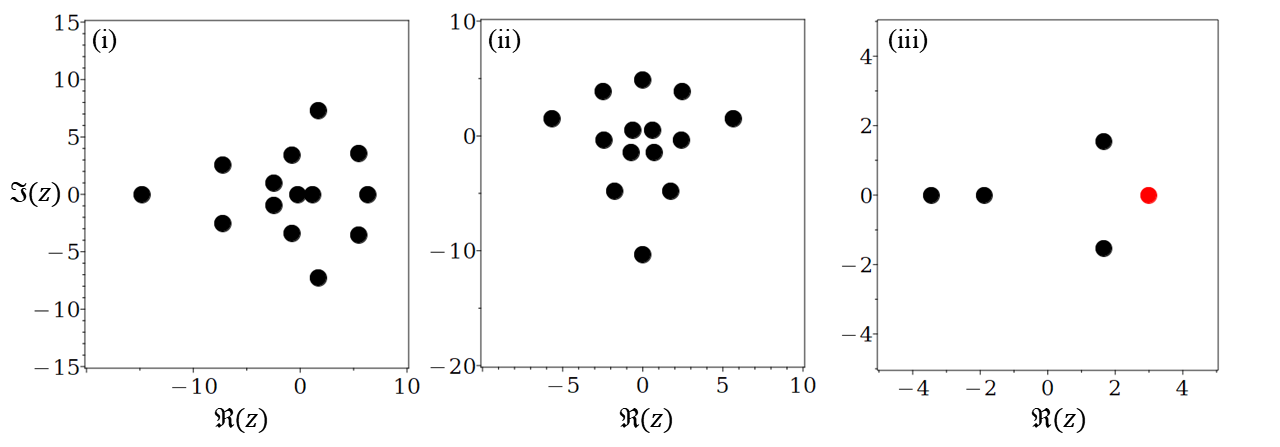}
	\caption{The root structures of the GMAM polynomials $ W_{[3,2,2,1]}(z) $ with the free parameter $ \kappa_{l,j}=l $ in (i), $ \kappa_{l,j}= \frac{l}{2}\,\ii^{j} $ in (ii), and $\kappa_{l,1}= -2 $ and $ \kappa_{l,j}=\frac{1}{j} (j>1)$ in (iii). Black points represent simple roots, while red point represents the $10$-multiple root.}
	\label{Fig2}s
\end{figure*}

\begin{theorem}\label{theo_mrs}
Assuming that the integer index vector $ \mathcal{N}= [N_{1}, N_{2}, \cdots, N_{n}] $ satisfies $ N_{1}\geq N_{2}\geq \ldots \geq N_{n} \geq 0  $ and $ \sum_{j=1}^{n} N_{j}=N$, and that the free parameters $ \kappa_{l,i} $ are defined in Eq. \eqref{mmrp}, the GMAM polynomial $ W_{\mathcal{N}}(z) $ has a $ \Gamma_{0} $-multiple root $ z=z_{0}-\kappa_{1,1} $. Here, $ \Gamma_{0} $ is defined by Eq. \eqref{degpol}, where $ N_{i} $ is replaced by $ \hat{N}_{i} $, and
\begin{equation}\label{gamma0}
	\left[ \hat{N}_{1}, \hat{N}_{2}, \cdots, \hat{N}_{n}\right] =\begin{cases}
		[{N}_{1}, \ldots, N_{i-1},N_{i}-1,N_{i+1},\ldots {N}_{n}], & N_{1}=N_{i}>N_{i+1}, \quad 1\leq i\leq n-1,\\
		[N_{1}, N_{2},\ldots, N_{n-1}, N_{n}-1], &  N_{1}=N_{n}.
	\end{cases}
\end{equation}

\end{theorem}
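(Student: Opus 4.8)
The plan is to exploit the very special structure that the parameter choice \eqref{mmrp} forces on the building blocks, reducing the multiplicity question to a single combinatorial minimization.

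First I would substitute \eqref{mmrp} into the generating function \eqref{schpj}. Writing $\kappa:=\kappa_{1,1}=\kappa_{l,1}$ and resumming $\sum_{i\ge1}\frac{z_0^i}{i}\varepsilon^i=-\ln(1-z_0\varepsilon)$ together with its $(n+1)\mid i$ subseries $\frac{1}{n+1}\ln\bigl(1-(z_0\varepsilon)^{n+1}\bigr)$, the exponent collapses to $(z+\kappa-z_0)\varepsilon-\ln(1-z_0\varepsilon)+\frac{1}{n+1}\ln\bigl(1-(z_0\varepsilon)^{n+1}\bigr)$. Hence, with the local variable $\xi:=z-(z_0-\kappa)$, every $\varphi_{l,j}$ becomes $l$-independent, $\varphi_{l,j}(z)=\varphi_j(z)$, and
\[
\sum_{j\ge0}\varphi_j(z)\varepsilon^j=e^{\xi\varepsilon}\,\Phi(\varepsilon),\qquad \Phi(\varepsilon)=\frac{\bigl(1-(z_0\varepsilon)^{n+1}\bigr)^{1/(n+1)}}{1-z_0\varepsilon}.
\]
Writing $\Phi(\varepsilon)=\sum_m p_m\varepsilon^m$, a binomial expansion yields $p_m=z_0^m P_m$ with $P_m=(-1)^{s}\binom{-n/(n+1)}{s}$ and $s=\lfloor m/(n+1)\rfloor$; every factor of this binomial coefficient is nonzero, so $p_m\neq0$ for all $m\ge0$ and $\varphi_m(z_0-\kappa)=p_m\neq0$. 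This pins down $\Phi$ and the candidate root location $\xi=0$ explicitly.

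Next I would localize the Wronskian \eqref{gswhp}. Using $\varphi_j'=\varphi_{j-1}$ and listing the index set $\mathcal I=\{(n+1)(j-1)+l:1\le l\le n,\ 1\le j\le N_l\}=\{m_1<\dots<m_N\}$, one has $W_{\mathcal N}(z)=c_{\mathcal N}\det\bigl(\varphi_{m_i-k+1}\bigr)_{i,k=1}^N$. Expanding each entry as $\varphi_{m_i-k+1}=\sum_{c}\frac{\xi^{c-k+1}}{(c-k+1)!}\,p_{m_i-c}$ and applying the Cauchy--Binet identity factors off all of the $\xi$-dependence:
\[
W_{\mathcal N}(z)=c_{\mathcal N}\sum_{C=\{c_1<\dots<c_N\}}d_C\,\xi^{\,\sigma(C)-\binom N2}\det\bigl(p_{m_i-c_j}\bigr)_{i,j},\qquad \sigma(C)=\sum_j c_j,
\]
where the purely numerical $d_C=\bigl(\prod_j c_j!\bigr)^{-1}\prod_{i<j}(c_j-c_i)$ are nonzero. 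Thus the order of vanishing at $\xi=0$ is at least $\min\{\sigma(C)-\binom N2:\det(p_{m_i-c_j})\neq0\}$, and after scaling out $z_0^{m_i}$ from rows and $z_0^{-c_j}$ from columns the nonsingularity condition becomes $\det\bigl(P_{m_i-c_j}\bigr)\neq0$ for the block-constant array $P_{m_i-c_j}$ (entries with $m_i-c_j<0$ read as $0$). The thrust of the argument is to show this lower bound is attained.

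The last and hardest step is to solve this minimization and identify the minimizer with $\hat{\mathcal N}$. Since $P_a$ depends only on the block $\lfloor a/(n+1)\rfloor$, two columns whose labels $c_j,c_{j'}$ leave every $m_i$ in the same block produce equal columns and annihilate the determinant; the residue pattern of $\mathcal I$ (block $j$ carrying exactly the residues $1,\dots,r_j$ with $r_j=\#\{l:N_l>j\}$) then dictates how far apart, in steps of $n+1$, the $c_j$ must be spread to keep $\det(P_{m_i-c_j})\neq0$. I expect the key to be a triangularization: choosing the $c_j$ greedily so that $(P_{m_i-c_j})$ becomes block-triangular with pairwise-distinct, hence nonzero, diagonal entries $P^{(s)}$, one shows the minimal admissible $\sigma(C)$ is realized by the configuration that deletes a single index of $\mathcal I$, namely the largest index of residue $i^\ast$ with $i^\ast=\max\{l:N_l=N_1\}$; this is precisely the reduction $N_{i^\ast}\to N_{i^\ast}-1$ recorded in \eqref{gamma0}, whence a degree count via \eqref{degpol} turns $\min\sigma(C)-\binom N2$ into $\Gamma_0=\Gamma(\hat{\mathcal N})$. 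The principal obstacle is ruling out all hidden cancellations among the block-constant entries, both within $\det(P_{m_i-c_j})$ and across distinct subsets $C$ of equal weight $\sigma(C)$, so as to prove \emph{optimality} of this greedy configuration rather than merely its admissibility; I anticipate this requires an induction on the successive blocks of $\mathcal I$.
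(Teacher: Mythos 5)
Your first two steps are correct and, in substance, they re-derive the structural facts on which the paper's own proof rests: the collapse of the generating function to $e^{\xi\varepsilon}\Phi(\varepsilon)$ with $\Phi(\varepsilon)=(1-(z_0\varepsilon)^{n+1})^{1/(n+1)}/(1-z_0\varepsilon)$, and the block-constancy $p_m=z_0^m P_m$ with $P_m\neq 0$ depending only on $\lfloor m/(n+1)\rfloor$, is exactly the content of the paper's Proposition \ref{prop1} (there phrased via the vanishing $2\times2$ minors \eqref{deter1}, with the $f_i$ being the Taylor coefficients of $(1-\varepsilon^{n+1})^{1/(n+1)}$; your closed form $P_m=(-1)^s\binom{-n/(n+1)}{s}$ is a nice sharpening). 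Likewise your Cauchy--Binet expansion in powers of $\xi$ is a legitimate reformulation of the paper's auxiliary-parameter grading \eqref{npoly}--\eqref{gij1}, since the $z_0$-degree of $\det(p_{m_i-c_j})$ plus the $\xi$-degree of each term equals $\Gamma$.

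The gap is that the entire content of the theorem sits in the step you leave as an expectation. Concretely, three things remain unproven: (a) for every subset $C$ with $\sigma(C)-\binom{N}{2}<\Gamma_0$ one has $\det(P_{m_i-c_j})=0$; (b) the coefficient of $\xi^{\Gamma_0}$, namely $\sum_C d_C\det(p_{m_i-c_j})$ over all $C$ of minimal weight, is nonzero, so that no cancellation raises the multiplicity; and (c) this minimal weight is the one encoded by $\hat{\mathcal N}$ in \eqref{gamma0}, i.e.\ $\min\sigma(C)=\sum_{m\in\mathcal I}m-\bigl((n+1)(N_1-1)+i^\ast\bigr)+(N-1)$. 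Your assertion that the minimizer is "the configuration that deletes the largest index of residue $i^\ast$" is a restatement of \eqref{gamma0} reverse-engineered from the theorem, and "greedy triangularization" plus "induction on the blocks" are method names, not arguments; in particular (b) is not automatic, since in general several admissible subsets of equal minimal weight can coexist. The paper's proof is precisely the execution of this missing step: the ordered column eliminations \eqref{deter2}--\eqref{deter4}, each justified by Proposition \ref{prop1}, exhibit the top $c$-coefficient explicitly as a nonzero constant times a Wronskian of pure monomials \eqref{deter5}--\eqref{deter6}, which is visibly proportional to $z^{\Gamma_0}$ with exponents dictated by $\hat{\mathcal N}$, thereby settling (a), (b), and (c) simultaneously (including the boundary cases $N_i=N_{i+1}$ and $N_n=0$, which your index set $\mathcal I$ handles only implicitly). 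Until you carry out your proposed induction at this level of detail, the proposal is a valid setup but not a proof.
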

This Theorem will be proved in Sec. \ref{subsec_proofmrs}. 

\begin{remark}
	{In Theorem \ref{theo_mrs}, we only provide the formula for the multiplicity of the multiple root  $ z=z_{0}-\kappa_{1,1} $ of the GMAM polynomial $ W_{\mathcal{N}}(z) $ with free parameters \eqref{mmrp} under a specific integer index vector $ \mathcal{N} $, where the elements of  $ \mathcal{N} $ satisfy $ N_{1}\geq N_{2}\geq \ldots \geq N_{n} \geq 0  $. For other cases with the general integer index vector $ \mathcal{N} $, the multiplicity of the multiple root for $ W_{\mathcal{N}}(z)$ can also be computed by using a similar approach.} 
\end{remark}

On the other hand, we note that the root structure of the GMAM polynomials with free parameters \eqref{mmrp} may include multiple roots other than the multiple root $ z=z_{0}-\kappa_{1,1} $. However, in this paper, we focus on the case where all roots are simple except for the only one multiple root $ z=z_{0}-\kappa_{1,1} $. As shown in Figs. \ref{Fig1} (iii) and \ref{Fig2} (iii), this type of MR structures for $W_{[4,3,2]}(z)$ (with $ z_{0}=\kappa_{1,1}=1 $) and $W_{[3,2,2,1]}(z)$ (with $ z_{0}=1 $ and $ \kappa_{1,1}=-2 $) are displayed, respectively. Fig. \ref{Fig1} (iii) is a right double-arrow shape formed by five nonzero simple roots and a $ 15 $-multiple zero root of $W_{[4,3,2]}(z)$, while Fig. \ref{Fig2} (iii) depicts a right arrow shape formed by four nonzero simple roots and a $ 10 $-multiple nonzero root of $ W_{[3,2,2,1]}(z) $.

\section{{Asymptotic} analysis of RW patterns with multiple large internal parameters}\label{Sec_rwp}

In Sec. \ref{Sec2}, we assumed that the parameters $  d_{l,(n+1)j}=0 $ $ (1\leq l\leq n) $ for the vector RW solution \eqref{horw} of $n$-NLSE \eqref{nNLSE}, which implies that $\mathbf{q}^{\mathcal{N}}(x,t)$ contains the arbitrary internal complex parameters $\{d_{l,j}\}_{1\leq l \leq n, j\in \Lambda_{l}}$ with the sets
\begin{equation}\label{setlam}
	\Lambda_{l}=\{i| 1\leq i \leq (n+1)(N_{l}-1)+l, (n+1) \nmid i\}.
\end{equation}
When $n=1$ and the multiple internal parameters $d_{l,j}$ become large, the high-order RW solutions of the scalar NLSE exhibit patterns corresponding to the root structures of the Adler--Moser polynomials \cite{yangb2024, linh2024c}. In this section, we will discuss vector RW patterns for the vector RW solutions $\mathbf{q}^{\mathcal{N}}(x,t)$ \eqref{horw} of $n$-NLSE \eqref{nNLSE} and investigate their asymptotic behaviors, where these patterns, characterized by multiple large internal parameters, correspond to the root structures of the GMAM polynomials. We categorize the RW patterns of the $n$-NLSE into two main types: NMR-type and MR-type, corresponding to the NMR structure and MR structure of the GMAM polynomials, respectively.

\subsection{Asymptotics of NMR-type RW pattern}
 
Let the internal large parameters $\{d_{l,j}\}_{1\leq l \leq n, j\in \Lambda_{l}}$ in the $\mathcal{N}$-order vector RW solution $\mathbf{q}^{\mathcal{N}}(x,t) $ \eqref{horw} of the $n$-NLSE be the following form
\begin{equation}\label{dj1}
    d_{l,j}=\kappa_{l,j}A^{j},
\end{equation}
where $A\gg 1 $ is a large positive constant, $\{\kappa_{l,j}\}_{1\leq l \leq n, j\in \Lambda_{l}}$ are free complex constants (not all of which are zero). Then, when all roots of the GMAM polynomial $ W_{\mathcal{N}}(z) $ with parameters $\{\kappa_{l,j}\}_{1\leq l \leq n, j\in \Lambda_{l}}$ are simple, we obtain the NMR-type RW patterns of $n$-NLSE \eqref{nNLSE}. In the following theorem, we analyze the asymptotic behaviors of these RW patterns.

\begin{theorem}\label{theo_nmrp}
Let all roots of the GMAM polynomial $W_{\mathcal{N}}(z)$ with free parameters $\{\kappa_{l,j}\}_{1\leq l \leq n, j\in \Lambda_{l}}$ be simple, and let the internal large parameters $\{d_{l,j}\}_{1\leq l \leq n, j\in \Lambda_{l}}$ of the $\mathcal{N}$-order vector RW solution $\mathbf{q}^{\mathcal{N}}(x,t)$ be defined by Eq. \eqref{dj1}. Then, the NMR-type RW pattern of $q_{k}^{\mathcal{N}}(x,t)$ \eqref{horw} asymptotically splits into $\Gamma$ fundamental first-order RWs, expressed as $\hat{q}_{k}^{[1]}(x-\bar{x}_{0},t-\bar{t}_{0}) q_{k}^{[0]}(x,t)$, where $q_{k}^{[0]}(x,t)$, $\hat{q}_{k}^{[1]}(x,t)$, and $\Gamma$ are given in Eqs. \eqref{seed}, \eqref{q1}, and \eqref{degpol}, respectively. Here, $ (\bar{x}_{0}, \bar{t}_{0}) $ is defined by
\begin{equation}\label{xt01}
	\begin{aligned}
		&\ii \chi^{[1]}(\bar{x}_{0}+\chi_{0} \bar{t}_{0})=\bar{z}_{0}A -\Delta(\bar{z}_{0}),
	\end{aligned}
\end{equation}
$\Delta(\bar{z}_{0}) = \cO(1)$ is a constant defined by Eq. \eqref{Delz0}, $\bar{z}_{0}$ is a simple root of $W_{\mathcal{N}}(z)$, and $\chi_{0}$ and $\chi^{[1]}$ are given in Eqs. \eqref{chi0} and \eqref{lamchi}, respectively. 

For $\sqrt{(x-\bar{x}_{0})^{2}+(t-\bar{t}_{0})^{2}} = \cO(1)$ and $A\gg 1$, the RW solution $q_{k}^{\mathcal{N}}(x,t)$ admits the following asymptotics:
\begin{equation}\label{asym_nmrp}
    q_{k}^{\mathcal{N}}(x,t) = {\hat{q}_{k}^{[1]}}(x-\bar{x}_{0}, t-\bar{t}_{0}) q_{k}^{[0]}(x,t) +\cO(A^{-1}), \quad 1\leq k \leq n.
\end{equation}

Moreover, as $ A\rightarrow \infty $ and $ (x,t) $ is far from the positions of all $ (\bar{x}_{0}, \bar{t}_{0}) $, the vector RW solution $ \mathbf{q}^{\mathcal{N}}(x,t) $ \eqref{horw} asymptotically approaches the vector plane wave background $ \mathbf{q}^{[0]}(x,t) $ \eqref{seed}.
\end{theorem}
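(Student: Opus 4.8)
The plan is to zoom in near each predicted center $(\bar x_0,\bar t_0)$ and show that the tau determinants in \eqref{horw} collapse onto those of a single first-order RW. I would write $x=\bar x_0+\tilde x$, $t=\bar t_0+\tilde t$ with $\tilde x,\tilde t=\cO(1)$, noting that \eqref{xt01} forces $\bar x_0,\bar t_0=\cO(A)$. Inserting $d_{l,j}=\kappa_{l,j}A^{j}$ into the generating function \eqref{schur} and rescaling the spectral variable as $w=A\varepsilon$, I would use $\alpha_1=\ii\chi^{[1]}$, $\beta_1=\chi_0\chi^{[1]}$ (so that $\alpha_1 x+\beta_1\ii t=\ii\chi^{[1]}(x+\chi_0 t)$) together with \eqref{xt01} to show that the coefficient of $w$ in the exponent of $\sum_m S_m(\mathbf{x}^{[l,+]})A^{-m}w^{m}$ tends to $\bar z_0+\kappa_{l,1}$, while the coefficient of $w^{j}$ with $(n+1)\nmid j$ tends to $\kappa_{l,j}$ and all remaining coefficients are $\cO(A^{-1})$. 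Matching with \eqref{schpj} then yields
\begin{equation}
  S_m(\mathbf{x}^{[l,+]})=A^{m}\bigl[\varphi_{l,m}(\bar z_0)+\cO(A^{-1})\bigr],
\end{equation}
together with its conjugate for $\mathbf{x}^{[l,-]}$; because $w$ acts as $\partial_z$ on \eqref{schpj} (recall $\varphi_{l,j}'=\varphi_{l,j-1}$), the $\cO(A^{-1})$ piece is precisely $A^{-1}\ii\chi^{[1]}(\tilde x+\chi_0\tilde t)\,\partial_z\varphi_{l,m}(\bar z_0)$ once the pure constants are absorbed into $\Delta(\bar z_0)$ from \eqref{Delz0}.

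Next I would substitute these expansions into $\det(\mathbf{M}^{[0]})$. In the convolution over $v$ in \eqref{tauij1} the weight is $t:=C_1A^{-2}\to0$, so after factoring $A^{(n+1)(i-1)+l_1}$ from each row and $A^{(n+1)(j-1)+l_2}$ from each column the entry becomes $\sum_v t^{v}\,\overline{\varphi_{l_1,I-v}(\bar z_0)}\,\varphi_{l_2,J-v}(\bar z_0)$ with $I,J$ the global indices, i.e.\ a product $\overline{\mathbf{A}}\,\mathrm{diag}(t^{v})\,\mathbf{B}^{\T}$. The Cauchy--Binet formula then expands $\det(\mathbf{M}^{[0]})$ over $N$-subsets of the $v$-indices; since $\varphi_{l,J-v}=\partial_z^{v}\varphi_{l,J}$, the minimal subset $\{0,\dots,N-1\}$ returns exactly the mixed Wronskian of \eqref{gswhp} (the block arrangement of \eqref{msmat1} is designed precisely for this), giving the leading term $t^{N(N-1)/2}|\mathcal{W}(\bar z_0)|^{2}$. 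As $\bar z_0$ is a root this vanishes, and I would keep the first correction: restoring the shift gives $\mathcal{W}(\bar z_0+\eta)=\mathcal{W}'(\bar z_0)\eta+\cO(\eta^{2})$ with $\eta=A^{-1}\ii\chi^{[1]}(\tilde x+\chi_0\tilde t)$, while the nearest off-diagonal subset $\{0,\dots,N-2,N\}$ contributes $t\,|\mathcal{W}'(\bar z_0)|^{2}$ via the derivative-of-Wronskian identity $\det[\partial_z^{v}\varphi]_{v\in\{0,\dots,N-2,N\}}=\mathcal{W}'$. Because $|\eta|^{2}$ and $t=C_1A^{-2}$ are \emph{both} $\cO(A^{-2})$, they combine into
\begin{equation}
  \det(\mathbf{M}^{[0]})=t^{N(N-1)/2}|\mathcal{W}'(\bar z_0)|^{2}\bigl(|\eta|^{2}+t\bigr)\bigl(1+\cO(A^{-1})\bigr),
\end{equation}
and a short computation gives $|\eta|^{2}+t=A^{-2}|\chi^{[1]}|^{2}\bigl(|\tilde x+\chi_0\tilde t|^{2}+\tfrac{1}{4\Im(\chi_0)^{2}}\bigr)$, which is a constant multiple of the first-order RW denominator $\tau^{[0]}_{1,1}$ of $\mathcal{N}_1=[1,0,\dots,0]$ in the shifted coordinates. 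It is notable that the constant $C_1$ of the RW profile emerges here from the \emph{next} Cauchy--Binet term rather than from the leading one.

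The numerator $\det(\mathbf{M}^{[k]})$ is treated identically, the only change being the extra shifts $\mathbf{h}_2^{[k]}$, which enter at the same $\cO(A^{-1})$ order through $h_{2,1}^{[k]}=\chi^{[1]}/(\chi_0+b_k)$ and replace $\eta$ by $\eta-h_{2,1}^{[k]}A^{-1}$ in the column factor (and its conjugate in the row factor); this reproduces the first-order RW numerator $\tau^{[k]}_{1,1}$ and hence the $k$-dependence of $\hat q_k^{[1]}$. Forming the ratio \eqref{horw} cancels the common factor $t^{N(N-1)/2}|\mathcal{W}'(\bar z_0)|^{2}$ and yields $q_k^{\mathcal{N}}=\hat q_k^{[1]}(\tilde x,\tilde t)\,q_k^{[0]}+\cO(A^{-1})$, which is \eqref{asym_nmrp}; letting $\bar z_0$ run over the $\Gamma$ simple roots of $W_{\mathcal{N}}$ produces the $\Gamma$ centers. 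For the final, background assertion I would repeat the same scaling at a point whose rescaled coordinate $z$ stays bounded away from every root; then the leading Cauchy--Binet term $t^{N(N-1)/2}|\mathcal{W}(z)|^{2}$ is already nonzero and common to numerator and denominator, so $\det(\mathbf{M}^{[k]})/\det(\mathbf{M}^{[0]})\to1$ and $\mathbf{q}^{\mathcal{N}}\to\mathbf{q}^{[0]}$ as $A\to\infty$.

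The hard part will be making this determinant reduction rigorous and uniform. In particular I must show that the Cauchy--Binet terms beyond the diagonal and its nearest neighbour, the $v\ge1$ convolution tails, the higher $w^{j\ge2}$ Schur corrections, and the $\mathbf{h}_1$-shifts all contribute only at relative order $\cO(A^{-1})$ on compact $(\tilde x,\tilde t)$-sets, so that the two $\cO(A^{-2})$ contributions $|\eta|^{2}$ and $t$ really are the first nonvanishing ones and assemble into the \emph{exact} first-order RW tau functions (including the constant $C_1$, not merely a term of the right magnitude). The bookkeeping that pairs the holomorphic column factor $\mathcal{W}(\bar z_0+\eta)$ with the conjugate row factor $\overline{\mathcal{W}(\bar z_0+\eta)}$ when $\bar z_0$ is complex, together with verifying that $\Delta(\bar z_0)$ in \eqref{Delz0} absorbs exactly the residual constants, is the most delicate step.
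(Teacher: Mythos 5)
Your proposal follows the same skeleton as the paper's proof in Sec.~\ref{subsec_proofnmrp}: the Cauchy--Binet expansion of $\det(\mathbf{M}^{[s]})$, the two dominant index sets $(0,1,\ldots,N-1)$ and $(0,1,\ldots,N-2,N)$, Taylor expansion of $W_{\mathcal{N}}$ at the simple root, and the recombination into $|W_{\mathcal{N}}'(\bar z_0)|^{2}\bigl(|\eta|^{2}+C_1A^{-2}\bigr)$; your product-matrix formulation is only a cosmetic variant of the paper's bordered determinant \eqref{msij1}. However, one step of your plan would fail as stated: the claim that ``the $\mathbf{h}_1$-shifts all contribute only at relative order $\cO(A^{-1})$'' and may therefore be discarded. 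That is true of the leading Cauchy--Binet term, but the leading term vanishes at the root, and in the surviving subleading term the $\mathbf{h}_1$-shifts enter at exactly the order you keep. Concretely, in the index set $(0,\ldots,N-1)$ the row with $v=N-1$ carries the argument shift $(N-1)\mathbf{h}_1$ (cf. \eqref{tauij1}, \eqref{vuspm}); expanding $S_m(\mathbf{x}^{[l,+]}+v\mathbf{h}_1)$ produces a correction $v\,h_{1,1}A^{-1}\varphi_{l,m-1}$, and in the determinant only the last row's correction survives (for rows $v<N-1$ the differentiated row duplicates the row below it), giving a term $(N-1)\,h_{1,1}A^{-1}c_{\mathcal{N}}^{-1}W_{\mathcal{N}}'(\bar z_0)$ of the same size as your $\eta\,W_{\mathcal{N}}'(\bar z_0)$ and $\Delta(\bar z_0)$ contributions. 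This is exactly the mechanism by which you correctly retain $h_{2,1}^{[k]}$ in the numerator (that shift is row-independent, so it too survives only through the last row); keeping $\mathbf{h}_2^{[k]}$ while dropping $v\mathbf{h}_1$ is inconsistent. The paper does not drop this term: it is the origin of the $h_{1,1}$ appearing in $u_1^{[s,\pm]}$ in Eqs. \eqref{homsasy2}--\eqref{uspm1}.

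The consequence is quantitative, not cosmetic. With the omitted term restored, your bracket becomes $|\eta+(N-1)h_{1,1}A^{-1}|^{2}+C_1A^{-2}$ rather than $|\eta|^{2}+C_1A^{-2}$, i.e., every predicted center is displaced by an $\cO(1)$ amount proportional to $(N-1)/\bigl(2\Im(\chi_0)\bigr)$ along $x$ (since $h_{1,1}=\ii\chi^{[1]}/(2\Im(\chi_0))$, the shift of $\tilde x+\chi_0\tilde t$ is real). Because Theorem \ref{theo_nmrp} asserts an $\cO(A^{-1})$ error on disks of radius $\cO(1)$ around $(\bar x_0,\bar t_0)$ --- which is precisely why the $\cO(1)$ correction $\Delta(\bar z_0)$ is kept in \eqref{xt01} rather than being absorbed into the error --- an unaccounted $\cO(1)$ displacement defeats the claimed accuracy. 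So you must carry the $v\mathbf{h}_1$ contribution through the single-row-replacement expansion exactly as you plan to do for $\mathbf{h}_2^{[k]}$; similarly, your remark that the $x_2$-induced constants amount to a shift of $z$ is imprecise, since they multiply $\varphi_{l,m-2}$ rather than $\varphi_{l,m-1}$ and can only be traded for $\Delta(\bar z_0)$ through the two-determinant object $W^{[2]}_{\mathcal{N}}$ of \eqref{wnp2} at the level of the full determinant --- a point the paper handles explicitly and you correctly flag as delicate but leave unresolved.
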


The proof of this Theorem is presented in Sec.  \ref{subsec_proofnmrp}.

Note that when the large parameter \( A\rightarrow \infty \), the influence of the constant term \( \Delta(\bar{z}_{0})\) on the position of the first-order RW in the RW patterns becomes negligible. Furthermore, based on the asymptotics of the vector RW solution $ \mathbf{q}^{\mathcal{N}}(x,t) $ in Theorem \ref{theo_nmrp}, we provide the following asymptotic expression for $ \left|q_{k}^{\mathcal{N}}(x,t)\right| $:
\begin{equation}\label{asym_nmrp2}
	\left|q_{k}^{\mathcal{N}}(x,t)\right| =a_{k}+ \sum_{(\bar{x}_{0}, \bar{t}_{0})} \left(\left|{q}^{[1]}_{k}(x-\bar{x}_{0}, t-\bar{t}_{0})\right|-a_{k} \right) +\cO(A^{-1}), \quad 1\leq k\leq n,
\end{equation}
where $ (\bar{x}_{0}, \bar{t}_{0}) $ are defined by Eq. \eqref{xt01} with $ \bar{z}_{0} $ traversing $ \Gamma $ simple roots of $W_{\mathcal{N}}(z)$, and $ a_{k} $ and $ \Gamma $ are given in Eqs. \eqref{abk} and \eqref{degpol}, respectively.

\subsection{Asymptotics of MR-type RW pattern}

This subsection will discuss the MR-type patterns of the high-order vector RW solution $ \mathbf{q}^{\mathcal{N}}(x,t) $ \eqref{horw} for $ n $-NLSE \eqref{nNLSE}, which correspond to MR structures of the GMAM polynomials $W_{\mathcal{N}}(z)$. The MR structures of $W_{\mathcal{N}}(z)$ are complex and diverse due to the influence of various free complex parameters $\{\kappa_{l,j}\}_{1\leq l \leq n, j\in \Lambda_{l}}$. Therefore, we primarily focus on a specific class of MR-type RW patterns, which corresponds to a particular MR structure for $W_{\mathcal{N}}(z)$, i.e., one with a \(\Gamma_{0}\)-multiple root and \(\Gamma-\Gamma_{0}\) simple roots. The values of $ \Gamma_{0} $ and $ \Gamma $ are given in Theorem \ref{theo_mrs} and Eq. \eqref{degpol}, respectively. Then, we can extend the asymptotic analysis of this specific MR-type RW pattern to other MR-type RW patterns.

Assume that $ z_{0}-\kappa_{1,1}=\ii \chi^{[1]}(x_{0}+2\ii t_{0})A^{-1} $ is only one $ \Gamma_{0} $-multiple root of $W_{\mathcal{N}}(z)$ with free parameters $ \{\kappa_{l,j}\}_{j\in \Lambda} $ and $ \Gamma_{0} $ given in Theorem \ref{theo_mrs}. For the high-order vector RW solution $ \mathbf{q}^{\mathcal{N}}(x,t) $ \eqref{horw}, when the internal large parameters $\{d_{l,j}\}_{1\leq l \leq n, j\in \Lambda_{l}}$ are defined by
\begin{equation}\label{dj2}
	d_{l,j}=
	\begin{cases}
		\kappa_{1,1}A, & j=1\\
		\kappa_{l,j}A^{j} +\hat{\kappa}_{j}, & j>1, \quad \hat{\kappa}_{j}=-(\alpha_{j}x_{0}+\ii \beta_{j}t_{0}), \quad j\in \Lambda, \quad A\gg 1,
	\end{cases} 
\end{equation}
with the parameters $ (\alpha_{j}, \beta_{j}) $ defined by Eq. \eqref{xpm2}, we construct the specific MR-type RW patterns. 
The structures of these patterns are divided into two different regions: the simple-root region and the multiple-root region, which correspond to the simple roots and multiple root of the relevant polynomial $W_{\mathcal{N}}(z)$, respectively. Thus, the NMR-type RW patterns can be regarded as having only a simple-root region, without a multiple-root region. Then, we present their asymptotics in the following theorem. 

\begin{theorem}\label{theo_mrp}
	Suppose that the integer index vector $ \mathcal{N}= [N_{1}, N_{2}, \cdots, N_{n}] $ satisfies $ N_{1}\geq N_{2}\geq \ldots \geq N_{n} \geq 0  $ and $ \sum_{i=1}^{n} N_{i}=N$, the internal large parameters $ \{d_{l,j}\}_{1\leq l \leq n, j\in \Lambda_{l}} $ in the $ \mathcal{N} $-order vector RW solution $ \mathbf{q}^{\mathcal{N}}(x,t) $ \eqref{horw} of $ n $-NLSE \eqref{nNLSE} are defined by Eq. \eqref{dj2}, and $ {z}_{0}-\kappa_{1,1}=\ii \chi^{[1]}({x}_{0}+\chi_{0} {t}_{0})A^{-1} $ is only one $ \Gamma_{0} $-multiple root of the polynomial $W_{\mathcal{N}}(z)$ with the free parameters \eqref{mmrp} and $ \Gamma_{0} $ defined in Theorem \ref{theo_mrs}. Then, the vector RW solution $ \mathbf{q}^{\mathcal{N}}(x,t) $ admits the following asymptotics:
	\begin{enumerate}
		
		\item[(1)] In the simple-root region with $ \sqrt{(x-{x}_{0})^{2}+(t-{t}_{0})^{2}} > \cO(A) $, the specific MR-type RW pattern of $ q_{k}^{\mathcal{N}}(x,t) $ $ (1\leq k\leq n) $ asymptotically splits into $ \Gamma-\Gamma_{0} $ first-order RWs, expressed as $\hat{q}_{k}^{[1]}(x-\bar{x}_{0},t-\bar{t}_{0}) q_{k}^{[0]}(x,t)$, where $ \Gamma $ is given in Eq. \eqref{degpol}, and follows the same asymptotic expressions as given in Eq. \eqref{asym_nmrp}. Here, $ (\bar{x}_{0}, \bar{t}_{0}) $ is defined by Eq. \eqref{xt01}, where $ \Delta(\bar{z}_{0}) $ is replaced by $ {\Delta} (\bar{z}_{0}) +\hat{\kappa}_{2}A^{-1} \frac{W_{\mathcal{N}}^{[2]}(\bar{z}_{0})}{W_{\mathcal{N}}'(\bar{z}_{0})} $,
		$ \bar{z}_{0} $ is a simple root of $W_{\mathcal{N}}(z)$, and $ W_{\mathcal{N}}^{[2]}(\bar{z}_{0}) $ is defined by Eq. \eqref{wnp2}.
		
		\item[(2)] In the multiple-root region with $ \sqrt{(x-{x}_{0})^{2}+(t-{t}_{0})^{2}} \leq \cO(A) $, the specific MR-type RW pattern of $ q_{k}^{\mathcal{N}}(x,t) $ $ (1\leq k\leq n) $ asymptotically approaches an $ \hat{\mathcal{N}} $-order RW, expressed as $ \hat{q}_{k}^{\hat{\mathcal{N}}}(x-{x}_{0}, t-{t}_{0}) q^{[0]}_{k}(x,t) $, where $ \hat{\mathcal{N}}=\left[ \hat{N}_{1}, \hat{N}_{2}, \cdots, \hat{N}_{n}\right] $ with $ \hat{N}_{l} $ $ (1\leq l\leq n) $ defined by Eq. \eqref{gamma0}, $ q^{[0]}_{k}(x,t) $ is given in Eq. \eqref{seed}, and $ \hat{q}_{k}^{\hat{\mathcal{N}}} (x-{x}_{0}, t-{t}_{0}) ={q}_{k}^{\hat{\mathcal{N}}} (x-{x}_{0},t-{t}_{0}) \left( q_{k}^{[0]}(x-{x}_{0},t-{t}_{0})\right)^{-1} $ with the internal parameters $ d_{l,j}=h_{1,j} $ $ (1\leq l\leq n, j\geq 1) $ given in Eq. \eqref{xpm2}. In addition, when $ \sqrt{(x-{x}_{0})^{2} +(t-{t}_{0})^{2}} = \cO(1)$ and $A\gg 1$, $ q_{k}^{\mathcal{N}}(x,t) $ exists the following asymptotic expression:
		\begin{equation}\label{asym_mrp1}
			q_{k}^{\mathcal{N}}(x,t) =\hat{q}_{k}^{\hat{\mathcal{N}}}(x-{x}_{0}, t-{t}_{0}) q^{[0]}_{k}(x,t) +\cO(A^{-1}).
		\end{equation}

		\item[(3)] When $ A\rightarrow \infty $ and $ (x,t) $ is far from the positions of all $ (x_{0}, t_{0}) $ and $ (\bar{x}_{0}, \bar{t}_{0}) $, the vector RW solution $ \mathbf{q}^{\mathcal{N}}(x,t) $ asymptotically approaches the vector plane wave background $ \mathbf{q}^{[0]}(x,t) $ \eqref{seed}.
		
	\end{enumerate}
\end{theorem}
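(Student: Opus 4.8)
The plan is to handle the simple-root and multiple-root regions separately, gluing them through the large-parameter balance already used in Theorem~\ref{theo_nmrp}: the leading relation $\ii\chi^{[1]}(x+\chi_0 t)=zA+\cO(1)$ identifies a point $(x,t)$ with a value $z$ of the polynomial variable and turns the Schur-polynomial entries of the blocks in \eqref{msmat1} into rescaled evaluations of the special Schur polynomials $\varphi_{l,j}$ that build $W_{\mathcal{N}}(z)$. The key preliminary remark is that the parameter choice \eqref{dj2} is engineered to pin the reduced RW to $(x_0,t_0)$: the shift $\hat{\kappa}_j=-(\alpha_j x_0+\ii\beta_j t_0)$ recenters the higher Schur variables, $x_j^{[l,+]}=\alpha_j(x-x_0)+\ii\beta_j(t-t_0)+\kappa_{l,j}A^j$, while the $j=1$ variable carries the large positional shift through $\kappa_{1,1}A$ together with the multiple-root condition $z_0-\kappa_{1,1}=\ii\chi^{[1]}(x_0+\chi_0 t_0)A^{-1}$.

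For part~(1) I would re-run the proof of Theorem~\ref{theo_nmrp} with these modified parameters. Since $\kappa_{l,j}A^j$ still dominates $\hat{\kappa}_j=\cO(1)$, the leading-order factorization of $\det(\mathbf{M}^{[s]})$ about each simple root $\bar z_0$ is unchanged, so the pattern again splits into first-order RWs placed by \eqref{xt01}. The only new effect appears at the next order: expanding $W_{\mathcal{N}}$ to first order about $\bar z_0$ and isolating the $\hat{\kappa}_2$ contribution produces the stated correction of the position constant, $\Delta(\bar z_0)\mapsto\Delta(\bar z_0)+\hat{\kappa}_2 A^{-1}W_{\mathcal{N}}^{[2]}(\bar z_0)/W_{\mathcal{N}}'(\bar z_0)$. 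This step is essentially bookkeeping layered on the NMR argument.

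The heart of the proof is part~(2). In the local coordinates $(x-x_0,t-t_0)=\cO(1)$ the balance places $z$ within $\cO(A^{-1})$ of the multiple root $z_0-\kappa_{1,1}$, and the free parameters \eqref{mmrp}, with $\kappa_{l,i}=z_0^i/i$ for $i>1$ and a common $\kappa_{l,1}$, are chosen precisely so that the generating exponentials in \eqref{schpj} degenerate. I would expand the Schur entries of \eqref{msmat1} in powers of $A^{-1}$ about $(x_0,t_0)$ and perform elementary row and column operations to extract a scalar factor (a suitable power of $A$, which cancels in the ratio $\det(\mathbf{M}^{[k]})/\det(\mathbf{M}^{[0]})$), exhibiting $\det(\mathbf{M}^{[s]})$ as that factor times the $(N-1)$-dimensional determinant of the $\hat{\mathcal{N}}$-order RW solution centered at $(x_0,t_0)$, with internal parameters $d_{l,j}=h_{1,j}$ from \eqref{xpm2}, plus strictly subdominant terms. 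The structural input is Theorem~\ref{theo_mrs}: the index reduction \eqref{gamma0} together with the ordering $N_1\ge\cdots\ge N_n$ prescribes exactly which block $M^{[s;i,i]}$ sheds one row and one column, and the identity $\Gamma_0=\deg W_{\hat{\mathcal{N}}}$ guarantees that the surviving determinant is genuinely the $\hat{\mathcal{N}}$-order RW, giving \eqref{asym_mrp1}.

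The main obstacle will be establishing this collapse of the $N\times N$ determinant to the $(N-1)$-dimensional reduced one uniformly in the multiple-root region: one must verify that the row/column dependence created by the coalescing roots is exactly the reduction dictated by \eqref{gamma0} and that the residual minor is nondegenerate, so that no unexpected further lowering of the RW order occurs. This rests on a Schur-polynomial identity certifying that the degrees $(n+1)(j-1)+l$ eliminated under \eqref{gamma0} are precisely those rendered linearly dependent by the choice \eqref{mmrp}. The global bookkeeping $\Gamma_0+(\Gamma-\Gamma_0)=\Gamma$ --- the multiple-root region absorbing $\Gamma_0$ of the pattern while the simple-root region carries the remaining $\Gamma-\Gamma_0$ first-order RWs --- provides a consistency check. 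Part~(3) is then immediate: away from every $(\bar x_0,\bar t_0)$ and from $(x_0,t_0)$, all non-constant Schur entries are polynomially subdominant in $A$, so $\det(\mathbf{M}^{[k]})/\det(\mathbf{M}^{[0]})\to1$ and $\mathbf{q}^{\mathcal{N}}\to\mathbf{q}^{[0]}$.
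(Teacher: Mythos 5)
Your strategy coincides with the paper's own: recenter at $(x_{0},t_{0})$, expand the Schur entries about the multiple root, and use the structure of Theorem~\ref{theo_mrs} to identify the reduced order; parts (1) and (3) are indeed handled in the paper exactly as you propose, by rerunning the argument of Theorem~\ref{theo_nmrp}. The problem is that in part (2) you have \emph{named} the decisive step rather than closed it. The collapse of the $N\times N$ determinant to the $(N-1)$-dimensional one is precisely what the paper's proof consists of: it first factors every entry through the generating-function identity $S_{j}(\mathbf{u}^{[s,l,+]}(v))=\sum_{k=0}^{j}S_{j-k}(\tilde{\mathbf{u}}^{[s,+]}(v))\,\varphi_{k}(z_{0}-\kappa_{1,1})A^{k}$ (Eq.~\eqref{sjpm1}), then performs explicit column eliminations whose validity rests on Proposition~\ref{prop1} --- the concrete Schur-polynomial identity you allude to but never state, proved in the paper by computing $\varphi_{l,j}(z_{0}-\kappa_{1,1})=\bigl(\sum_{i=0}^{\lfloor j/(n+1)\rfloor}f_{i}\bigr)z_{0}^{j}$ so that the relevant $2\times 2$ minors vanish --- and only then extracts the factor $A^{2\Gamma_{1}}$ and observes that the leading matrix is bordered by a column and a row with a single unit entry, permitting the expansion from size $N$ to $N-1$ (Eqs.~\eqref{msij3}--\eqref{msij4}). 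Without the identity and the explicit transformations you cannot certify which rows/columns become dependent under the choice \eqref{mmrp}, nor that the residual minor is nondegenerate and equal to the $\hat{\mathcal{N}}$-order RW determinant; your ``consistency check'' $\Gamma_{0}+(\Gamma-\Gamma_{0})=\Gamma$ is an identity in the symbols and verifies nothing.

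Two further points. First, you assert that the reduced RW carries internal parameters $d_{l,j}=h_{1,j}$ but give no mechanism for their appearance; in the paper they arise from the index shift $v\mapsto v+1$ created by the bordered expansion (which adds $\mathbf{h}_{1}$ to the Schur arguments) together with the second expansion \eqref{sjpm2} in powers of $\eta_{j}=(\alpha_{j(n+1)}\tilde{x}_{0}+\beta_{j(n+1)}\ii\tilde{t}_{0})A$, whose elimination is also what justifies the earlier claim that the parameters $d_{l,(n+1)j}$ do not affect the patterns. Second, your order count for the shifts is wrong: since $z_{0}-\kappa_{1,1}=\cO(1)$ forces $(x_{0},t_{0})=\cO(A)$, one has $\hat{\kappa}_{j}=-(\alpha_{j}x_{0}+\ii\beta_{j}t_{0})=\cO(A)$, not $\cO(1)$ (compare Eq.~\eqref{hatkap}, where $\hat{\kappa}_{2}=3A$). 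This is exactly why the correction $\hat{\kappa}_{2}A^{-1}W_{\mathcal{N}}^{[2]}(\bar{z}_{0})/W_{\mathcal{N}}'(\bar{z}_{0})$ in part (1) is of the same order as $\Delta(\bar{z}_{0})$ and must be retained; on your stated order it would be $\cO(A^{-1})$ and the correction you write down would be vacuous.
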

The proof process is outlined in Sec. \ref{subsec_proofmrp}.

\begin{remark}
	When the parameter $ \kappa_{1,1}=z_{0} $ in Theorem \ref{theo_mrp}, the $ \Gamma_{0} $-multiple root of the GMAM polynomial $W_{\mathcal{N}}(z)$ with free parameters \eqref{mmrp} is $ z=0 $. This implies that the position $ (x_{0}, t_{0}) $ of the lower-order RW in the NMR-type RW pattern is at the origin of the $ (x,t) $-plane. Moreover, the parameters $ \hat{\kappa}_{j}=0 $ $ (j>1) $ in the internal large parameters \eqref{dj2} of the vector RW solution $ \mathbf{q}^{\mathcal{N}}(x,t) $, and the terms $ {\Delta}(\bar{z}_{0}) $ in the definitions of $ (\bar{x}_{0}, \bar{t}_{0}) $ in Theorems \ref{theo_nmrp} and \ref{theo_mrp} are identical. However, when the parameter $ \kappa_{1,1}\ne z_{0} $ in Theorem \ref{theo_mrp}, the position of the lower-order RW in the MR-type RW patterns is related to the nonzero multiple root $z=z_{0}-\kappa_{1,1}$ of the polynomial $W_{\mathcal{N}}(z)$. Then, we can shift the lower-order RW in the patterns to any position on the $(x, t)$-plane other than the origin by selecting appropriate nonzero complex parameters $z_{0}$ and $ \kappa_{1,1} $.
\end{remark}

Furthermore, according to the asymptotics of the vector RW solution $ \mathbf{q}^{\mathcal{N}}(x,t) $ described in Theorem \ref{theo_mrp}, we present an asymptotic expression for $ \left|q_{k}^{\mathcal{N}}(x,t)\right| $ below:
\begin{equation}\label{asym_nmrp3}
	\left|q_{k}^{\mathcal{N}}(x,t)\right| = \left|{q}_{k}^{\hat{\mathcal{N}}}(x-{x}_{0}, t-{t}_{0})\right|+ \sum_{(\bar{x}_{0}, \bar{t}_{0})} \left(\left|{q}^{[1]}_{k}(x-\bar{x}_{0}, t-\bar{t}_{0})\right|-a_{k} \right) +\cO(A^{-1}), \quad 1\leq k\leq n,
\end{equation}
where $ a_{k} $ $(1\leq k\leq n)$ are given in Eq. \eqref{abk}, $ (\bar{x}_{0}, \bar{t}_{0}) $ are determined by Eq. \eqref{xt01} with $ \bar{z}_{0} $ traversing $ \Gamma-\Gamma_{0} $ simple roots of $W_{\mathcal{N}}(z)$, and $ {q}_{k}^{\hat{\mathcal{N}}}(x-{x}_{0}, t-{t}_{0}) $, $ \Gamma $, and $ \Gamma_{0} $ are defined by Theorem \ref{theo_mrp}.

\section{Examples of RW patterns with multiple internal large parameters}\label{Sec_exam}

This section presents examples of RW patterns for the n-NLSE \eqref{nNLSE} with multiple large parameters, including the NMR-type and MR-type patterns defined in Sec \ref{Sec_rwp}. Here, we consider the RW patterns of the {$ 3 $}-NLSE and the {$4$}-NLSE under different internal large parameter conditions, respectively. For computational convenience, we set $ \lambda_{0}=\ii $ without loss of generality.  In addition, we will provide dynamical evolution plots of these RW patterns and the predicted positions of all wave peaks in the patterns. This also numerically verifies Theorems \ref{theo_nmrp} and \ref{theo_mrp}.

\begin{enumerate}
	\item[(1).] \textbf{Cases of the $ 3 $-NLSE}
	
	For the $ 3 $-NLSE, when the parameters $ a_{k} $ and $ b_{k} $ in the seed solution \eqref{seed} are
	\begin{equation}\label{abk2}
		\begin{aligned}
		(a_{1}, a_{2},a_{3})= \left( \frac{\sqrt{2}}{2}, \frac{1}{2}, \frac{\sqrt{2}}{2} \right) , \quad (b_{1}, b_{2}, b_{3})= \left( \frac{1}{2}, 0, -\frac{1}{2}\right),
		\end{aligned}
	\end{equation}
	the characteristic equation \eqref{cheq} exists a $ 4 $-multiple root $ \chi_{0}=\frac{\ii}{2} $ at $ \lambda=\ii $. We take the arbitrary parameters $ \lambda^{[1]}=2\ii $ and $ \chi^{[1]}=-\ii $. Then, when the internal large parameters $ \{d_{l,j}\}_{1\leq l \leq n, j\in \Lambda_{l}} $ are defined by Eq. \eqref{dj1} with $ \kappa_{l,j}=-l $ or $ (4-l)\ii^{j} $, and the large constant $ A=20 $, the NMR-type patterns of $ [4,3,2] $-order vector RW solution $ \mathbf{q}^{[4,3,2]}(x,t) $ for the $ 3 $-NLSE can be obtained, as shown in the first two rows of Fig. \ref{Fig3}. It can be observed that each of the NMR-type RW patterns consists of $ 20 $ first-order RWs, which respectively form a $ 180 $-degree sector and a jellyfish-like shape. 
	
	Furthermore, When the internal large parameters are $ d_{l,j}=\frac{1}{j}A^{j} $ $ (1\leq l \leq n, j\in \Lambda_{l}) $ with the large constant $ A=20 $, the MR-type patterns of $ [4,3,2] $-order vector RW solution $ \mathbf{q}^{[4,3,2]}(x,t) $ for the $ 3 $-NLSE are generated. As shown in Fig. \ref{Fig3} (vii-ix), each of these RW Patterns contains five first-order RWs and a lower-order RW of $ [3,3,2] $-order with the peaks forming a right double-arrow shape. In particular, this lower-order rogue wave is located at the origin of the $ (x,t) $-plane.

	\item[(2).] \textbf{Cases of the $ 4 $-NLSE}
	
	For the $ 4 $-NLSE, when the parameters $ a_{k} $ and $ b_{k} $ in the seed solution \eqref{seed} are
	\begin{equation}\label{abk3}
		\begin{aligned}
			&	(a_{1}, a_{2}, a_{3}, a_{4})= \left(\frac{2}{5}\, \csc \left( \frac{\pi}{5} \right), \frac{2}{5}\, \csc \left( \frac{2\pi}{5} \right), \frac{2}{5}\, \csc \left( \frac{2\pi}{5} \right), \frac{2}{5}\, \csc \left( \frac{\pi}{5} \right) \right) , \\
			&(b_{1}, b_{2}, b_{3}, b_{4})= \left(\frac{2}{5}\, \cot \left( \frac{\pi}{5} \right), \frac{2}{5}\, \cot \left( \frac{2\pi}{5} \right), -\frac{2}{5}\, \cot \left( \frac{2\pi}{5} \right), -\frac{2}{5}\, \cot \left( \frac{\pi}{5} \right) \right),
		\end{aligned}
	\end{equation}
    the characteristic equation \eqref{cheq} has a $ 5 $-multiple root $ \chi_{0}=\frac{2\ii}{5} $ at $ \lambda=\ii $. Meanwhile, we take the arbitrary parameters $ \lambda^{[1]}=-2\ii $ and $ \chi^{[1]}=-\frac{4}{5}\ii $. Then, when the internal large parameters $ \{d_{l,j}\}_{1\leq l \leq n, j\in \Lambda_{l}} $ are defined by Eq. \eqref{dj1} with $\kappa_{l,j}=$ $ l $ or $ \frac{l}{2}\ii^{j} $, and the large constant $ A=20 $, the NMR-type patterns of $ [3,2,2,1] $-order vector RW solution $ \mathbf{q}^{[3,2,2,1]}(x,t) $ for the $ 4 $-NLSE is obtained. Each of these NMR-type RW patterns consists of 14 first-order RWs, whose peaks form two distinct thumbtack-like shapes, as shown in Fig. \ref{Fig4} (i-iv) and (v-viii), respectively.

	Moreover, when the internal large parameters $ \{d_{l,j}\}_{1\leq l \leq n, j\in \Lambda_{l}} $ are defined by Eq. \eqref{dj2} with
	\begin{equation}\label{hatkap}
	\begin{aligned}
		&\kappa_{1,1}=-2, \quad \kappa_{l,j}= \frac{1}{j} , \quad j>1, \quad  A=20,\\
		&\{ \hat{\kappa}_{2}, \hat{\kappa}_{3}, \hat{\kappa}_{4}, \hat{\kappa}_{6}, \hat{\kappa}_{7}, \hat{\kappa}_{8}, \hat{\kappa}_{9} \}
		=\left\lbrace 3A, -3A, 3A, \frac{12}{5}A, -\frac{9}{5}A, \frac{6}{5}A, -\frac{3}{5}A\right\rbrace ,
	\end{aligned}
	\end{equation}
	the MR-type patterns of $ [3,2,2,1] $-order vector RW solution $ \mathbf{q}^{[3,2,2,1]}(x,t) $ for the $ 4 $-NLSE is constructed. As shown in Fig. \ref{Fig4} (ix-xii), each of the RW patterns contains four scattered first-order RWs and a lower-order RW of $ [2,2,2,1] $-order. The peaks in each pattern form a right-arrow shape, and the lower-order RW is located far from the origin of the $ (x,t) $-plane. 

\end{enumerate}

\begin{figure*}[!htbp]
	\centering
	\includegraphics[width=0.8\textwidth]{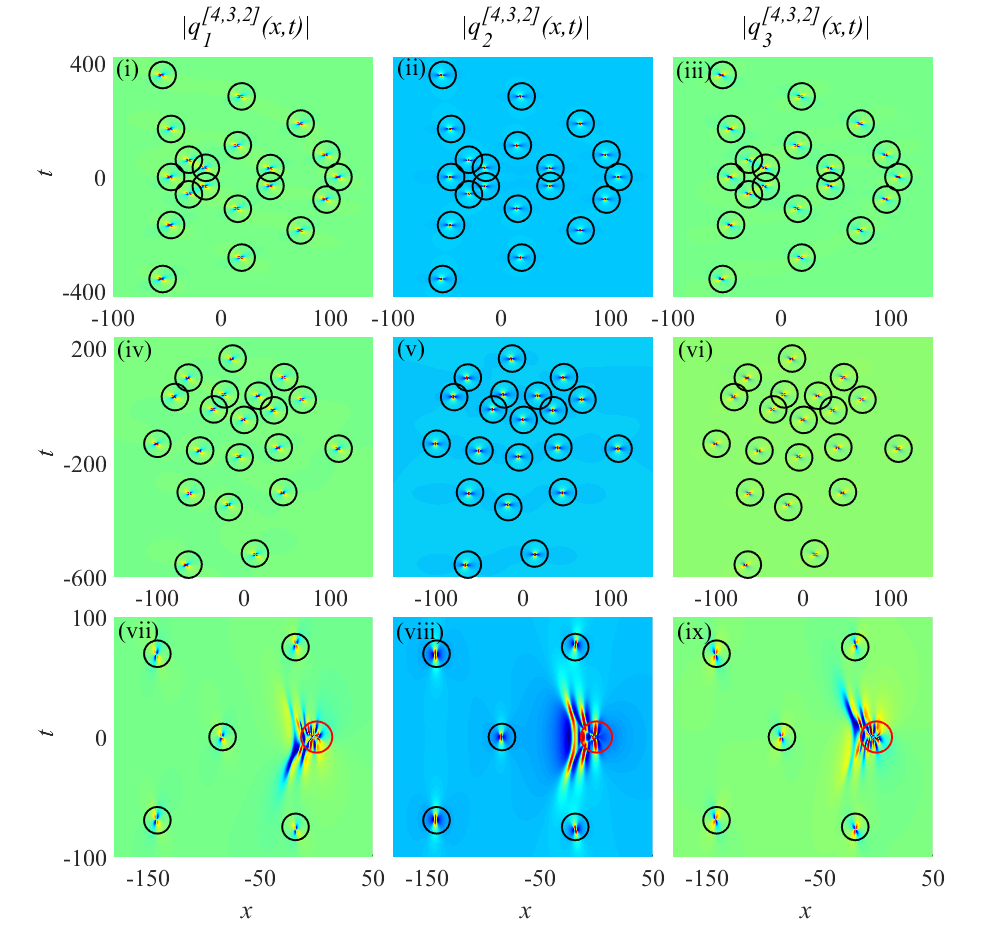}
	\caption{The $ [4,3,2] $-order vector RW solution $ \mathbf{q}^{[4,3,2]}(x,t) $ of the $ 3 $-NLSE is determined by the internal large parameters $ d_{l,j}=\kappa_{l,j}A^{j} $ with $ 1\leq l\leq n, j\in \Lambda_{l} $ and $ A=20 $, where $ \kappa_{l,j}=-l$ in (i-iii), $ \kappa_{l,j}=(4-l)\ii^{j} $ in (iv-vi), and $ \kappa_{l,j}=\frac{1}{j} $ in (vii-ix), respectively. From the first column to the third column, these are $ |{q}_{1}^{[4,3,2]}(x,t)| $, $ |{q}_{2}^{[4,3,2]}(x,t)| $, and $ |{q}_{3}^{[4,3,2]}(x,t)| $, respectively. These black circles represent the predicted position of each first-order RW in the simple-root region, and these red circles represent the predicted positions of the lower-order RWs in the multiple-root region.}
	\label{Fig3}
\end{figure*}

Based on Theorems \ref{theo_nmrp} and \ref{theo_mrp}, we also provide the predicted positions of all wave peaks of the RW patterns for the $ 3 $-NLSE in Fig. \ref{Fig3} and the $ 4 $-NLSE in Fig. \ref{Fig4}, respectively. The predicted positions in the simple-root region of the RW patterns are marked with black circles, while those in the multiple-root region are marked with red circles. Here, the NMR-type RW patterns can be regarded as having only a simple root region, without a multiple root region. The predicted positions of these peaks are asymptotically consistent with the wave peak positions in the true vector RW solutions. In addition, it is evident that the peak distributions of the RW Patterns in Figs. \ref{Fig3} and \ref{Fig4} are very similar to the root structures of the GMAM polynomials $ W_{[4,3,2]}(z) $ and $ W_{[3,2,2,1]}(z) $, as shown in Figs. \ref{Fig1} and \ref{Fig2}, respectively. They can be regarded as magnified and shifted versions of the latter to some extent.

\begin{figure*}[!htbp]
	\centering
	\includegraphics[width=\textwidth]{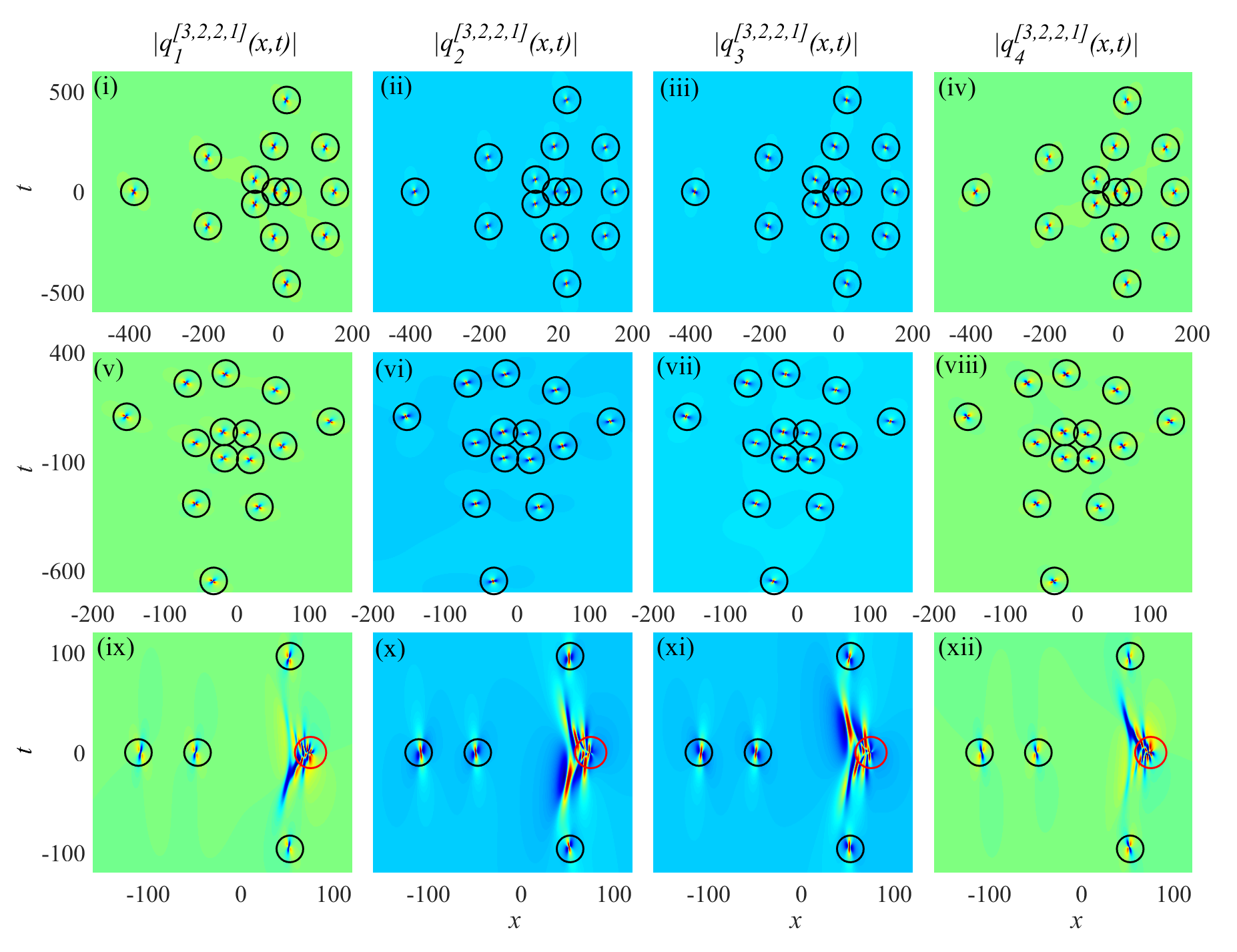}
	\caption{The $ [3,2,2,1] $-order vector RW solution $ \mathbf{q}^{[3,2,2,1]}(x,t) $ of the $ 4 $-NLSE is determined by the internal large parameters $ \{d_{l,j}\}_{1\leq l \leq n, j\in \Lambda_{l}} $ defined in Eq. \eqref{dj2} with $ A=20 $. The free parameters $ \kappa_{l,j} $ are given as follows: $ \kappa_{l,j}=l$ in (i-iiii), $ \kappa_{l,j}=\frac{l}{2}\ii^{j} $ in (v-viii), and $ \kappa_{l,j}$ is defined by Eq. \eqref{hatkap} in (vii-ix), respectively. From the first column to the fourth column, these are $ |{q}_{1}^{[3,2,2,1]}(x,t)| $, $ |{q}_{2}^{[3,2,2,1]}(x,t)| $, $ |{q}_{3}^{[3,2,2,1]}(x,t)| $, and $ |{q}_{4}^{[3,2,2,1]}(x,t)| $, respectively.
		These black circles represent the predicted position of each first-order RW in the simple-root region, and these red circles represent the predicted positions of the lower-order RWs in the multiple-root region.}
	\label{Fig4}
\end{figure*}	

\section{Proofs of the main results}\label{Sec_proof}

\subsection{Proof of Theorem \ref{theo_mrs}}\label{subsec_proofmrs}

This subsection presents the proof of Theorem \ref{theo_mrs}.

First, we will derive the multiplicity of the root $ z=z_{0}-\kappa_{1,1} $ for the GMAM polynomial $W_{\mathcal{N}}(z)$ when the integer index vector $ \mathcal{N}$ and the free parameters $ \kappa_{l,i} $ are defined by Theorem \ref{theo_mrs}. To achieve this goal, we apply the transformation $ z{\rightarrow} z + z_0-\kappa_{1,1} $, thereby reducing the problem to finding the multiplicity of the zero root of $W_{\mathcal{N}}(z+z_{0}-\kappa_{1,1})$. 

Moreover, for convenience, we denote $ \varphi_{l,j}(z) $ as $ \varphi_{j}(z) $ in the following discussion because of the definition \eqref{mmrp} of the free parameters $ \kappa_{l,i} $. Then, we introduce the new particular Schur polynomials $ \varphi_{j}(z+z_{0}-\kappa_{1,1};c) $ and the polynomials $W_{\mathcal{N}}(z+z_{0}-\kappa_{1,1};c)$, as follows:
\begin{equation}\label{npoly}
\begin{aligned}
	&\sum_{j=0}^{\infty} \varphi_{j}(z+z_{0}-\kappa_{1,1};c) \varepsilon^{j} = \exp\left( z\varepsilon+ \sum_{\substack{i=1 \\ (n+1) \nmid i}}^{\infty}\frac{ z_{0}^{i}}{i}c^{i}\varepsilon^{i} \right), \\
	&W_{\mathcal{N}}(z+z_{0}-\kappa_{1,1};c)=c_{\mathcal{N}} \mathcal{W}\left(\{\varphi_{(n+1)(j-1)+1}(z+z_{0}-\kappa_{1,1};c)\}_{j=1}^{N_{1}}, \{\varphi_{(n+1)(j-1)+2}(z+z_{0}-\kappa_{1,1};c)\}_{j=1}^{N_{2}},\right. \\ 
	& \qquad \qquad\qquad\qquad\qquad \left. \ldots, \{\varphi_{(n+1)(j-1)+n}(z+z_{0}-\kappa_{1,1};c)\}_{j=1}^{N_{n}} \right),
\end{aligned}
\end{equation}
where $ c $ is a constant, $ \varphi_{j}(z+z_{0}-\kappa_{1,1};c)=0 $ for $ j<0 $, and $ c_{\mathcal{N}} $ is defined by Eq. \eqref{gswhp}. Then, we can obtain
\begin{equation}\label{npoly2}
	\varphi_{j}(z+z_{0}-\kappa_{1,1};c) = c^{j} \varphi_{j}(\hat{z}+z_{0}-\kappa_{1,1}) , \quad
	W_{\mathcal{N}}(z+z_{0}-\kappa_{1,1};c)=c^{\Gamma}W_{\mathcal{N}}(\hat{z}+z_{0}-\kappa_{1,1})
\end{equation}
with $ \hat{z}=c^{-1}z $ and $ \Gamma $ given in Eq. \eqref{degpol}.

Assuming that each term of $ W_{\mathcal{N}}(z+z_{0}-\kappa_{1,1};c) $ takes the form $ c^{i}z^{j} $, we group the terms according to the power of $ z $ in the right equation in Eq. \eqref{npoly2}. For the exponent of $ c $ in the term $ \cO(z^{j}) $, we have
\begin{equation}\label{gij1}
	i=\Gamma-j,
\end{equation}
which indicates that the power of $ z $ in each term of $ W_{\mathcal{N}}(z+z_{0}-\kappa_{1,1};c) $ decreases as the power of $ c $ increases. Given that the multiplicity of the zero root in $W_{\mathcal{N}}(z+z_{0}-\kappa_{1,1})$ equals that of $W_{\mathcal{N}}(z+z_{0}-\kappa_{1,1};c)$, we can determine the multiplicity $ \Gamma_{0} $ of the zero root by considering the highest-power of $ c $ in $W_{\mathcal{N}}(z+z_{0}-\kappa_{1,1})$.

Next, we first consider the condition of \(N_1 > N_2 > \cdots > N_n > 0\) for the polynomial $W_{\mathcal{N}}(z+z_{0}-\kappa_{1,1};c)$. We reorder the columns of the Wronskian determinant in $ W_{\mathcal{N}}(z+z_{0}-\kappa_{1,1};c) $, arranging the subscripts of \(\varphi_{k}(z+z_{0}-\kappa_{1,1};c)\) in the first row in descending order. The transformed determinant can be expressed as
\begin{equation}\label{deter2}
	\mathcal{W}\left( H_{1}, H_{2}, \ldots, H_{n-1}, H_{n}\right), 
\end{equation}
where
\begin{equation}\label{deter2e}
\begin{aligned}
	& H_{1}= \left( \hat{\varphi}_{(n+1)(N_{1}-1)+1}, \hat{\varphi}_{(n+1)(N_{1}-2)+1}, \ldots, \hat{\varphi}_{(n+1)N_{2}+1}\right)_{1\times (N_{1}-N_{2})},\\
	& H_{2}= \left( \hat{\varphi}_{(n+1)(N_{2}-1)+2}, \hat{\varphi}_{(n+1)(N_{2}-1)+1}, \hat{\varphi}_{(n+1)(N_{2}-2)+2}, \hat{\varphi}_{(n+1)(N_{2}-2)+1}, \ldots, \hat{\varphi}_{(n+1)N_{3}+2}, \hat{\varphi}_{(n+1)N_{3}+1}\right)_{1\times 2(N_{2}-N_{3})},\\
	&\vdots\\
	& H_{n-1}= \left( \hat{\varphi}_{(n+1)(N_{n-1}-1)+n-1}, \hat{\varphi}_{(n+1)(N_{n-1}-1)+n-2}, \ldots, \hat{\varphi}_{(n+1)(N_{n-1}-1)+1}, \hat{\varphi}_{(n+1)(N_{n-1}-2)+n-1}, \hat{\varphi}_{(n+1)(N_{n-1}-2)+n-2},   \right. \\
	& \qquad\qquad \left. \ldots, \hat{\varphi}_{(n+1)(N_{n-1}-2)+1}, \ldots,\hat{\varphi}_{(n+1)N_{n}+n-1}, \hat{\varphi}_{(n+1)N_{n}+n-2}, \ldots, \hat{\varphi}_{(n+1)N_{n}+1} \right)_{1\times (n-1)(N_{n-1}-N_{n})},\\
	& H_{n}= \left( \hat{\varphi}_{(n+1)(N_{n}-1)+n}, \hat{\varphi}_{(n+1)(N_{n}-1)+n-1}, \ldots, \hat{\varphi}_{(n+1)(N_{n}-1)+1}, \hat{\varphi}_{(n+1)(N_{n}-2)+n}, \hat{\varphi}_{(n+1)(N_{n}-2)+n-1}, \ldots,  \right. \\
	& \qquad\quad \left. \hat{\varphi}_{(n+1)(N_{n}-2)+1}, \ldots,\hat{\varphi}_{n}, \hat{\varphi}_{n-1}, \ldots, \hat{\varphi}_{1} \right)_{1\times nN_{n}}
\end{aligned}
\end{equation}
with $ \hat{\varphi}_{k}={\varphi}_{k}(z+z_{0}-\kappa_{1,1};c) $. Here, we disregard the sign difference resulting from reordering the columns of the determinant because we are interested in powers of $ c $ and $z$ in the polynomial $W_{\mathcal{N}}(z+z_{0}-\kappa_{1,1};c)$.

Moreover, according to the definitions of $ \varphi_{j}(z+z_{0}-\kappa_{1,1}) $ and $ \varphi_{j}(z+z_{0}-\kappa_{1,1};c) $ in Eqs. \eqref{schpj} and \eqref{npoly}, we generate the following expansion:
\begin{equation}\label{spolyj1}
	\varphi_{j}(z+z_{0}-\kappa_{1,1};c) = \sum_{i=0}^{j} \varphi_{j-i}(z_{0}-\kappa_{1,1}) c^{j-i}\frac{z^{i}}{i!}.
\end{equation}
We expand all elements of the Wronskian determinant \eqref{deter2} by using the above formula \eqref{spolyj1}. Then, we perform a column transformation for the expanded determinant: employ the second column to eliminate the highest-order term of $ c $ in the first column. When $ N_{1}>N_{2}+1 $, the simplified first column is
\begin{equation}\label{deter3} 
	\begin{pmatrix}
		\sum_{k=0}^{(n+1)(N_{1}-2)+1} \left( \varphi_{(n+1)(N_{1}-1)+1-k} -\frac{\varphi_{(n+1)(N_{1}-1)+1}}{\varphi_{(n+1)(N_{1}-2)+1}} \varphi_{(n+1)(N_{1}-2)+1-k} \right) c^{(n+1)(N_{1}-1)+1-k} \frac{z^{k}}{k!} \\
		+\sum_{k=(n+1)(N_{1}-2)+2}^{(n+1)(N_{1}-1)+1} \varphi_{(n+1)(N_{1}-1)+1-k} c^{(n+1)(N_{1}-1)+1-k} \frac{z^{k}}{k!} \, ,\\
		\sum_{k=0}^{(n+1)(N_{1}-2)} \left( \varphi_{(n+1)(N_{1}-1)-k} -\frac{\varphi_{(n+1)(N_{1}-1)+1}}{\varphi_{(n+1)(N_{1}-2)+1}} \varphi_{(n+1)(N_{1}-2)-k} \right) c^{(n+1)(N_{1}-1)-k} \frac{z^{k}}{k!} \\
		+\sum_{k=(n+1)(N_{1}-2)+1}^{(n+1)(N_{1}-1)} \varphi_{(n+1)(N_{1}-1)-k} c^{(n+1)(N_{1}-1)-k} \frac{z^{k}}{k!} \, ,\\
		\vdots
	\end{pmatrix}
\end{equation}
with $\varphi_{k}= \varphi_{k}(z_{0}-\kappa_{1,1}) $. From Proposition \ref{prop1} in the Appendix \ref{Appe_prop1}, we find that the first column \eqref{deter3} {is} equal to
\begin{equation}\label{deter3b}
	\left( \varphi_{n}c^{n} \frac{z^{(n+1)(N_{1}-2)+2}}{[(n+1)(N_{1}-2)+2]!} +\cO(c^{n-1}), \varphi_{n}c^{n} \frac{z^{(n+1)(N_{1}-2)+1}}{[(n+1)(N_{1}-2)+1]!} +\cO(c^{n-1}), \ldots\right)^{T}.
\end{equation}

When $ N_{1}=N_{2}+1 $, the simplified first column in the expanded determinant is
\begin{equation}\label{deter3c} 
	\begin{pmatrix}
		\sum_{k=0}^{(n+1)(N_{2}-1)+2} \left( \varphi_{(n+1)(N_{1}-1)+1-k} -\frac{\varphi_{(n+1)(N_{1}-1)+1}}{\varphi_{(n+1)(N_{2}-1)+2}} \varphi_{(n+1)(N_{2}-1)+2-k} \right) c^{(n+1)(N_{1}-1)+2-k} \frac{z^{k}}{k!} \\
		+\sum_{k=(n+1)(N_{2}-1)+3}^{(n+1)(N_{1}-1)+1} \varphi_{(n+1)(N_{1}-1)+1-k} c^{(n+1)(N_{1}-1)+1-k} \frac{z^{k}}{k!} \, ,\\
		\sum_{k=0}^{(n+1)(N_{2}-1)+1} \left( \varphi_{(n+1)(N_{1}-1)-k} -\frac{\varphi_{(n+1)(N_{1}-1)+1}}{\varphi_{(n+1)(N_{2}-1)+2}} \varphi_{(n+1)(N_{2}-1)+1-k} \right) c^{(n+1)(N_{1}-1)-k} \frac{z^{k}}{k!} \\
		+\sum_{k=(n+1)(N_{2}-1)+2}^{(n+1)(N_{1}-1)} \varphi_{(n+1)(N_{1}-1)-k} c^{(n+1)(N_{1}-1)-k} \frac{z^{k}}{k!} \, ,\\
		\vdots
	\end{pmatrix}.
\end{equation} 
Now, by Proposition \ref{prop1} in the Appendix \ref{Appe_prop1}, we determine that the first column \eqref{deter3c} {is} equal to
\begin{equation}\label{deter3d}
	\left( \varphi_{n-1}c^{n-1} \frac{z^{(n+1)(N_{2}-1)+3}}{[(n+1)(N_{2}-1)+3]!} +\cO(c^{n-2}), \varphi_{n-1}c^{n-1} \frac{z^{(n+1)(N_{2}-1)+2}}{[(n+1)(N_{2}-1)+2]!} +\cO(c^{n-2}), \ldots\right)^{T}.
\end{equation}

Similarly, we continue performing \(N-2\) column transformations on the expanded determinant: starting with the second column, we successively use the elements of the following column to eliminate the highest power terms of $ c $ in the preceding column. The detailed calculation process is omitted here.

Therefore, after expansion and column transformations, the Wronskian determinant \eqref{deter2} can be simplified to
\begin{equation}\label{deter4}
	\det\left( \mathbf{H}_{1}, \mathbf{H}_{2}, \ldots, \mathbf{H}_{n} \right), 
\end{equation}
where

\[\mathbf{H}_{1}=
	\begin{pmatrix}
		c^{n}\varphi_{n} \frac{z^{(n+1)(N_{1}-2)+2}}{[(n+1)(N_{1}-2)+2]!} +\cO(c^{n-1}) & c^{n}\varphi_{n} \frac{z^{(n+1)(N_{1}-2)+1}}{[(n+1)(N_{1}-2)+1]!} +\cO(c^{n-1}) & \cdots\\
		c^{n}\varphi_{n} \frac{z^{(n+1)(N_{1}-3)+2}}{[(n+1)(N_{1}-3)+2]!} +\cO(c^{n-1}) & c^{n}\varphi_{n} \frac{z^{(n+1)(N_{1}-3)+1}}{[(n+1)(N_{1}-3)+1]!} +\cO(c^{n-1}) & \cdots\\
		\vdots & \vdots & \ddots \\
		c^{n}\varphi_{n} \frac{z^{(n+1)N_{2}+2}}{[(n+1)N_{2}+2]!} +\cO(c^{n-1}) & c^{n}\varphi_{n} \frac{z^{(n+1)N_{2}+1}}{[(n+1)N_{2}+1]!} +\cO(c^{n-1}) & \cdots\\
		c^{n-1}\varphi_{n-1} \frac{z^{(n+1)(N_{2}-1)+3}}{[(n+1)(N_{2}-1)+3]!} +\cO(c^{n-2}) & c^{n-1}\varphi_{n-1} \frac{z^{(n+1)(N_{2}-1)+2}}{[(n+1)(N_{2}-1)+2]!} +\cO(c^{n-2}) & \cdots
	\end{pmatrix}_{(N_{1}-N_{2})\times N}^{T},\]\\
\[\mathbf{H}_{2}=
	\begin{pmatrix}
		\frac{z^{(n+1)(N_{2}-1)+2}}{[(n+1)(N_{2}-1)+2]!} & \frac{z^{(n+1)(N_{2}-1)+1}}{[(n+1)(N_{2}-1)+1]!} & \cdots\\
		c^{n-1}\varphi_{n-1} \frac{z^{(n+1)(N_{2}-2)+3}}{[(n+1)(N_{2}-2)+3]!} +\cO(c^{n-2}) & c^{n-1}\varphi_{n-1} \frac{z^{(n+1)(N_{2}-2)+2}}{[(n+1)(N_{2}-2)+2]!} +\cO(c^{n-2}) & \cdots \\
		\frac{z^{(n+1)(N_{2}-2)+2}}{[(n+1)(N_{2}-2)+2]!} & \frac{z^{(n+1)(N_{2}-2)+1}}{[(n+1)(N_{2}-2)+1]!} & \cdots\\
		c^{n-1}\varphi_{n-1} \frac{z^{(n+1)(N_{2}-3)+3}}{[(n+1)(N_{2}-3)+3]!} +\cO(c^{n-2}) & c^{n-1}\varphi_{n-1} \frac{z^{(n+1)(N_{2}-3)+2}}{[(n+1)(N_{2}-3)+2]!} +\cO(c^{n-2}) & \cdots \\
			\vdots & \vdots & \ddots \\
		\frac{z^{(n+1)N_{3}+2}}{[(n+1)N_{3}+2]!} & \frac{z^{(n+1)N_{3}+1}}{[(n+1)N_{3}+1]!} & \cdots\\
		c^{n-2}\varphi_{n-2} \frac{z^{(n+1)(N_{3}-1)+4}}{[(n+1)(N_{3}-1)+4]!} +\cO(c^{n-3}) & c^{n-2}\varphi_{n-2} \frac{z^{(n+1)(N_{3}-1)+3}}{[(n+1)(N_{3}-1)+3]!} +\cO(c^{n-3}) & \cdots
	\end{pmatrix}_{2(N_{2}-N_{3})\times N}^{T},\]\\

\hspace*{3em}{\vdots}

\[\mathbf{H}_{n}=
	\begin{pmatrix}
		\frac{z^{(n+1)(N_{n}-1)+n}}{[(n+1)(N_{n}-1)+n]!} & \frac{z^{(n+1)(N_{n}-1)+n-1}}{[(n+1)(N_{n}-1)+n-1]!} & \cdots\\
			\vdots & \vdots & \ddots \\
		\frac{z^{(n+1)(N_{n}-1)+2}}{[(n+1)(N_{n}-1)+2]!} & \frac{z^{(n+1)(N_{n}-1)+1}}{[(n+1)(N_{n}-1)+1]!} & \cdots\\
		c \varphi_{1} \frac{z^{(n+1)(N_{n}-1)}}{[(n+1)(N_{n}-1)]!} +\cO(c^0) & c \varphi_{1} \frac{z^{(n+1)(N_{n}-1)}-1}{[(n+1)(N_{n}-1)-1]!} +\cO(c^0) & \cdots\\
		\frac{z^{(n+1)(N_{n}-2)+n}}{[(n+1)(N_{n}-2)+n]!} & \frac{z^{(n+1)(N_{n}-2)+n-1}}{[(n+1)(N_{n}-2)+n-1]!} & \cdots\\
			\vdots & \vdots & \ddots \\
		\frac{z^{(n+1)(N_{n}-2)+2}}{[(n+1)(N_{n}-2)+2]!} & \frac{z^{(n+1)(N_{n}-2)+1}}{[(n+1)(N_{n}-2)+1]!} & \cdots\\
		c \varphi_{1} \frac{z^{(n+1)(N_{n}-2)}}{[(n+1)(N_{n}-2)]!} +\cO(c^0) & c \varphi_{1} \frac{z^{(n+1)(N_{n}-2)}-1}{[(n+1)(N_{n}-2)-1]!} +\cO(c^0) & \cdots\\
			\vdots & \vdots & \ddots \\
		\frac{z^{n}}{n!} & \frac{z^{n-1}}{(n-1)!} & \cdots\\
			\vdots & \vdots & \ddots \\
		\frac{z^{2}}{2!} & z & \cdots\\
		c \varphi_{1} +z & 1 & \cdots
	\end{pmatrix}_{nN_{n}\times N}^{T}.\]

Then, we obtain the highest-order term of $ c $ for the determinant \eqref{deter4} below:
\begin{equation}\label{deter5}
	C_{2}c^{\Gamma_{1}} \mathcal{W}\left( \bar{H}_{0}, \bar{H}_{2}, \bar{H}_{3}, \ldots, \bar{H}_{n}\right), 
\end{equation}
where $ \Gamma_{1}=(n+1)N_{1}-N-n+1 $, $ C_{2} $ is nonzero constant, and
\begin{equation}\label{deter5b}
\begin{aligned}
	&\bar{H}_{l}=\left( z^{l}, z^{n+1+l}, \ldots, z^{(n+1)(\bar{N}_{l}-1)+l}\right), \quad l=0,2,3,\ldots, n,\\
	&\bar{N}_{0}=N_{n}+1, \quad \bar{N}_{2}=N_{1}-1, \quad \bar{N}_{j}=N_{j-1}, \quad 3\leq j \leq n.
\end{aligned}
\end{equation}
We expand the Wronskian determinant \eqref{deter5} along the first column as follows:
\begin{equation}\label{deter6}
	\begin{aligned}
		&C_{2}c^{\Gamma_{1}} \mathcal{W}\left( \hat{H}_{n}, \hat{H}_{1}, \hat{H}_{2}, \ldots, \hat{H}_{n-1}\right),\\ &\hat{H}_{l}=\left( z^{l}, z^{n+1+l}, \ldots, z^{(n+1)(\hat{N}_{l}-1)+l}\right), \quad l=1,2,\ldots, n,\\
	\end{aligned}
\end{equation}
where $ \hat{N}_{l} $ $ (1\leq l\leq n) $ are given in Eq. \eqref{gamma0} with $ N_{1}>N_{2}>\cdots>N_{n}>0 $. Therefore, we conclude that when $ N_{1}>N_{2}>\cdots>N_{n}>0 $, the power $ \Gamma_{0} $ of $ z $ in the highest-order term of $ c $ for $ W_{\mathcal{N}}(z+z_{0}-\kappa_{1,1};c) $ is given by Eq. \eqref{degpol}, where $ N_{l} $ $ (1\leq l\leq n) $ are replaced by $ \hat{N}_{l} $ defined in Eq. \eqref{gamma0}. 

In general, when $ N_1 \geq N_2 \geq \cdots \geq N_n \geq 0 $, the determinant \eqref{deter2} does not exist the parts of $ H_{k} $ $ (i\leq k<j) $ for $ N_{i}=N_{j} $ $ (1\leq i<j\leq n) $ and the part of $ H_{n} $ for $ N_{n}=0 $. Then, we can also obtain the power $ \Gamma_{0} $ of $ z $ in the highest-order term of $ c $ for $ W_{\mathcal{N}}(z+z_{0}-\kappa_{1,1};c) $ by performing calculations similar to steps \eqref{spolyj1}-\eqref{deter6} above. Here, $ \Gamma_{0} $ is also defined by Eq. \eqref{degpol}, where $ N_{l} $ $ (1\leq l\leq n) $ are replaced by $ \hat{N}_{l} $ given in Eq. \eqref{gamma0}. Therefore, we derive that the multiplicity of the zero root in $ W_{\mathcal{N}}(z+z_{0}-\kappa_{1,1}) $ (alias the multiplicity of the root $ z=z_{0}-\kappa_{1,1} $ of $ W_{\mathcal{N}}(z) $) is $ \Gamma_{0} $. This completes the proof of Theorem \ref{theo_mrs}.

\subsection{Proof of Theorem \ref{theo_nmrp}}\label{subsec_proofnmrp}

This subsection provides the detailed proof of Theorem \ref{theo_nmrp}.

For the formula \eqref{horw} of the $\mathcal{N}$-order vector RW solution $\mathbf{q}^{\mathcal{N}}(x,t)$, we first rewrite the determinant of $ \mathbf{M}^{[s]} $ as
\begin{equation}\label{msij1}
    \begin{aligned}
    &\det\left(\mathbf{M}^{[s]}\right)=\begin{vmatrix}
        \mathbf{0}_{N\times N} & -\mathbf{M}^{[s,-]} \\
        \mathbf{M}^{[s,+]} & \mathbb{I}_{(n+1)N_{max}}
    \end{vmatrix}, \quad 0 \leq s \leq n,
    \end{aligned}
\end{equation}
where $\mathbb{I}_{(n+1)N_{max}}$ is an $ (n+1)N_{max}\times (n+1)N_{max} $ identity matrix with $N_{max}= \max(N_{1},N_{2}, \ldots, N_{n})$, and the matrices $\mathbf{M}^{[s,\pm]}$ are
\begin{equation}\label{msij1b}
\begin{aligned}
    &\mathbf{M}^{[s,+]}= \left( M^{[s,+]}_{1}, M^{[s,+]}_{2}, \ldots, M^{[s,+]}_{n} \right), \quad \mathbf{M}^{[s,-]}= \left( {M^{[s,-]}_{1}}^{T}, {M^{[s,-]}_{2}}^{T}, \ldots, {M^{[s,-]}_{n}}^{T} \right)^{T},\\
    & M^{[s,+]}_{l}=\left( C_{1}^{(i-1)/2} S_{(n+1)(j-1)+l-i+1}(\mathbf{u}^{[s,l,+]}(i-1)) \right)_{1\leq i\leq (n+1)N_{max}, 1\leq j\leq N_{l}}, \\  
    &M^{[s,-]}_{l}=\left( C_{1}^{(j-1)/2} S_{(n+1)(i-1)+l-j+1}(\mathbf{u}^{[s,l,-]}(j-1)) \right)_{1\leq i\leq N_{l}, 1\leq j\leq (n+1)N_{max}}, \quad 1\leq l \leq n,
\end{aligned}
\end{equation}
$N_{l}$ and $C_{1}$ are given in Theorem \ref{Theo1}, the vectors $\mathbf{u}^{[s,l,\pm]}(v)=( u^{[s,l,\pm]}_{1}(v), u^{[s,l,\pm]}_{2}(v), \ldots)$ are defined by
\begin{equation}\label{vuspm}
    \begin{aligned}
    &\mathbf{u}^{[0,l,+]}(v)=\mathbf{x}^{[l,+]} +v\mathbf{h}_{1}, \quad 
    \mathbf{u}^{[0,l,-]}(v)=\mathbf{x}^{[l,-]} +v\mathbf{h}_{1}^*,\\
    &\mathbf{u}^{[k,l,+]}(v)=\mathbf{x}^{[l,+]} +v\mathbf{h}_{1} -{\mathbf{h}_{2}^{[k]}}, \quad
    \mathbf{u}^{[k,l,-]}(v)=\mathbf{x}^{[l,-]} +v\mathbf{h}_{1}^* +({\mathbf{h}_{2}^{[k]}})^* , \quad 1\leq k \leq n,
    \end{aligned}
\end{equation}
and $\mathbf{x}^{[l,\pm]}, \mathbf{h}_{1}$, and ${\mathbf{h}_{2}^{[k]}}$ are given in Eqs. \eqref{xpm1} and \eqref{xpm2}.

Then, by utilizing the Cauchy--Binet formula, the determinant \eqref{msij1} can be further rewritten as
\begin{equation}\label{msij2}
\begin{aligned}
    &\det\left(\mathbf{M}^{[s]}\right)= \sum_{0\leq v_{1}<v_{2}<\cdots<v_{N}\leq (n+1)N_{max}-1} \left[ \det_{1\leq i,j \leq N}\left( \hat{\mathbf{M}}^{[s,+]}(\mathbf{v}) \right)\det_{1\leq i,j \leq N}\left( \hat{\mathbf{M}}^{[s,-]}(\mathbf{v}) \right) \right],\\
    &\hat{\mathbf{M}}^{[s,+]}(\mathbf{v})=\left( \hat{M}^{[s,+]}_{1}(\mathbf{v}), \hat{M}^{[s,+]}_{2}(\mathbf{v}), \ldots, \hat{M}^{[s,+]}_{n}(\mathbf{v}) \right), \\ 
    &\hat{\mathbf{M}}^{[s,-]}(\mathbf{v})=\left( (\hat{M}^{[s,-]}_{1}(\mathbf{v}))^{T}, ({\hat{M}^{[s,-]}_{2}}(\mathbf{v}))^{T}, \ldots, ({\hat{M}^{[s,-]}_{n}}(\mathbf{v}))^{T} \right)^{T},\\
    &\hat{M}^{[s,+]}_{l}(\mathbf{v})=\left( C_{1}^{v_{i}/2} S_{(n+1)(j-1)+l-v_{i}}(\mathbf{u}^{[s,l,+]}(v_{i})) \right)_{1\leq i\leq N, 1\leq j\leq N_{l}}, \\
    &\hat{M}^{[s,-]}_{l}(\mathbf{v})=\left( C_{1}^{v_{j}/2} S_{(n+1)(i-1)+l-v_{j}}(\mathbf{u}^{[s,l,-]}(v_{j})) \right)_{1\leq i\leq N_{l}, 1\leq j\leq N}, \quad 0 \leq s\leq n
\end{aligned}
\end{equation}
with the integer index vector $\mathbf{v}=(v_{1}, v_{2}, \ldots, v_{N})$. 

When the internal large parameters $\{d_{l,j} \}_{1\leq l\leq n, j\in \Lambda_{l}}$ in $\mathbf{q}^{\mathcal{N}}(x,t)$ are defined by Eq. \eqref{dj1} with $A\gg1$, we can derive
\begin{equation}\label{smasy1}
\begin{aligned}
    &S_{j}\left( \mathbf{u}^{[s,l,+]}(v) \right)
    = S_{j}\left( u^{[s,l,+]}_{1}(v), u^{[s,l,+]}_{2}(v), \ldots  \right)  
    =   S_{j}\left(x_{1}^{+}+\kappa_{l,1}A, \kappa_{l,2}A^{2}, \kappa_{l,3}A^{3}, \ldots \right) \left[1+\cO(A^{-1})\right], \\
    &S_{j}\left(x_{1}^{+}+\kappa_{l,1}A, \kappa_{l,2}A^{2}, \kappa_{l,3}A^{3}, \ldots \right)= A^{j} \varphi_{l,j}(z), \quad z=x_{1}^{+}A^{-1}=\ii \chi^{[1]} (x+ \chi_{0} t) A^{-1}, 
\end{aligned}
\end{equation}
based on definition \ref{schur} of the Schur polynomials, where $ \chi_{0}$, $\chi^{[1]} $ and $\varphi_{j}(z)$ are defined by Eqs. \eqref{chi0}, \eqref{lamchi}, and \eqref{schpj}, respectively.

Moreover, since the order of $A$ in the Schur polynomials $S_{j}(\mathbf{u}^{[s,l, \pm]}(v))$ decreases as the subscript $j$ decreases, we consider the first two highest-order terms of $A$ in Eq. \eqref{msij2} by two different indexes to analyze the asymptotics of $\mathbf{q}^{\mathcal{N}}(x,t)$ with $A\gg1$, where one is $\mathbf{v}_{1}=(0,1, \ldots, N-2, N-1)$ and another is $\mathbf{v}_{2}=(0,1, \ldots, N-2, N)$.

If choosing the first index $\mathbf{v}_{1}=(0,1, \ldots, N-2, N-1)$ in Eq. \eqref{msij2}, we obtain the highest-order term of $A$, as follows:
\begin{equation}\label{hoterm1}
    \det_{1\leq i,j \leq N}\left( \hat{\mathbf{M}}^{[s,+]}(\mathbf{v}_{1}) \right) \det_{1\leq i,j \leq N}\left( \hat{\mathbf{M}}^{[s,-]}(\mathbf{v}_{1}) \right).
\end{equation}
As $A\gg1$, we employ the above formula \eqref{smasy1} to generate the asymptotic expression
\begin{equation}\label{hmspasy1}
     \det_{1\leq i,j \leq N}\left( \hat{\mathbf{M}}^{[s,+]}(\mathbf{v}_{1}) \right) = C_{1}^{N(N-1)/2}c_{\mathcal{N}}^{-1}A^{\Gamma} W_{\mathcal{N}}(z) \left[ 1+\cO(A^{-1})\right],
\end{equation}
where $\Gamma$ is given in Eq. \eqref{degpol}, the GMAM polynomial $W_{\mathcal{N}}(z)$ and the constant $c_{\mathcal{N}}$ are defined by \eqref{gswhp}. Similarly, we also obtain
\begin{equation}\label{hmspasy1b}
     \det_{1\leq i,j \leq N}\left( \hat{\mathbf{M}}^{[s,-]}(\mathbf{v}_{1}) \right) = C_{1}^{N(N-1)/2}c_{\mathcal{N}}^{-1}A^{\Gamma} \left(W_{\mathcal{N}}(z)\right)^{*} \left[ 1+\cO(A^{-1})\right].
\end{equation}
Then, according to the highest-order term \eqref{hoterm1} of $A$ in Eq. \eqref{msij2}, we derive
\begin{equation}\label{asyhot1}
    \det\left(\mathbf{M}^{[s]}\right) = C_{1}^{N(N-1)}c_{\mathcal{N}}^{-2}  A^{2\Gamma} \left| W_{\mathcal{N}}(z) \right|^2 \left[ 1+\cO(A^{-1})\right].
\end{equation}

Substituting the above approximated expression \eqref{asyhot1} into the formula \eqref{horw} of the vector RW solution $ \mathbf{q}^{\mathcal{N}}(x,t) $, we find that $ q_{k}^{\mathcal{N}}(x,t) $ $(1\leq k \leq n)$ tend to the plane wave background $ q^{[0]}_{k}(x,t) $ as $ A\rightarrow \infty $, except for some locations at or near $ (\bar{x}_{0}, \bar{t}_{0}) $ in the specific region with $\sqrt{x^{2}+t^{2}} = \cO(A) $. Here, $ q^{[0]}_{k}(x,t) $ is given in Eq. \eqref{seed} and $ (\bar{x}_{0}, \bar{t}_{0}) $ satisfies
\begin{equation}\label{zs0}
    {z}_{0}=\ii \chi^{[1]} (\bar{x}_{0}+ \chi_{0} \bar{t}_{0}) A^{-1},
\end{equation}
with $\bar{z}_{0}$ being the root of $W_{\mathcal{N}}(z)$. Based on the assumption in Theorem \ref{theo_nmrp}, the root $\bar{z}_{0}$ of the GMAM polynomial $W_{\mathcal{N}}(z)$ is simple. 

Next, we consider the asymptotics of $ \mathbf{q}^{\mathcal{N}}(x,t) $ near the position $ (\bar{x}_{0}, \bar{t}_{0}) $ of the $(x,t)$-plane. Taking a coordinate transformation
\begin{equation}\label{hatxt1}
    \hat{x}=x-\hat{x}_{0}A, \quad \hat{t}=t-\hat{t}_{0}A,
\end{equation}
we have 
\begin{equation}\label{usi1}
    u_{i}^{[s,l,+]}(x,t)= u_{i}^{[s,l,+]}(\hat{x},\hat{t}) +(\alpha_{i}\hat{x}_{0} +\beta_{i}\ii \hat{t}_{0})A, \quad i\geq 1
\end{equation}
with $(\hat{x}_{0}, \hat{t}_{0})= (\bar{x}_{0}, \bar{t}_{0})A^{-1}$, and $\alpha_{i}$ and $\beta_{i}$ given in Eq. \eqref{xpm2}. Then, we derive
\begin{equation}\label{smasy3}
\begin{aligned}
    &\sum_{j=0}^{\infty}S_{j}(\mathbf{u}^{[s,l,+]}(v))(A^{-1}\varepsilon)^{j}\\
    =& \exp\left( {u_{1}^{[s,l,+]}({x},{t})} A^{-1}\varepsilon +{u_{2}^{[s,l,+]}({x},{t})} (A^{-1}\varepsilon)^{2}+ \cdots \right)\\
    =& \exp\left( \left( (\alpha_{1}\hat{x}_{0} +\beta_{1}\ii \hat{t}_{0})+ u_{1}^{[s,l,+]}(\hat{x},\hat{t})A^{-1} \right)\varepsilon +\left( (\alpha_{2}\hat{x}_{0} +\beta_{2}\ii \hat{t}_{0}) + u_{2}^{[s,l,+]}(\hat{x},\hat{t})A^{-1} \right)A^{-1}\varepsilon^2 +\cdots \right)\\
    =&\exp\left( (\alpha_{1}\hat{x}_{0} +\beta_{1}\ii \hat{t}_{0})\varepsilon +\sum_{\substack{i=1 \\ (n+1) \nmid i}}^{\infty} \kappa_{l,i}\varepsilon^i \right) 
    \exp\left( u_{1}^{[s,+]}(\hat{x},\hat{t})A^{-1}\varepsilon + (\alpha_{2}\hat{x}_{0} +\beta_{2}\ii \hat{t}_{0}) A^{-1}\varepsilon^2 + \cO(A^{-2})\right) \\
    =&\left(\sum_{j=0}^{\infty} \varphi_{l,j}(\bar{z}_{0})\varepsilon^j  \right) \left( 1+ (u_{1}^{[s,+]}(\hat{x},\hat{t})\varepsilon + (\alpha_{2}\hat{x}_{0} +\beta_{2}\ii \hat{t}_{0})\varepsilon^2 ) A^{-1} + \cO(A^{-2}) \right), 
\end{aligned}
\end{equation}
where $ u_{1}^{[s,+]}(\hat{x},\hat{t})=u_{1}^{[s,l,+]}(\hat{x},\hat{t})- \kappa_{l,1}A $, $ \alpha_{1}=\ii\chi^{[1]} $, and $ \beta_{1}=\chi^{[1]}\chi_{0} $. Thus, we obtain
\begin{equation}\label{smasy4}
    S_{j}(\mathbf{u}^{[s,l,+]}(v)) = A^j \left[ \varphi_{l,j}(\bar{z}_{0}) +A^{-1} \left(u_{1}^{[s,+]}(\hat{x},\hat{t}) \varphi_{l,j-1}(\bar{z}_{0}) + (\alpha_{2}\hat{x}_{0} +\beta_{2}\ii \hat{t}_{0})\varphi_{l,j-2}(\bar{z}_{0}) \right) \right] \left[ 1+\cO(A^{-2})\right],
\end{equation}
and similarly generate
\begin{equation}\label{smasy5}
    S_{j}(\mathbf{u}^{[s,l,-]}(v)) = A^j \left[ \varphi_{l,j}^*(\bar{z}_{0}) +A^{-1} \left(u_{1}^{[s,-]}(\hat{x},\hat{t}) \varphi_{l,j-1}^*(\bar{z}_{0}) + (\alpha_{2}^*\hat{x}_{0} -\beta_{2}^*\ii \hat{t}_{0})\varphi_{l,j-2}^*(\bar{z}_{0}) \right) \right] \left[ 1+\cO(A^{-2})\right]
\end{equation}
with $ u_{1}^{[s,-]}(\hat{x},\hat{t})=u_{1}^{[s,l,-]}(\hat{x},\hat{t})- \kappa_{l,1}^{*}A $.

Now, for the first index $ \mathbf{v}_{1}=(0,1, \ldots, N-2,N-1) $, we calculate
\begin{equation}\label{hmspasy2}
    \begin{aligned}
     \det_{1\leq i,j \leq N}\left( \hat{\mathbf{M}}^{[s,+]}(\mathbf{v}_{1}) \right) =& c_{\mathcal{N}}^{-1}C_{1}^{N(N-1)/2}A^{\Gamma-1} \left[ u_{1}^{[s,+]}(\hat{x},\hat{t}) W_{\mathcal{N}}'(\bar{z}_{0}) + (\alpha_{2}\hat{x}_{0} +\beta_{2}\ii \hat{t}_{0}) W_{\mathcal{N}}^{[2]}(\bar{z}_{0}) \right] \left[ 1+\cO(A^{-1})\right],\\
     \det_{1\leq i,j \leq N}\left( \hat{\mathbf{M}}^{[s,-]}(\mathbf{v}_{1}) \right)
    =& c_{\mathcal{N}}^{-1}C_{1}^{N(N-1)/2}A^{\Gamma-1} \left[ u_{1}^{[s,-]}(\hat{x},\hat{t}) (W_{\mathcal{N}}'(\bar{z}_{0}))^* + (\alpha_{2}^*\hat{x}_{0} -\beta_{2}^*\ii \hat{t}_{0}) \left(W_{\mathcal{N}}^{[2]}(\bar{z}_{0})\right)^* \right]  \\
    &\times \left[ 1+\cO(A^{-1})\right],
    \end{aligned}
\end{equation}
based on the above expressions \eqref{smasy4}-\eqref{smasy5} and $W_{\mathcal{N}}(\bar{z}_{0})=0 $.
Here, the polynomial $W_{\mathcal{N}}^{[2]}(\bar{z}_{0})$ represents a summation of two variations of $ W_{\mathcal{N}}(\bar{z}_{0})$, as follows: 
\begin{equation}\label{wnp2}
\begin{aligned}
    & W_{\mathcal{N}}^{[2]}(z_{0}) = c_{\mathcal{N}} \left[ 
    \det_{1\leq i,j \leq N}\left( \varphi_{l_{j},(n+1)p_{j}+l_{j}-\bar{p}_{i}}(\bar{z}_{0}) \right) + \det_{1\leq i,j \leq N}\left( \varphi_{l_{j},(n+1)p_{j}+l_{j}-\hat{p}_{i}}(\bar{z}_{0}) \right)
    \right],\\
    &\begin{array}{ll}
        l_{j}=\left\{\begin{array}{ll}
        1, & 1\leq j\leq N_{1}, \\
        2, & N_{1}< j\leq N_{1}+N_{2}, \\
        \ldots & \\
        n, & \sum_{k=1}^{n-1}N_{k} < j\leq N,
    \end{array} \right. & p_{j}=\left\{\begin{array}{ll}
        j-1, & 1\leq j\leq N_{1}, \\
        j-N_{1}-1, & N_{1}< j\leq N_{1}+N_{2}, \\
        \ldots & \\
        j-\sum_{k=1}^{n-1}N_{k} -1, & \sum_{k=1}^{n-1}N_{k} < j\leq N,
    \end{array}  \right. \\
       \bar{p}_{i}=\left\{\begin{array}{ll}
        N, & i=N-1, \\
        i-1, & i\ne N-1,
    \end{array}  \right.  & \hat{p}_{i}=\left\{\begin{array}{ll}
        N+1, & i=N, \\
        i-1, & i\ne N.
    \end{array}  \right.
    \end{array}     
\end{aligned}
\end{equation}

On the other hand, for the second index $\mathbf{v}_{2}=(0,1, \ldots, N-2, N)$, we similarly apply the asymptotic expressions \eqref{smasy4}-\eqref{smasy5} to calculate
\begin{equation}\label{hmspasy3}
\begin{aligned}
    & \det_{1\leq i,j \leq N}\left( \hat{\mathbf{M}}^{[s,+]}(\mathbf{v}_{2}) \right)
    = c_{\mathcal{N}}^{-1} C_{1}^{N(N-1)/2+1}A^{\Gamma-1} W_{\mathcal{N}}'(\bar{z}_{0}) \left[ 1+\cO(A^{-1})\right],\\
    & \det_{1\leq i,j \leq N}\left( \hat{\mathbf{M}}^{[s,-]}(\mathbf{v}_{2}) \right)
    = c_{\mathcal{N}}^{-1} C_{1}^{N(N-1)/2+1}A^{\Gamma-1} \left(W_{\mathcal{N}}'(\bar{z}_{0})\right)^* \left[ 1+\cO(A^{-1})\right].
\end{aligned}
\end{equation}

Therefore, when $A\gg 1$ and $\sqrt{\hat{x}^{2}+\hat{t}^{2}} = \cO(1)$ (i.e., $\sqrt{(x-\bar{x}_{0})^{2}+(t-\bar{t}_{0})^{2}} = \cO(1)$), combining the asymptotic expressions \eqref{hmspasy2} and \eqref{hmspasy3}, we can obtain the dominant term of $A$ in $\det(\mathbf{M}^{[s]})$ near the location $ (\bar{x}_{0}, \bar{t}_{0}) $, as follows:
\begin{equation}\label{homsasy1}
\begin{aligned}
    \det(\mathbf{M}^{[s]}) =  c_{\mathcal{N}}^{-2} C_{1}^{N(N-1)}A^{2(\Gamma-1)} \left|W_{\mathcal{N}}'(\bar{z}_{0}) \right|^2 \left[ \left( u_{1}^{[s,+]}(\hat{x},\hat{t}) + \Delta(\bar{z}_{0}) \right) \left( u_{1}^{[s,-]}(\hat{x},\hat{t}) +\Delta^*(\bar{z}_{0})  \right) +C_{1} \right] \left[ 1+\cO(A^{-1})\right],
\end{aligned}
\end{equation}
where $0\leq s\leq n$, and $\Delta(\bar{z}_{0})=\cO(1)$ is a complex constant defined by
\begin{equation}\label{Delz0}
    \Delta(\bar{z}_{0})=(\alpha_{2}\hat{x}_{0} +\beta_{2}\ii \hat{t}_{0}) \frac{W_{\mathcal{N}}^{[2]}(\bar{z}_{0})}{W_{\mathcal{N}}'(\bar{z}_{0})}.
\end{equation}

Then, as $(\hat{x}, \hat{t})=(x-\bar{x}_{0},t-\bar{t}_{0}) $, we absorb the constant term $\Delta(\bar{z}_{0})$ into $(\bar{x}_{0}, \bar{t}_{0})$, as shown in Eq. \eqref{xt01}, and then generate 
\begin{equation}\label{homsasy2}
\begin{aligned}
    \det(\mathbf{M}^{[s]}) =&c_{\mathcal{N}}^{-2} C_{1}^{N(N-1)}A^{2(\Gamma-1)} \left|W_{\mathcal{N}}'(\bar{z}_{0}) \right|^2 \left[ \left( u_{1}^{[s,+]}(x-\bar{x}_{0},t-\bar{t}_{0})\right) \left( u_{1}^{[s,-]}(x-\bar{x}_{0},t-\bar{t}_{0}) \right) +C_{1} \right]\\
    & \times \left[ 1+\cO(A^{-1})\right], \quad 0\leq s\leq n,
\end{aligned}
\end{equation}
where  
\begin{equation}\label{uspm1}
\begin{aligned}
    &u_{1}^{[0,+]}(x-\bar{x}_{0},t-\bar{t}_{0}) = \ii \chi^{[1]} (x-\bar{x}_{0}+ \chi_{0} (t-\bar{t}_{0})) +h_{1,1}, \quad u_{1}^{[0,-]}(x-\bar{x}_{0},t-\bar{t}_{0})=(u_{1}^{[0,+]}(x-\bar{x}_{0},t-\bar{t}_{0}))^*, \\
    & u_{1}^{[k,+]}(x-\bar{x}_{0},t-\bar{t}_{0}) = \ii \chi^{[1]} (x-\bar{x}_{0}+ \chi_{0} (t-\bar{t}_{0})) +h_{1,1}- {h_{2,1}^{[k]}}, \\ 
    &u_{1}^{[k,-]}(x-\bar{x}_{0},t-\bar{t}_{0})=-\ii (\chi^{[1]})^* (x-\bar{x}_{0}+ \chi_{0}^* (t-\bar{t}_{0})) +h_{1,1}^*+({h_{2,1}^{[k]}})^*, \\
    & h_{1,1}=\frac{\ii \chi^{[1]}}{2\Im(\chi_{0})}, \quad h_{2,1}^{[k]}=\frac{\chi^{[1]}}{\chi_{0}+b_{k}}, \quad 1\leq k\leq n,
\end{aligned}
\end{equation}
and the constants $\chi_{0}$ and $ \chi^{[1]}$ are defined by Eqs. \eqref{chi0} and \eqref{lamchi}, respectively. Obviously, the constant term $\Delta(\bar{z}_{0})$ in $ (\bar{x}_{0}, \bar{t}_{0})$ given in Eq. \eqref{xt01} can be negligible as $A\rightarrow \infty$.

Finally, we prove Eq. \eqref{asym_nmrp} by substituting the above expression \eqref{homsasy2} into the solution formula \eqref{horw} of $q_{k}^{\mathcal{N}}(x,t)$. This completes the proof of Theorem \ref{theo_nmrp}.

\subsection{Proof of Theorem \ref{theo_mrp}}\label{subsec_proofmrp}

This subsection provides the proof of Theorem \ref{theo_mrp}.

First, it is evident that the asymptotic behaviors described in the terms $ (1) $ and $ (3) $ of Theorem \ref{theo_mrp} are consistent with that of NMR-type RW patterns in Theorem \ref{theo_nmrp}, except for the term $ {\Delta}(\bar{z}_{0}) $ in the definition of the first-order RW position $ (\bar{x}_{0}, \bar{t}_{0}) $ given in Eq. \eqref{xt01}. Furthermore, as the large constant $ A\rightarrow \infty $, the effect of the term $ {\Delta}(\bar{z}_{0}) $ on $ (\bar{x}_{0}, \bar{t}_{0}) $ becomes negligible. Consequently, the proofs of the asymptotics for the terms $ (1) $ and $ (3) $ of Theorem \ref{theo_mrp} is omitted here, with {detailed} discussions provided in Sec. \ref{subsec_proofnmrp}.

Next, we present a detailed proof of the asymptotics of MR-type RW patterns in the multiple-root region, as described in the term $ (2) $ of Theorem \ref{theo_mrp}. Based on the definition of the parameters $ \kappa_{l,i} $ in Eq. \eqref{mmrp}, for convenience, we denote $ x_{j}^{[l,\pm]} $ and $ \varphi_{l,j}(z) $ as $ x_{j}^{\pm} $ and $ \varphi_{j}(z) $, respectively, throughout this subsection.

When $ z_{0}-\kappa_{1,1}= \ii \chi^{[1]}(x_{0}+\chi_{0} t_{0})A^{-1} $ is only one $ \Gamma_{0} $-multiple root of $W_{\mathcal{N}}(z)$ with free parameters \eqref{mmrp}, we perform a coordinate transformation:
\begin{equation}\label{hatxt2}
	\tilde{x}=x-\tilde{x}_{0}A, \quad \tilde{t}=t-\tilde{t}_{0}A,
\end{equation}
where $(\tilde{x}_{0}, \tilde{t}_{0})= ({x}_{0}, {t}_{0})A^{-1}$, and $ \chi^{[1]} $, $ \chi_{0} $, and $ \Gamma_{0} $ are defined by Theorem \ref{theo_mrp}. When the internal large parameters $ \{d_{l,j}\}_{1\leq l \leq n, j\in \Lambda_{l}} $ of the solution $ \mathbf{q}^{\mathcal{N}}(x,t) $ are defined by Eq. \eqref{dj2}, we obtain
\begin{equation}\label{xpm3}
\begin{aligned}
	&x_{1}^{+}= y_{1}^{+}+\ii \chi^{[1]} (\tilde{x}_{0}+ \chi_{0} \tilde{t}_{0})A +\kappa_{1,1}A, \quad x_{1}^{-}= y_{1}^{-} -\ii {\chi^{[1]}}^{*} (\tilde{x}_{0}+ \chi_{0}^{*} \tilde{t}_{0}) A +\kappa_{1,1}^{*}A, \\
	&x_{i}^{+}= \begin{cases}
		y_{i}^{+}+ \frac{z_{0}^{i}}{i}A^{i}, &	(n+1) \nmid i,\\
		y_{i}^{+}+ (\alpha_{i} \tilde{x}_{0}+\beta_{i}\ii \tilde{t}_{0})A, & (n+1) \mid i,
	\end{cases} \quad
 	x_{i}^{-}= \begin{cases}
 		y_{i}^{-}+ (\frac{z_{0}^{i}}{i})^{*}A^{i}, &	(n+1) \nmid i,\\
 		y_{i}^{-}+ (\alpha_{i} \tilde{x}_{0}+\beta_{i}\ii \tilde{t}_{0})^{*}A, & (n+1) \mid i, 
 	\end{cases}\\
	&y_{j}^{+}=\alpha_{j} \tilde{x}+\beta_{j}\ii \tilde{t}, \quad y_{j}^{-}= (y_{j}^{+})^{*}, \quad i>1, \quad j\geq 1,
\end{aligned}
\end{equation}
where $ x_{j}^{\pm} $, $ \alpha_{j} $, and $ \beta_{j} $ are given in Eqs. \eqref{xpm1} and \eqref{xpm2}. 
Then, we derive
\begin{equation}\label{smasy6}
	\begin{aligned}
		\sum_{j=0}^{\infty}S_{j}(\mathbf{u}^{[s,l,+]}(v))\varepsilon^{j}=& \exp\left( (\ii \chi^{[1]} (\tilde{x}_{0}+ \chi_{0} \tilde{t}_{0})+\kappa_{1,1}) A\varepsilon+ \sum_{\substack{i=2 \\ (n+1) \nmid i}}^{\infty} \frac{z_{0}^{i}}{i} A^{i}\varepsilon^{i}\right) \exp\left(\tilde{u}_{1}^{[s,+]}(v)\varepsilon +\tilde{u}_{2}^{[s,+]}(v)\varepsilon^2 +\cdots\right) \\
		=&\left(\sum_{i=0}^{\infty} \varphi_{i}({z}_{0}-\kappa_{1,1})A^{i}\varepsilon^{i}  \right) \sum_{j=0}^{\infty}S_{j}(\tilde{\mathbf{u}}^{[s,+]}(v))\varepsilon^{j}, \quad 0\leq s\leq n,
	\end{aligned}
\end{equation}
where $ \mathbf{u}^{[s,l,+]}(v) $ are given in Eq. \eqref{vuspm}, the vectors $ \tilde{\mathbf{u}}^{[s,+]}(v) =(\tilde{u}_{1}^{[s,+]}(v), \tilde{u}_{2}^{[s,+]}(v), \ldots)$ are defined by
\begin{equation}\label{smasy6b}
   \begin{aligned}
	&\tilde{\mathbf{u}}^{[0,+]}(v)=\mathbf{y}^{+} +v\mathbf{h}_{1}, \quad \tilde{\mathbf{u}}^{[k,+]}(v)=\mathbf{y}^{+} +v\mathbf{h}_{1} -\mathbf{h}_{2}^{[k]}, \quad 1\leq k \leq n,\\
	&\mathbf{y}^{+}=(\tilde{y}_{1}^{+}, \tilde{y}_{2}^{+}, \tilde{y}_{3}^{+}, \ldots), \quad 
	\tilde{y}_{j}^{+}=\begin{cases}
		{y}_{j}^{+}, & (n+1) \nmid j,\\
		y_{j}^{+}+ (\alpha_{j} \tilde{x}_{0}+\beta_{j}\ii \tilde{t}_{0})A, & (n+1) \mid j,
	\end{cases}
\end{aligned}	
\end{equation}
with $ \mathbf{h}_{1} $ and $ \mathbf{h}_{2}^{[k]} $ given in Eq. \eqref{xpm1}. It implies that
\begin{equation}\label{sjpm1}
	S_{j}(\mathbf{u}^{[s,l,+]}(v))= \sum_{k=0}^{j} S_{j-k}(\tilde{\mathbf{u}}^{[s,+]}(v))\varphi_{k}(z_{0}-\kappa_{1,1})A^{k}.
\end{equation}

When the integer index vector $ \mathcal{N}= [N_{1}, N_{2}, \cdots, N_{n}] $ satisfies $ N_{1}\geq N_{2}\geq \ldots \geq N_{n} \geq 0  $ and $ \sum_{i=1}^{n} N_{i}=N$, we use the formula \eqref{sjpm1} to expand all elements of the matrices $ \mathbf{M}^{[s,+]} $ $ (1\leq s\leq n) $ in Eq. \eqref{msij1b} for the RW solution $ \mathbf{q}^{\mathcal{N}}(x,t) $. Then, we have
\begin{equation}\label{mspm1}
	\begin{aligned}
		&\mathbf{M}^{[s,+]}= \left( M^{[s,+]}_{1}, M^{[s,+]}_{2}, \ldots, M^{[s,+]}_{n} \right), \\
		& M^{[s,+]}_{l}=\left( C_{1}^{(i-1)/2} \sum_{k=0}^{(n+1)(j-1)+l-i+1} S_{(n+1)(j-1)+l-i+1-k}(\tilde{\mathbf{u}}^{[s,+]}(i-1)) \varphi_{k}(z_{0}-\kappa_{1,1})A^{k} \right)_{ 1\leq i\leq (n+1)N_{1}, 1\leq j\leq N_{l}}
	\end{aligned}
\end{equation}
with $ 1\leq l \leq n $.

Next, we perform a series of column transformations on the matrices $ \mathbf{M}^{[s,+]} $ $ (0\leq s\leq n) $ to eliminate the highest power term of $ A $ in each column. These column transformations are similar to those in Eqs. \eqref{deter2}-\eqref{deter4}, except that the order of the columns is not changed. 

Therefore, when $ N_{1}>N_{2}>\cdots>N_{n}\geq 0 $, we obtain the following simplified matrices by combining with Proposition \ref{prop1} in Appendix \ref{Appe_prop1}, as follows:
\begin{equation}\label{mspm2}
\begin{aligned}
	&\tilde{\mathbf{M}}^{[s,+]}= \left( \tilde{M}^{[s,+]}_{1}, \tilde{M}^{[s,+]}_{2}, \ldots, \tilde{M}^{[s,+]}_{n} \right), \\
	& \tilde{M}^{[s,+]}_{1}=\left( A\varphi_{1}(z_{0}-\kappa_{1,1}) {\mathbf{H}}_{1}^{[s,+]}, A^{2}\varphi_{2}(z_{0}-\kappa_{1,1}){\mathbf{H}}_{2}^{[s,+]}, \ldots, A^{n}\varphi_{n}(z_{0}-\kappa_{1,1}){\mathbf{H}}_{n}^{[s,+]} \right)_{(n+1)N_{1}\times N_{1}},\\ 
	& \tilde{M}^{[s,+]}_{l}=\left( C_{1}^{(i-1)/2} S_{(n+1)(j-1)+l-i+1}(\tilde{\mathbf{u}}^{[s,+]}(i-1)) \right)_{1\leq i\leq (n+1)N_{1}, 1\leq j\leq N_{l}}, \quad 2\leq l \leq n	
\end{aligned}	
\end{equation}
with 
\begin{equation}\label{mspm2b}
\begin{aligned}
		& {\mathbf{H}}_{1}^{[s,+]} = \left( C_{1}^{(i-1)/2}  S_{(n+1)(j-1)-i+1}(\tilde{\mathbf{u}}^{[s,+]}(i-1)) + \cO(A^{-1}) \right)_{1\leq i\leq (n+1)N_{1}, 1\leq j \leq N_{n}+1},\\
		& {\mathbf{H}}_{2}^{[s,+]} = \left( C_{1}^{(i-1)/2}  S_{(n+1)(j+N_{n})-i}(\tilde{\mathbf{u}}^{[s,+]}(i-1)) + \cO(A^{-1}) \right)_{1\leq i\leq (n+1)N_{1}, 1\leq j \leq N_{n-1}-N_{n}},\\
		& {\mathbf{H}}_{3}^{[s,+]} = \left( C_{1}^{(i-1)/2} S_{(n+1)(j+N_{n-1})-i-1}(\tilde{\mathbf{u}}^{[s,+]}(i-1)) + \cO(A^{-1}) \right)_{1\leq i\leq (n+1)N_{1}, 1\leq j \leq N_{n-2}-N_{n-1}},\\
		&\vdots\\
		& {\mathbf{H}}_{n}^{[s,+]} = \left( C_{1}^{(i-1)/2}  S_{(n+1)(j+N_{2})-i-n+2}(\tilde{\mathbf{u}}^{[s,+]}(i-1)) + \cO(A^{n-1}) \right)_{1\leq i\leq (n+1)N_{1}, 1\leq j \leq N_{1}-N_{2}-1}.
\end{aligned}
\end{equation}

Similarly, we simplify the matrices $ \mathbf{M}^{[s,-]} $ $ (1\leq s\leq n) $ in Eq. \eqref{msij1b} to $ \tilde{\mathbf{M}}^{[s,-]} $, expressed as
\begin{equation}\label{mspm3}
	\begin{aligned}
		&\tilde{\mathbf{M}}^{[s,-]}= {\left( ({ \tilde{M}_{1}^{[s,-]} })^{T}, (\tilde{M}^{[s,-]}_{2})^{T}, \ldots, ({\tilde{M}^{[s,-]}_{n}})^{T} \right)}^{T}, \\
		& \tilde{M}^{[s,-]}_{1}=\left( A\varphi_{1}^{*}(z_{0}-\kappa_{1,1}){{\mathbf{H}}_{1}^{[s,-]}}^{T}, A^{2}\varphi_{2}^{*}(z_{0}-\kappa_{1,1}){ {\mathbf{H}}_{2}^{[s,-]}}^{T}, \ldots, A^{n}\varphi_{n}^{*}(z_{0}-\kappa_{1,1}){ {\mathbf{H}}_{n}^{[s,-]}}^{T} \right)_{(n+1)N_{1}\times N_{1}}^{T},\\ 
		& \tilde{M}^{[s,-]}_{l}=\left( C_{1}^{(j-1)/2} S_{(n+1)(i-1)+l-j+1}(\tilde{\mathbf{u}}^{[s,-]}(j-1)) \right)_{1\leq i\leq N_{l}, 1\leq j\leq (n+1)N_{1}}, \quad 2\leq l \leq n
	\end{aligned}	
\end{equation}
with
\begin{equation}\label{mspm3b}
	\begin{aligned}
		& {\mathbf{H}}_{1}^{[s,-]} = \left( {C_{1}}^{(j-1)/2}  S_{(n+1)(i-1)-j+1}(\tilde{\mathbf{u}}^{[s,-]}(j-1)) + \cO(A^{-1}) \right)_{1\leq i\leq N_{n}+1, 1\leq j \leq (n+1)N_{1}},\\
		& {\mathbf{H}}_{2}^{[s,-]} = \left( {C_{1}}^{(j-1)/2}  S_{(n+1)(i+N_{n})-j}(\tilde{\mathbf{u}}^{[s,-]}(j-1)) + \cO(A^{-1}) \right)_{1\leq i\leq N_{n-1}-N_{n}, 1\leq j \leq (n+1)N_{1}},\\
		& {\mathbf{H}}_{3}^{[s,-]} = \left( {C_{1}}^{(j-1)/2} S_{(n+1)(i+N_{n-1})-j-1}(\tilde{\mathbf{u}}^{[s,-]}(j-1)) + \cO(A^{-1}) \right)_{1\leq i\leq N_{n-2}-N_{n-1}, 1\leq j \leq (n+1)N_{1}},\\
		&\vdots\\
		& {\mathbf{H}}_{n}^{[s,-]} = \left( {C_{1}}^{(j-1)/2}  S_{(n+1)(i+N_{2})-j+2-n}(\tilde{\mathbf{u}}^{[s,-]}(j-1)) + \cO(A^{-1}) \right)_{1\leq i\leq N_{1}-N_{2}-1, 1\leq j \leq (n+1)N_{1}},
	\end{aligned}
\end{equation}
and 
\begin{equation}\label{mspm3c}
\begin{aligned}
	&\tilde{\mathbf{u}}^{[0,-]}(v)=\mathbf{y}^{-} +v\mathbf{h}_{1}^*, \quad \tilde{\mathbf{u}}^{[k,-]}(v)=\mathbf{y}^{-} +v\mathbf{h}_{1}^* +({\mathbf{h}_{2}^{[k]}})^* , \quad 1\leq k \leq n,\\
	&\mathbf{y}^{-}=(\tilde{y}_{1}^{-}, \tilde{y}_{2}^{-}, \tilde{y}_{3}^{-}, \ldots), \quad
	\tilde{y}_{j}^{-}=\begin{cases}
		{y}_{j}^{-}, & (n+1) \nmid j,\\
		y_{j}^{-}+ (\alpha_{j}^{*} \tilde{x}_{0}-\beta_{j}^{*}\ii \tilde{t}_{0})A, &  (n+1)\mid j.
	\end{cases}
\end{aligned}
\end{equation}
Thus, we have
\begin{equation}\label{msija2}
	\begin{aligned}
		&\det\left(\mathbf{M}^{[s]}\right)= 
		\begin{vmatrix}
			\mathbf{0}_{N\times N} & -\tilde{\mathbf{M}}^{[s,-]} \\
			\tilde{\mathbf{M}}^{[s,+]} & \mathbb{I}_{(n+1)N_{1}}
		\end{vmatrix},
		 \quad 0 \leq s \leq n.
	\end{aligned}
\end{equation}

Furthermore, we rewrite the determinant \eqref{msija2} of $ \mathbf{M}^{[s]} $ as the following highest-order form of $ A $
\begin{equation}\label{msij3}
	\begin{aligned}
		&\det\left(\mathbf{M}^{[s]}\right)=C_{2} A^{2\Gamma_{1}} 
		\begin{vmatrix}
			\mathbf{0}_{N\times N} & -\bar{\mathbf{M}}^{[s,-]} \\
			\bar{\mathbf{M}}^{[s,+]} & \mathbb{I}_{(n+1)N_{1}}
		\end{vmatrix}
		\left[1+ \cO(A^{-1}) \right] ,  \quad 0 \leq s \leq n,
	\end{aligned}
\end{equation}
where $ C_{2} $ is a constant, $ \Gamma_{1}=(n+1)N_{1}-N-n+1 $, and 
\begin{equation}\label{msij3b}
\begin{aligned}
	&\bar{\mathbf{M}}^{[s,+]}= \left( \bar{M}^{[s, +]}_{1}, \bar{M}^{[s, +]}_{2}, \ldots, \bar{M}^{[s, +]}_{n} \right), \quad \bar{\mathbf{M}}^{[s,-]}= \left(( \bar{M}^{[s, -]}_{1})^{T}, (\bar{M}^{[s, -]}_{2})^{T}, \ldots, (\bar{M}^{[s, -]}_{n})^{T} \right)^{T},\\
	&\bar{M}^{[s,+]}_{1}=\left( C_{1}^{(i-1)/2} S_{(n+1)(j-1)-i+1}(\tilde{\mathbf{u}}^{[s,+]}(i-1)) \right)_{1\leq i\leq (n+1)N_{1}, 1\leq j\leq N_{n}+1},\\
	&\bar{M}^{[s,+]}_{2}=\left( C_{1}^{(i-1)/2} S_{(n+1)(j-1)-i+3}(\tilde{\mathbf{u}}^{[s,+]}(i-1)) \right)_{1\leq i\leq (n+1)N_{1}, 1\leq j\leq N_{1}-1},\\
	&\bar{M}^{[s,+]}_{l}=\left( C_{1}^{(i-1)/2} S_{(n+1)(j-1)-i+1+l}(\tilde{\mathbf{u}}^{[s,+]}(i-1)) \right)_{1\leq i\leq (n+1)N_{1}, 1\leq j\leq N_{l-1}},\\
	&\bar{M}^{[s,-]}_{1}=\left( C_{1}^{(j-1)/2} S_{(n+1)(i-1)-j+1}(\tilde{\mathbf{u}}^{[s,-]}(j-1)) \right)_{1\leq i\leq N_{n}+1, 1\leq j\leq (n+1)N_{1}},\\
	&\bar{M}^{[s,-]}_{2}=\left( C_{1}^{(j-1)/2} S_{(n+1)(i-1)-j+3}(\tilde{\mathbf{u}}^{[s,-]}(j-1)) \right)_{1\leq i\leq N_{1}-1, 1\leq j\leq (n+1)N_{1}},\\
	&\bar{M}^{[s,-]}_{l}=\left( C_{1}^{(j-1)/2} S_{(n+1)(i-1)-j+1+l}(\tilde{\mathbf{u}}^{[s,-]}(j-1)) \right)_{1\leq i\leq N_{l-1}, 1\leq j\leq (n+1)N_{1}}, \quad 3\leq l \leq n.
\end{aligned}
\end{equation}

It is observed that in the first column of matrices $ \bar{\mathbf{M}}^{[s,+]} $ \eqref{msij3}, only the element in the first column is nonzero, which is $ 1 $. Similarly, in the first row of matrices $ \bar{\mathbf{M}}^{[s,-]} $ \eqref{msij3}, only the element in the first column is nonzero, also with a value of $ 1 $. Thus, we can expand the determinant on the right-hand side of Eq. \eqref{msij3} sequentially along the first column and the first row, respectively. This leads to
\begin{equation}\label{msij4}
	\begin{aligned}
	&\det\left(\mathbf{M}^{[s]}\right)=C_{3} A^{2\Gamma_{1}} 
		\begin{vmatrix}
			\mathbf{0}_{(N-1)\times (N-1)} & -\check{\mathbf{M}}^{[s,-]} \\
			\check{\mathbf{M}}^{[s,+]} & \mathbb{I}_{(n+1)\hat{N}_{1}}
		\end{vmatrix}
		\left[1+ \cO(A^{-1}) \right] ,  \quad 0 \leq s \leq n,
	\end{aligned}
\end{equation}
where $ C_{3} $ is a constant, and
\begin{equation}\label{msij4b}
\begin{aligned}
	&\check{\mathbf{M}}^{[s,+]}= \left( \check{M}^{[s, +]}_{1}, \check{M}^{[s, +]}_{2}, \ldots, \check{M}^{[s, +]}_{n} \right), \quad \check{\mathbf{M}}^{[s,-]}= \left(( \check{M}^{[s, -]}_{1})^{T}, (\check{M}^{[s, -]}_{2})^{T}, \ldots, (\check{M}^{[s, -]}_{n})^{T} \right)^{T},\\
	&\check{M}^{[s,+]}_{l}=\left( C_{1}^{(i-1)/2} S_{(n+1)(j-1)-i+1+l}(\tilde{\mathbf{u}}^{[s,+]}(i)) \right)_{1\leq i\leq (n+1)\hat{N}_{1}, 1\leq j\leq \hat{N}_{l}},\\
	&\check{M}^{[s,-]}_{l}=\left( C_{1}^{(j-1)/2} S_{(n+1)(i-1)-j+1+l}(\tilde{\mathbf{u}}^{[s,-]}(j)) \right)_{1\leq i\leq \hat{N}_{l}, 1\leq j\leq (n+1)\hat{N}_{1}}, \quad 1\leq l \leq n
\end{aligned}
\end{equation}
with $ \hat{N}_{1}={N}_{1}-1 $ and $ \hat{N}_{s}=N_{s} $ $ (2\leq s\leq n) $.

In general, when $ N_{1}\geq N_{2}\geq \cdots\geq N_{n}\geq 0 $, we can similarly simplify the determinant of $ \mathbf{M}^{[s]} $ into the form of Eq. \eqref{msij4}, where $ [\hat{N}_{1}, \hat{N}_{1}, \ldots, \hat{N}_{n}] $ is defined by Eq. \eqref{gamma0}. 

Moreover, based on the definition \eqref{schur} of the Schur polynomial, we obtain
\begin{equation}\label{sjpm2}
\begin{aligned}
	S_{j}(\tilde{\mathbf{u}}^{[s,+]}(v)) = \sum_{k=0}^{\lfloor\frac{j}{n+1} \rfloor}S_{j-k(n+1)}(\bar{\mathbf{u}}^{[s,+]}(v)) \frac{\eta_{1}^{k}}{k!}, \quad 0\leq s\leq n,
\end{aligned}
\end{equation}
where the symbol $ \lfloor a \rfloor=\max\{b | b\leq a, b\in \mathbb{N}\} $, $ \eta_{1}=(\alpha_{n+1} \tilde{x}_{0}+\beta_{n+1}\ii \tilde{t}_{0})A $, the vectors $ \bar{\mathbf{u}}^{[s,+]}(v)=\tilde{\mathbf{u}}^{[s,+]}(v)-\eta_{1}\mathbf{e}_{n+1} $, and $ \mathbf{e}_{n+1} $ denotes the infinite-dimensional unit vector with $ 1 $ in the $ (n+1) $-th component.

Then, we employ the formula \eqref{sjpm2} to expand all elements of $ \check{{M}}^{[s,+]}_{l} $ $ (1\leq l\leq n, 0\leq s\leq n) $, and further apply some column transformations to eliminate the highest power term of $ \eta_{1} $ in each column. Thus, we reduce $ \check{M}^{[s,+]}_{l} $ $ (1\leq l\leq n, 0\leq s\leq n) $ to
\begin{equation}\label{mspl1}
	\left( C_{1}^{(i-1)/2} S_{(n+1)(j-1)-i+1+l}(\bar{\mathbf{u}}^{[s,+]}(i-1) +\mathbf{h}_{1}) \right)_{1\leq i\leq (n+1)\hat{N}_{1}, 1\leq j\leq \hat{N}_{l}}.
\end{equation}
Similarly, we also simplify $ \check{M}^{[s,-]}_{l} $ $ (1\leq l\leq n, 0\leq s\leq n) $ to
\begin{equation}\label{mspl2}
	\left( C_{1}^{(j-1)/2} S_{(n+1)(i-1)-j+1+l}(\bar{\mathbf{u}}^{[s,-]}(j-1)+\mathbf{h}_{1}^{*}) \right)_{1\leq i\leq \hat{N}_{l}, 1\leq j\leq (n+1)\hat{N}_{1}}
\end{equation}
with the vectors $ \bar{\mathbf{u}}^{[s,-]}(v)=\tilde{\mathbf{u}}^{[s,-]}(v)-\eta_{1}^{*}\mathbf{e}_{n+1} $.

Therefore, by using the same method, we can simplify all polynomials $ S_{k}(\tilde{\mathbf{u}}^{[s,\pm]}(v)) $ in the matrices $ \check{M}^{[s,\pm]}_{l} $ into $ S_{k}(\check{\mathbf{u}}^{[s,\pm]}(v)) $, where
\begin{equation}\label{cuspm1}
\begin{aligned}
	&\check{\mathbf{u}}^{[s,+]}(v)= \tilde{\mathbf{u}}^{[s,+]}(v)-\sum_{j=1}^{\infty} \eta_{j}\mathbf{e}_{j(n+1)}, \quad 	\check{\mathbf{u}}^{[s,-]}(v)= \tilde{\mathbf{u}}^{[s,-]}(v)-\sum_{j=1}^{\infty} \eta_{j}^{*}\mathbf{e}_{j(n+1)},\\
	&\eta_{j}= (\alpha_{j(n+1)} \tilde{x}_{0}+\beta_{j(n+1)}\ii \tilde{t}_{0})A.
\end{aligned}
\end{equation}

In other words, the vector $ \tilde{\mathbf{u}}^{[s,\pm]}(v) $ in the determinant \eqref{msij4} of $ \mathbf{M}^{[s]} $ can be directly replaced by the vector $ \check{\mathbf{u}}^{[s,+]}(v) +\mathbf{h}_{1}$. This result supports our earlier assertion that the large parameters $ d_{l,j(n+1)} $ $ (j\geq 1) $ have no effect on the RW patterns.

Now, by substituting the determinant \eqref{msij4} of $ \mathbf{M}^{[s]} $ with $ \tilde{\mathbf{u}}^{[s,\pm]}(v) $ replaced by $ \check{\mathbf{u}}^{[s,+]}(v) +\mathbf{h}_{1}$ into the solution formula \eqref{horw}, we can prove that the $ {\mathcal{N}} $-order RW solution $ \mathbf{q}^{\mathcal{N}}(x,t) $ in Theorem \ref{theo_mrp} can be asymptotically reduced to a $ \hat{\mathcal{N}} $-order RW $ \hat{q}_{k}^{\hat{\mathcal{N}}}(x-{x}_{0}, t-{t}_{0}) q^{[0]}_{k}(x,t) $ in the multiple-root region, and its approximation error is $ \cO(A^{-1}) $. Here, $ \hat{q}_{k}^{\hat{\mathcal{N}}}(x-{x}_{0},t-{t}_{0})={q}_{k}^{\hat{\mathcal{N}}}(x-{x}_{0},t-{t}_{0}) \left( q_{k}^{[0]}(x-{x}_{0},t-{t}_{0})\right)^{-1} $ with these internal parameters $ d_{l,j}=h_{1,j} $ $ (j\geq 1) $. This completes the proof of Theorem \ref{theo_mrp}.

\section{Conclusions and Discussions}\label{Sec_Con}
	
In this paper, we identify numerous novel RW patterns for the $ n $-NLSE that incorporate multiple internal large parameters. These patterns are linked to the root structures of specific polynomials. To facilitate a thorough asymptotic analysis of the newly identified RW patterns, we introduce a novel class of polynomials, namely the generalized mixed Adler--Moser (GMAM) polynomials. These polynomials contain multiple arbitrary free parameters, resulting in diverse root structures. The root structures of the polynomials are classified into two main categories: those with only simple roots (non-multiple root, or NMR structures) and those with multiple root (MR structures). Consequently, the RW patterns associated with these GMAM polynomial root structures are categorized into two regions: the simple-root region and the multiple-root region. 

We conduct a detailed asymptotic analysis of high-order RW patterns of the $ n $-NLSE that consist solely of simple-root region, referred to as NMR-type RW patterns. These patterns correspond to the NMR structure of the GMAM polynomials and asymptotically approach a first-order RW at the position corresponding to each simple root of the polynomials. Furthermore, owing to the highly intricate and diverse nature of the MR structures of the GMAM polynomials, this paper focuses on a special case of the MR structure, which contains a single nonzero multiple root along with several simple roots. The asymptotic behavior of the corresponding MR-type RW patterns of the $ n $-NLSE shows that these patterns exhibit several scattered first-order RWs in the simple-root region. In the multiple-root region, they asymptotically approach a lower-order RW. The positions of the first-order and lower-order RWs within the patterns correspond to the simple roots and multiple root of the GMAM polynomials, respectively. In particular, the position of the multiple-root region in the $ (x,t) $-plane for the MR-type RW pattern can be adjusted freely by selecting appropriate values for the multiple root of the GMAM polynomial.

To further illustrate our findings, we present several examples of RW patterns for the $ 3 $-NLSE and $ 4 $-NLSE. These include NMR-type RW patterns exhibiting diverse shapes such as $ 180 $-degree sector, jellyfish-like, and thumbtack-like shapes, as well as MR-type patterns characterized by right double-arrow and right arrow shapes. Their dynamical evolution plots show that these RW patterns are essentially magnified versions of the corresponding polynomial root structures with slight shifts.

It is worth noting that this paper does not consider the high-order RW solutions of the $ n $-NLSE in cases where the characteristic polynomial has different roots, nor does it address RW patterns of the $ n $-NLSE with a single internal large parameter. Additionally, a comprehensive theoretical investigation of all possible MR-type RW patterns remains an open problem. These are intriguing directions for future research and serve as the focus of our subsequent work. Finally, the methodology presented in this paper extends to the study of RW patterns in other multi-component integrable systems, thereby contributing to the broader understanding of such systems and enriching the ongoing research on RW patterns in multi-component integrable models.

\section*{Conflict of interests}
The authors have no conflicts to disclose.

\section*{DATA AVAILABILITY}
Data sharing is not applicable to this article as no new data were created or analyzed in this study.

\section*{Acknowledgments}
Liming Ling is supported by the National Natural Science Foundation of China (No. 12471236) and the Guangzhou Municipal Science and Technology Project (Guangzhou Science and Technology Plan) (No. 2024A04J6245).

\appendix
\renewcommand\thesection{A}
\renewcommand\theequation{\thesection\arabic{equation}} 

\section*{Appendix A}
\setcounter{equation}{0} 

\subsection{The proof of Theorem \ref{Theo1}}\label{Appe_proofRWs}

The high-order vector RW solutions of $n$-NLSE \eqref{nNLSE} were generated from the plane wave seed solution \eqref{seed} by the generalized DT method in Refs. \cite{zhangg2021, linh2024b}. To facilitate the analysis of the asymptotics of the RW patterns, we simplify the expression of the RW solution into the determinant form presented in Theorem \ref{Theo1}. 

From Refs. \cite{zhangg2021, linh2024b}, we can obtain $N$-order solution formula of $n$-NLSE \eqref{nNLSE}, as follows:
\begin{equation}\label{qnfor1}
	q^{[N]}_{k}(x,t)=q^{[0]}_{k}(x,t) + 2\mathbf{Y}_{N,k+1}\bar{\mathbf{M}}^{-1}\mathbf{Y}_{N,1}^{\dagger}, \quad 1\leq k\leq n,
\end{equation}
where
\begin{equation}\label{bmij}
	\begin{aligned}
		&\bar{\mathbf{M}}=\left( \frac{\phi_{i}^{\dagger}\phi_{j}}{\lambda_{i}^{*}-\lambda_{j}}\right)_{1\leq i,j \leq N}, \quad 
        \mathbf{Y}_{N,s}=(\phi_{1,s}, \phi_{2,s}, \ldots, \phi_{N,s}), \quad 1 \leq s \leq n+1, \\	
		&\phi_{j}=(\phi_{j,1}, \phi_{j,2}, \ldots, \phi_{j,n+1})^{T}=\Phi(\lambda_{j};x,t)(c_{j,1}, c_{j,2}, \cdots, c_{j,n+1})^{T},\\
		&\Phi(\lambda;x,t)=\mathrm{diag}\left( 1,\mathrm{e}^{\ii\theta_{1}},\mathrm{e}^{\ii\theta_{2}},\dots,\mathrm{e}^{\ii\theta_{n}}\right)\,\mathbf{E}\,\mathrm{diag}\left( \ee^{{\ii}\omega_{1}}, \ee^{{\ii}\omega_{2}}, \ldots, \ee^{{\ii}\omega_{n+1}} \right),\\
		&\mathbf{E}=
		\begin{pmatrix}
			1 & 1 & \cdots & 1\\
			\frac{ a_{1}}{\chi_{1}+b_{1}} & \frac{ a_{1}}{\chi_{2}+b_{1}} & \cdots & \frac{ a_{1}}{\chi_{n+1}+b_{1}}\\
			\frac{a_{2}}{\chi_{1}+b_{2}} & \frac{ a_{2}}{\chi_{2}+b_{2}} & \cdots & \frac{a_{2}}{\chi_{n+1}+b_{2}}\\
			\vdots &\vdots & \ddots & \vdots\\
			\frac{ a_{n}}{\chi_{1}+b_{n}} & \frac{ a_{n}}{\chi_{2}+b_{n}} & \cdots & \frac{ a_{n}}{\chi_{n+1}+b_{n}}
		\end{pmatrix}, \quad
		\omega_{s}=(\chi_{s}-\lambda)x + (\frac{\chi_{s}^{2}}{2} -\lambda^{2} -\|\mathbf{a}\|_{2}^{2})t +d_{s}(\lambda),
	\end{aligned}
\end{equation}
all $ c_{j,s}$ are arbitrary constants, $ \theta_{k}$ are given in Eq. \eqref{seed}, $\chi_{s}$ are the roots of the characteristic equation \eqref{cheq}, and $d_{s}(\lambda)$ are the arbitrary parameters independent of $x$ and $t$. 

Then, we can simplify the formula \eqref{qnfor1} to the following form
\begin{equation}\label{qkfor2}
	\begin{aligned}
		&q_{k}^{[N]}(x,t)=q^{[0]}_{k}(x,t) \frac{\det(\bar{\mathbf{M}}^{[k]})}{\det(\bar{\mathbf{M}}^{[0]})}, \quad 1\leq k\leq n, \\
	\end{aligned}
\end{equation}
where
\begin{equation}\label{bmsij}
	\begin{aligned}
		&\bar{\mathbf{M}}^{[s]} = \left( \bar{{M}}^{[s]}_{i,j} \right)_{1\leq i,j\leq N}, \quad 0\leq s\leq n, \\
		&\bar{{M}}_{i,j}^{[0]}= \sum_{p,r=1}^{n+1} \frac{c_{i,p}^{*}c_{j,r}}{\chi_{i,p}^{*}-\chi_{j,r}} \ee^{-\ii\omega_{i,p}^{*}+\ii\omega_{j,r}},\\
		&\bar{M}_{i,j}^{[k]}= \sum_{p,r=1}^{n+1} \frac{c_{i,p}^{*}c_{j,r}}{\chi_{i,p}^{*}-\chi_{j,r}} \frac{\chi_{i,p}^{*}+b_{k}}{\chi_{j,r}+b_{k}} \ee^{-\ii\omega_{i,p}^{*}+\ii\omega_{j,r}}.
	\end{aligned}
\end{equation}
	
Next, we introduce a perturbation of the spectral parameter $\lambda_{0}$, as shown in Eq. \eqref{lamchi}, and then consider all eigenvalues $ \lambda \rightarrow \lambda(\varepsilon)$. Furthermore, we perform a Taylor expansion around $ \varepsilon $ to derive a formula of RW solutions for $n$-NLSE \eqref{nNLSE}. Then, through a similar simplification process as in the proof of Theorem $ 3 $ in Ref. \cite{linh2024a}, the above $ \mathcal{N} $-order solution can be simplified into the vector RW solution in Theorem \ref{Theo1}.

\subsection{The first few GMAM polynomials}\label{Appe_tGMAMp}

\begin{equation}\label{ffwhp}
	\begin{aligned}
	&W_{[1,0]}(z)= z+\kappa_{1,1}, \quad W_{[0,1]}(z)=z^{2}+2\kappa_{2,1}z+\kappa_{2,1}^{2} +2\kappa_{2,2},    \quad
	W_{[1,1]}(z)=z^{2}+2 \kappa_{1,1} z +2 \kappa_{1,1} \kappa_{2,1}-\kappa_{2,1}^{2}-2 \kappa_{2,2}, \\ &W_{[2,0]}(z)={z}^{4}+4\,\kappa_{{1,1}}{z}^{3}+ \left( 6\,{\kappa_{{1,1}}}^{2}+
	4\,\kappa_{{1,2}} \right) {z}^{2}+\left( 4\,{\kappa_{{1,1}}}^{
		3}+8\kappa_{{1,1}}\kappa_{{1,2}} \right) z+{\kappa_{{1,1}}}^{4}+4\,
	\kappa_{{1,2}}{\kappa_{{1,1}}}^{2}-4\,{\kappa_{{1,2}}}^{2}-8\,\kappa_{{1,4}}, \\ 
	&W_{[0,2]}(z)= {z}^{6}+6\,\kappa_{{2,1}}{z}^{5}+ 5( 2\kappa_{{2,2}}+3
	\,{\kappa_{{2,1}}}^{2} ) {z}^{4}+ 20\, ( 2\,\kappa_{{2,1}}
	\kappa_{{2,2}}+\,{\kappa_{{2,1}}}^{3} ) {z}^{3} +5\, ( 12\,\kappa_{{2,2}}{\kappa_{{2,1}}}^{2}-8\,\kappa_{{2,4}}+4\,{
		\kappa_{{2,2}}}^{2}\\
	& +3\,{\kappa_{{2,1}}}^{4} ) {z}^{2} +40\,
	( -2\kappa_{{2,1}}\kappa_{{2,4}}+\,{\kappa_{{2,2}}}^{2}\kappa_
	{{2,1}}+\,\kappa_{{2,2}}{\kappa_{{2,1}}}^{3}-2\kappa_{{2,5}}+{\frac{3}{20}}\,{\kappa_{{2,1}}}^{5} ) z+40\,{\kappa_{{2,2}}}^{3}+{
		\kappa_{{2,1}}}^{6}-40\,\kappa_{{2,4}}{\kappa_{{2,1}}}^{2}\\
	&+80\,\kappa_
	{{2,4}}\kappa_{{2,2}}+20\,{\kappa_{{2,2}}}^{2}{\kappa_{{2,1}}}^{2}+10
	\,\kappa_{{2,2}}{\kappa_{{2,1}}}^{4}-80\,\kappa_{{2,1}}\kappa_{{2,5}}
	, \\ 
	&W_{[2,1]}(z)= z^{4}+4 \kappa_{1,1} z^{3}+8 (\frac{1}{2} \kappa_{1,1}^{2}+\frac{1}{2} \kappa_{1,1} \kappa_{2,1}-\frac{1}{4} \kappa_{2,1}^{2}-\frac{1}{2} \kappa_{2,2}) z^{2}+8 (\kappa_{1,1}^{2} \kappa_{2,1}-\frac{1}{2} \kappa_{1,1} \kappa_{2,1}^{2} -\kappa_{1,1} \kappa_{2,2}) z\\
	& -\kappa_{1,1}^{4} +4 \kappa_{1,1}^{3} \kappa_{2,1}-2 \kappa_{1,1}^{2} \kappa_{2,1}^{2}-4 \kappa_{1,2} \kappa_{1,1}^{2}-4 \kappa_{1,1}^{2} \kappa_{2,2}+8 \kappa_{1,1} \kappa_{1,2} \kappa_{2,1}-4 \kappa_{1,2} \kappa_{2,1}^{2}+4 \kappa_{1,2}^{2} -8 \kappa_{1,2} \kappa_{2,2}+8 \kappa_{1,4},\\
	&W_{[1,0,0]}(z)= z+\kappa_{1,1}, \quad  W_{[0,1,0]}(z)=z^{2}+2\kappa_{2,1}z+\kappa_{2,1}^{2} +2\kappa_{2,2},\\
	&W_{[0,0,1]}(z)=z^{3}+3 \kappa_{3,1} z^{2}+(3 \kappa_{3,1}^{2}+6 \kappa_{3,2}) z +\kappa_{3,1}^{3}+6 \kappa_{3,1} \kappa_{3,2}+6 \kappa_{3,3},\\
	&W_{[1,1,0]}(z)=z^{2}+2 \kappa_{1,1} z +2 \kappa_{1,1} \kappa_{2,1}-\kappa_{2,1}^{2}-2 \kappa_{2,2}, \\
	&W_{[1,0,1]}(z)=z^{3}+\frac{3}{2} (\kappa_{3,1} +\kappa_{1,1}) z^{2}+3 \kappa_{1,1} \kappa_{3,1} z +(\kappa_{1,1}-\kappa_{3,1}) (\frac{1}{2}\kappa_{3,1}^{2}+3 \kappa_{3,2})+\kappa_{1,1} \kappa_{3,1}^{2}-3 \kappa_{3,3}, \\ 
	&W_{[0,1,1]}(z)=z^{4}+4 \kappa_{2,1} z^{3}+12 (\frac{1}{4} \kappa_{2,1}^{2}+\frac{1}{2} \kappa_{2,1} \kappa_{3,1}-\frac{1}{4} \kappa_{3,1}^{2}+\frac{1}{2} \kappa_{2,2}-\frac{1}{2} \kappa_{3,2}) z^{2} + ( (6\kappa_{2,1}^{2}-2\kappa_{3,1}^{2}+12 \kappa_{2,2} \\ 
	&  -12 \kappa_{3,2}) \kappa_{3,1} -12\kappa_{3,3} ) z -2 \kappa_{2,1} \kappa_{3,1}^{3}+3 (\kappa_{2,1}^{2}+2 \kappa_{2,2}) \kappa_{3,1}^{2}-12 \kappa_{2,1} \kappa_{3,1} \kappa_{3,2} +6 \kappa_{2,1}^{2} \kappa_{3,2}-12 \kappa_{2,1} \kappa_{3,3}  +12 \kappa_{2,2} \kappa_{3,2},\\ 
	&W_{[2,0,0]}(z)=z^{5}+5 \kappa_{1,1} z^{4}+10 (\kappa_{1,1}^{2}+\kappa_{1,2}) z^{3}+5 (2 \kappa_{1,1}^{3} +6\kappa_{1,1} \kappa_{1,2}+3 \kappa_{1,3}) z^{2}+ 5 ( \kappa_{1,1}^{4} +6\kappa_{1,2} \kappa_{1,1}^{2} +6\kappa_{1,1} \kappa_{1,3}) z \\
	& +\kappa_{1,1}^{5}+10 \kappa_{1,2} \kappa_{1,1}^{3}+15 \kappa_{1,3} \kappa_{1,1}^{2}-30 \kappa_{1,3} \kappa_{1,2}-30 \kappa_{1,5},
	 \\
    &W_{[0,2,0]}(z)= z^{7}+7 \kappa_{2,1} z^{6}+3 (7 \kappa_{2,1}^{2}+6\kappa_{2,2}) z^{5}+5 (7\kappa_{2,1}^{3}+18\kappa_{2,1} \kappa_{2,2} +6 \kappa_{2,3}) z^{4} +5 (7 \kappa_{2,1}^{4} +36 \kappa_{2,2} \kappa_{2,1}^{2} \\
    &   +24 \kappa_{2,1} \kappa_{2,3} +12 \kappa_{2,2}^{2}) z^{3} +180 (\frac{7}{60} \kappa_{2,1}^{5}+ \kappa_{2,2} \kappa_{2,1}^{3}+ \kappa_{2,1}^{2} \kappa_{2,3}+ \kappa_{2,2}^{2} \kappa_{2,1}- \kappa_{2,5}) z^{2} + 360 (\frac{7}{360} \kappa_{2,1}^{6} +\frac{1}{4} \kappa_{2,2} \kappa_{2,1}^{4} \\ 
    &   +\frac{1}{3} \kappa_{2,3} \kappa_{2,1}^{3}+\frac{1}{2} \kappa_{2,2}^{2} \kappa_{2,1}^{2}-\kappa_{2,1} \kappa_{2,5}+\frac{1}{3} \kappa_{2,2}^{3}-\frac{1}{2} \kappa_{2,3}^{2}-\kappa_{2,6}) z +\kappa_{2,1}^{7}+18 \kappa_{2,1}^{5} \kappa_{2,2}+30 \kappa_{2,1}^{4} \kappa_{2,3}+60 \kappa_{2,1}^{3} \kappa_{2,2}^{2} \\
    & -180 \kappa_{2,1}^{2} \kappa_{2,5}+ 60(2 \kappa_{2,2}^{3}-3 \kappa_{2,3}^{2}-6 \kappa_{2,6}) \kappa_{2,1}+360 \kappa_{2,2} (\kappa_{2,2} \kappa_{2,3}+\kappa_{2,5}),
     \\
    &W_{[0,0,2]}(z)= z^{9}+9 \kappa_{3,1} z^{8} +6(6\kappa_{3,1}^{2} +5\kappa_{3,2} )z^{7} +21(4 \kappa_{3,1}^{3}+10 \kappa_{3,1} \kappa_{3,2}+3 \kappa_{3,3} )z^{6} +126( {\kappa_{{3,1}}}^{4}+5{\kappa_{{3,1}}}^{2}\kappa_{{3,2}}    \\
    &  +3\kappa_{ {3,1}}\kappa_{{3,3}} +2{\kappa_{{3,2}}}^{2} ) z^{5} +7( 18{\kappa_{{3,1}}}^{5}+150{\kappa_{{3,1}}}^{3}\kappa_{{3,2}}+135{\kappa_{{3,1}}}^{2}\kappa_{{3,3}} +180\kappa_{{3,1}}{\kappa_{{3,2}}}^{2}+90\kappa_{{3,3}}\kappa_{{3,2}} \\
    &  -90\kappa_{{3,5}} )z^{4}+ 42( 2{\kappa_{{3,1}}}^{6}+25{\kappa_{{3,1}}}^{4}\kappa_{{3,2}}+30{ \kappa_{{3,1}}}^{3}\kappa_{{3,3}}+60{\kappa_{{3,1}}}^{2}{\kappa_{{3,2}}}^{2}+ 60( \kappa_{{3,3}}\kappa_{{3,2}}-\kappa_{{3,5}} ) \kappa_{{3,1}}+20{\kappa_{{3,2}}}^{3} \\
    & -60\kappa_{{3,6}}
    ) z^{3} + ( 36{\kappa_{{3,1}}}^{7}+630{\kappa_{{3,1}}}^{5}\kappa_{{3,2}} +945
    {\kappa_{{3,1}}}^{4}\kappa_{{3,3}}+2520{\kappa_{{3,1}}}^{3}{\kappa_{{3,2}}}^{2}+ 3780( \kappa_{{3,3}}\kappa_{{3,2}}-\kappa_{{3,5}} ) {\kappa_{{3,1}}}^{2}  \\
    & + ( 2520{\kappa_{{3,2}}}^{3}-7560\kappa_{{3,6}} ) \kappa_{{3,1}} +3780\kappa_{{3,3}} {\kappa_{{3,2}}}^{2} -3780\kappa_{{3,7}}) z^{2} +2520 ( {\frac{{\kappa_{{3,1}}}^{8}}{280}}+\frac{1}{12}\kappa_{{3,2}}{\kappa_{{3,1}}}^{6} +{\frac{3\kappa_{{3,3}}{\kappa_{{3,1}}}^{5}}{20}} \\
    & +\frac{1}{2}{\kappa_{{3,2}}}^{2}{\kappa_{{3,1}}}^{4} +{\kappa_{{3,1}}}^{3}\kappa_{{3,3}}\kappa_{{3,2}} -{\kappa_{{3,1}}}^{3}\kappa_{{3,5}}+{\kappa_{{3,1}}}^{2}{\kappa_{{3,2}}}^{3} -3{\kappa_{{3,1}}}^{2}\kappa_{{3,6}} +3\kappa_{{3,1}}\kappa_{{3,3}}{\kappa_{{3,2}}}^{2} -3\kappa_{{3,1}}\kappa_{{3,7}} \\
    & +3{\kappa_{3,3}}^{2}\kappa_{3,2}  +3\kappa_{3,3}\kappa_{3,5} ) z +\kappa_{3,1}^{9} +30{\kappa_{3,1}}^{7}\kappa_{3,2} + 63(\kappa_{3,3}{\kappa_{3,1}}^{6} +4{\kappa_{3,1}}^{5}{\kappa_{3,2}}^{2} +10{\kappa_{{3,1}}}^{4}\kappa_{{3,3}}\kappa_{{3,2}}  -10{\kappa_{3,1}}^{4}\kappa_{3,5}  \\
	& +{\frac {40}{3}}{\kappa_{{3,1}}}^{3}{\kappa_{{3,2}}}^{3} -40{\kappa_{{3,1}}}^{3}\kappa_{{3,6}} +60{\kappa_{{3,1}}}^{2}\kappa_{{3,3}}{\kappa_{{3,2}}}^{2} -60{\kappa_{{3,1}}}^{2}\kappa_{{3,7}} +120\kappa_{{3,1}}{\kappa_{{3,3}}}^{2}\kappa_{{3,2}}  +120\kappa_{{3,1}}\kappa_{{3,3}}\kappa_{{3,5}} -40\kappa_{{3,3}}{\kappa_{{3,2}}}^{3} \\
	&  -120\kappa_{{3,5}}{\kappa_{{3,2}}}^{2} -120\kappa_{{3,2}}\kappa_{{3,7}} +60{\kappa_{{3,3}}}^{3}+120 \kappa_{{3,3}} \kappa_{{3,6}}), \\
    &W_{[1,1,1]}(z)=z^{3}+3 \kappa_{1,1} z^{2}+6 (\kappa_{1,1} \kappa_{2,1}-\frac{1}{2} \kappa_{2,1}^{2}-\kappa_{2,2}) z +\kappa_{3,1}^{3}-3 \kappa_{1,1} \kappa_{3,1}^{2}+6( \kappa_{1,1} \kappa_{2,1}-\frac{1}{2} \kappa_{2,1}^{2}- \kappa_{2,2}+ \kappa_{3,2}) \kappa_{3,1} \\
    & -6 \kappa_{1,1} \kappa_{3,2}+6 \kappa_{3,3}. 
	\end{aligned}
\end{equation}

\subsection{A proposition of the polynomials $ \varphi_{l,j}(z) $}\label{Appe_prop1}

\begin{prop}\label{prop1}
	For the special Schur polynomials $ \varphi_{l,j}(z) $ \eqref{schpj}, if the free parameters $\kappa_{l,i}$ are defined by Eq. \eqref{mmrp}, then the determinants
	\begin{equation}\label{deter1}
		\begin{aligned}
			&\begin{vmatrix}
				\varphi_{l_{1},(n+1)j_{1}+l_{1}}(z_{0}-\kappa_{1,1}) & \varphi_{l_{2},(n+1)j_{2}+l_{2}}(z_{0}-\kappa_{1,1}) \\
				\varphi_{l_{1},(n+1)j_{1}+l_{1}-k}(z_{0}-\kappa_{1,1}) & \varphi_{l_{2},(n+1)j_{2}+l_{2}-k}(z_{0}-\kappa_{1,1})
			\end{vmatrix}, 
		\end{aligned}
	\end{equation}
	all equal to zero with $ 1\leq l_{1},l_{2}\leq n$, $ 1\leq k \leq \min\{l_{1}, l_{2}\}$, and $ j_{1},j_{2} \in \mathbb{N}^{+}$.
\end{prop}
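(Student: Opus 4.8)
The plan is to obtain a closed form for the special Schur polynomials $\varphi_{l,j}$ evaluated at the point $z_{0}-\kappa_{1,1}$, after which the vanishing of each $2\times2$ determinant in \eqref{deter1} becomes a one-line computation. First I would record the simplification already noted in Sec.~\ref{subsec_proofmrs}: under the conditions \eqref{mmrp}, every parameter $\kappa_{l,i}$ is independent of $l$ (the value $\kappa_{l,1}$ is forced to a common constant $\kappa_{1,1}$, and $\kappa_{l,i}=z_{0}^{i}/i$ for $i>1$ with $(n+1)\nmid i$), so $\varphi_{l,j}=\varphi_{j}$ is independent of $l$ and I may drop the first subscript throughout. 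Substituting $z\mapsto z_{0}-\kappa_{1,1}$ into the generating function \eqref{schpj} cancels the $i=1$ contribution against $\kappa_{1,1}$ and leaves the formal power series identity
\begin{equation*}
\sum_{j=0}^{\infty}\varphi_{j}(z_{0}-\kappa_{1,1})\,\varepsilon^{j}
=\exp\!\left(\sum_{\substack{i=1\\(n+1)\nmid i}}^{\infty}\frac{z_{0}^{\,i}}{i}\,\varepsilon^{i}\right),
\end{equation*}
which coincides with the $z=0$, $c=1$ specialization of the construction \eqref{npoly}.

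The key step is to factor this generating function. Writing $w=z_{0}\varepsilon$ and using $\sum_{i\ge1}w^{i}/i=-\ln(1-w)$ together with the subseries over multiples of $n+1$, namely $\sum_{m\ge1}w^{(n+1)m}/((n+1)m)=-\tfrac{1}{n+1}\ln(1-w^{n+1})$, the exponent becomes $-\ln(1-w)+\tfrac{1}{n+1}\ln(1-w^{n+1})$, so
\begin{equation*}
\sum_{j=0}^{\infty}\varphi_{j}(z_{0}-\kappa_{1,1})\,\varepsilon^{j}
=\frac{\bigl(1-w^{n+1}\bigr)^{1/(n+1)}}{1-w},\qquad w=z_{0}\varepsilon .
\end{equation*}
The numerator is a power series in $w^{n+1}$ alone, while $(1-w)^{-1}=\sum_{q\ge0}w^{q}$ contributes every power of $w$ with coefficient $1$. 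Hence the coefficient of $w^{m}$ depends only on $\lfloor m/(n+1)\rfloor$, and extracting coefficients gives
\begin{equation*}
\varphi_{m}(z_{0}-\kappa_{1,1})=z_{0}^{\,m}\,g_{\lfloor m/(n+1)\rfloor},
\qquad
g_{p}:=\sum_{p'=0}^{p}\binom{1/(n+1)}{p'}(-1)^{p'} .
\end{equation*}
The precise form of $g_{p}$ is irrelevant; all that matters is that it is constant across each block of $n+1$ consecutive indices.

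With this in hand the determinants \eqref{deter1} collapse immediately. Set $a=(n+1)j_{1}+l_{1}$ and $b=(n+1)j_{2}+l_{2}$. Since $1\le l_{1}\le n$ and $1\le k\le\min\{l_{1},l_{2}\}\le l_{1}$, the shifted index $a-k=(n+1)j_{1}+(l_{1}-k)$ still satisfies $0\le l_{1}-k\le n-1$, so $\lfloor a/(n+1)\rfloor=\lfloor(a-k)/(n+1)\rfloor=j_{1}$, and likewise $\lfloor b/(n+1)\rfloor=\lfloor(b-k)/(n+1)\rfloor=j_{2}$. Therefore $\varphi_{a}=z_{0}^{a}g_{j_{1}}$, $\varphi_{a-k}=z_{0}^{a-k}g_{j_{1}}$, $\varphi_{b}=z_{0}^{b}g_{j_{2}}$, $\varphi_{b-k}=z_{0}^{b-k}g_{j_{2}}$, and the determinant equals $z_{0}^{a+b-k}\bigl(g_{j_{1}}g_{j_{2}}-g_{j_{2}}g_{j_{1}}\bigr)=0$. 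The only place the hypothesis $k\le\min\{l_{1},l_{2}\}$ enters is in keeping both shifted indices inside the same length-$(n+1)$ block as their unshifted counterparts, which is exactly what forces the two columns to share the common scalar factors $g_{j_{1}}$ and $g_{j_{2}}$. The main work is thus the generating-function factorization and the observation that the coefficients are block-constant; once that is in place no genuine obstacle remains.
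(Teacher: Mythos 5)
Your proposal is correct and follows essentially the same route as the paper's proof: both evaluate the generating function at $z_{0}-\kappa_{1,1}$, factor it as $(1-z_{0}\varepsilon)^{-1}$ times a power series in $\varepsilon^{n+1}$, deduce $\varphi_{m}(z_{0}-\kappa_{1,1})=z_{0}^{m}\,g_{\lfloor m/(n+1)\rfloor}$ with block-constant coefficients, and conclude that the two columns of each determinant are proportional. The only cosmetic difference is that you write the second factor in closed binomial form $\bigl(1-(z_{0}\varepsilon)^{n+1}\bigr)^{1/(n+1)}$, whereas the paper leaves its coefficients $f_{i}$ defined implicitly by the same exponential identity.
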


\begin{proof}
	Based on the definition \eqref{schpj} of the Schur polynomials $ \varphi_{l,j}(z) $, when the free parameters $\kappa_{l,i}$ are defined by Eq. \eqref{mmrp}, we have
	\begin{equation}\label{peq1}
		\begin{aligned}
			\sum_{j=0}^{\infty} \varphi_{l,j}(z_{0}-\kappa_{1,1}) \varepsilon^{j} =& \exp\left(\sum_{i=1}^{\infty} \frac{z_{0}^{i}}{i} \varepsilon^{i}\right) \exp\left( -\sum_{i=1}^{\infty}\frac{z_{0}^{i(n+1)}}{i(n+1)} \varepsilon^{i(n+1)} \right) \\
			=&\frac{1}{1-z_{0}\varepsilon} \left(\sum_{i=0}^{\infty} f_{i} z_{0}^{i(n+1)}\varepsilon^{i(n+1)} \right) \\
			=& \left( \sum_{j=0}^{\infty} z_{0}^{j}\varepsilon^{j}\right) \left(\sum_{i=0}^{\infty} f_{i} z_{0}^{i(n+1)}\varepsilon^{i(n+1)} \right) \\
			=& \sum_{j=0}^{\infty}\left( \sum_{i=0}^{\lfloor j/(n+1) \rfloor} f_{i}\right) z_{0}^{j}\varepsilon^{j},
		\end{aligned}
	\end{equation}
	where $ 1\leq l\leq n $, and $ f_{i} $ is defined by
	\begin{equation}\label{peq2}
		\sum_{i=0}^{\infty} f_{i} \varepsilon^{i(n+1)}  = \exp\left( -\sum_{j=1}^{\infty}\frac{1}{j(n+1)} \varepsilon^{j(n+1)} \right).
	\end{equation}
	This derives
	\begin{equation}\label{peq3}
		\varphi_{l,j}(z_{0}-\kappa_{1,1})= \left( \sum_{i=0}^{\lfloor j/(n+1) \rfloor} f_{i}\right) z_{0}^{j},
	\end{equation}
	which subsequently implies that the determinants \eqref{deter1} {are} equal to zero, thereby proving Proposition \ref{prop1}.
\end{proof}

	\bibliographystyle{elsarticle-num}
	\bibliography{Ref_RWofnNLSE}
	
	
	
	
	
	
\end{document}